\newtheorem{theorem}{Theorem}[section]
\newtheorem{proposition}[theorem]{Proposition}
\newtheorem{lemma}[theorem]{Lemma}
\theoremstyle{definition}
\newtheorem{definition}[theorem]{Definition}
\renewcommand{\theequation}{\arabic{section}.\arabic{equation}}
\newtheorem{Def}{Definition}[section]
\theoremstyle{definition}
\theoremstyle{definition}
\newtheorem{remark}{Remark}
\theoremstyle{definition}
\newcommand{\rd}{\mathrm{d}}
\renewcommand{\epsilon}{\varepsilon}
\newcommand{\E}{\mathbf{E}}
\begin{document}
\makeatletter
\def\@setauthors{%
\begingroup
\def\thanks{\protect\thanks@warning}%
\trivlist \centering\footnotesize \@topsep30\p@\relax
\advance\@topsep by -\baselineskip
\item\relax
\author@andify\authors
\def\\{\protect\linebreak}%
{\authors}%
\ifx\@empty\contribs \else ,\penalty-3 \space \@setcontribs
\@closetoccontribs \fi
\endtrivlist
\endgroup} \makeatother
 \baselineskip 19pt
 \title[{{\tiny Robust equilibrium game in a DB pension plan}}]
 {{\tiny
Robust equilibrium strategies in a defined benefit pension plan game}} \vskip 10pt\noindent
\author[{\tiny  Guohui Guan, Jiaqi Hu, Zongxia Liang}]
{\tiny {\tiny  Guohui Guan$^{a,*}$, Jiaqi Hu$^{b,\dag}$, Zongxia Liang$^{b,\ddag}$}
 \vskip 10pt\noindent
{\tiny ${}^a$School of Statistics, Renmin University of China, Beijing 100872, China
\vskip 10pt\noindent\tiny ${}^b$Department of Mathematical Sciences, Tsinghua
University, Beijing 100084, China
}\noindent
\footnote{{\tiny $^*$ {\bf e-mail}: guangh@ruc.edu.cn\\
$^\dag $ Corresponding author, \ \ {\bf e-mail}: hujq20@mails.tsinghua.edu.cn      \\
 $\ddag $ {\bf e-mail}: liangzongxia@mail.tsinghua.edu.cn}}}
\numberwithin{equation}{section}
\maketitle
\noindent
\begin{abstract}
This paper investigates the robust {non-zero-sum} games in an aggregated {overfunded} defined benefit (abbr. DB) pension  plan. The sponsoring firm is concerned with the investment performance of the fund surplus while the participants act as a union to claim a share of the fund surplus. The financial market consists of one risk-free asset and $n$ risky assets. The firm and the union both are ambiguous about the financial market and care about the robust strategies under the worst case scenario. {The union's objective is to maximize the expected discounted utility of the additional benefits, the firm's two different objectives are to maximizing the expected discounted utility of the fund surplus and  the probability of the fund surplus  reaching  an upper level before hitting a lower level in the worst case scenario.} We formulate the related two robust non-zero-sum games for the firm and the union. Explicit forms and optimality of the solutions are shown by stochastic dynamic programming method. In the end of this paper, numerical results are illustrated to depict the economic behaviours of the robust equilibrium strategies in these two different games.
\vskip 10 pt \noindent
{JEL classification: C61,G11, G22, D53.
\vskip 5 pt \noindent
2020 Mathematics Subject Classification: 91G05, 91B50, 91A11, 91A15, 91G10.
\vskip 5pt  \noindent
Submission Classification: IE11, IE13, IM50, IB81.}
\vskip 5pt \noindent
Keywords: {Overfunded} DB pension plan; Robust control; Stochastic differential game; Nash equilibrium; Stochastic dynamic programming.
\end{abstract}
\vskip10pt
\setcounter{equation}{0}

\section{{{\bf Introduction}}}
With the aging of population worldwide, retirement plan plays an important role in ensuring the quality of life of the elders. Although there is a recent shift from {the DB pension plan to the defined contribution (abbr. DC) pension plan,} DB pension plan is still very important in retirement welfare policy. The benefits of the participants in DB pension plan are fixed in advance and the contribution rates are calculated and adjusted to keep  actuarial balance. In a DB pension plan, benefits are distributed at retirement as  an annuity or {one lump-sum payment}, and constitute the main liability of the firm. Besides, the contributions from the participants constitute the assets of the firm. In order to hedge financial risk, the firm allocates the wealth of the fund in the financial market.
\vskip 5pt
When the asset of the firm is lower than its liability, the pension fund is underfunded. A study by  investment bank Credit Suisse First Boston (CSFB) finds that at the end of 2001, there are more companies in the S\&P 500 with underfunded DB pension plans than at any time during the previous 10 years. When the pension fund is underfunded, the firm is faced with insolvency risk and aims to minimize the gap between the liability and asset. Many previous studies in DB pension fund focus on the underfunded case, see e.g., \cite{Haberman1994Dynamic}, \cite{Haberman2000Contribution},  \cite{Josa2004Optimal}, etc. {Most of recent work} consider more financial risks for the underfunded DB pension plan and can help manage the fund better, see interest risk in  \cite{huang2006control}, \cite{Hainaut2011Optimal}, jump risk in  \cite{Josa2012Stochastic}, volatility risk in \cite{Josa2018Portfolio}, etc.
\vskip 5pt
Although most of the pension plans are underfunded, overfunding may also happen {because of} the outstanding performance of the investment, particularly in the bull market, {or  the premature death of a high paid employee}. Table II(c) in \cite{oecd} illustrates that a small proportion of companies are overfunded. \cite{Sp500} show that within the S\&P
500, 46 companies had overfunded pensions in 2018 (up from 43 in 2017). Overfunded DB pension plan has its own issues about how to manage the excess assets. Reversion when liquidated, increasing plan benefits, allowing additional accruals, adding plan participants may be possible tools to manage the excess assets. However, an excise tax 50\% is ultimately imposed on plan reversion. \cite{2014INVESTOR} {show} that firms terminating overfunded pension plans tend to have tax loss from termination.  \cite{2013Pension} indicate that the firms have a desire to modify the implicit contract instead of terminating the plan. As such, distributing benefits to plan participants is a better tool in an overfunded pension fund.  In \cite{Josa2019Equilibrium}, the authors assume that in the overfunded DB pension plan, the participants claim a share of the fund surplus as additional benefits. On the one hand, increasing plan benefits can avoid the tax from termination. On the other hand, the additional benefits can attract more potential participants.
\vskip 5pt
\cite{Josa2019Equilibrium} present games for the firm and the union. The firm cares about the wealth of the fund surplus and has two goals while the participants aim to acquire more additional benefits from the fund. The explicit solutions of the equilibrium strategies are derived in \cite{Josa2019Equilibrium}, {their results show that} the strategies rely heavily on the parameters of the financial market. The equilibrium game in \cite{guan2016stochastic} for the DC pension fund also shows the great impacts of parameters on the strategies. However, as the limited information of the financial market, the investors can not observe the financial market accurately and are in fact ambiguity averse when making decisions, see \cite{Hansen2001Robust}, \cite{Binmore2012How}, \cite{guan2018time}, \cite{guan2019robust}, etc. In the management of DC pension fund, \cite{2017RobustSL}, \cite{2018RobustPL} and \cite{2019RobustPL} implicate that when ignoring ambiguity, the manager is faced with large utility losses. As such, model uncertainty should be considered in risk management.
\vskip 5pt
In this paper, we study the robust equilibrium games between the firm and participants. The goals ignoring ambiguity are the same as in  \cite{Josa2019Equilibrium} and then modified to introduce ambiguity aversion for the players. The firm and participants seek the robust equilibrium strategies under worst case scenario. In fact, recently, there are some literature about the robust equilibrium games, see \cite{Pun2016Robust}, \cite{2018RobustWZ}, \cite{2020Reinsurance},  etc. These mentioned work concerning robust equilibrium game mainly consider two separate agents and the agents care about the relative performance over the other. Besides, both the agents have the same instruments to manage the wealth. However, in our work, the decisions made by the firm and participants affect the same wealth process. The firm and participants have different tools: the firm chooses the investment in the financial market, the participants decide the amount of additional benefits. Moreover, the previous mentioned work suppose that the agents will never go bankruptcy and  are concerned with the wealth of the agents at a fixed time. In this paper, we consider two different goals for the firm: maximizing the expected discounted utility of the fund surplus and  the probability of the fund surplus  reaching  an upper level before hitting a lower level.  On the  one hand, our goals are more abundant and  provide guidance for different kinds of agents. On the other hand, concerning the robust optimization problem at a fixed time, the procedure in \cite{Mataramvura2008Risk} can be well applied to show the optimality of the robust equilibrium strategies. The goals {in our paper} are different from the optimization rule in \cite{Mataramvura2008Risk}  and detailed verification theorem is required.
\vskip 5pt
 {Using} stochastic dynamic programming method, we present and derive the explicit forms of the robust equilibrium strategies of the firm and participants. Numerical results are also shown for the two players. We {see} that the firm and participants act very differently  in the bull and bear markets. {In addition}, ambiguity aversion influences both the firm and the participants' decisions. We have the main contributions in this paper: First, we present the robust equilibrium games of the firm and participants under two different objectives. The robust control of maximizing the probability of reaching  an upper level is relatively different from previous work. We {see} that in order to obtain the explicit solution, the penalty term for this kind of objective is also relatively different. Second, using stochastic dynamic programming method, we present the HJBI system of the robust games and derive the explicit forms of the robust investment strategy and benefit.  For the first game, we also present detailed analysis over the robust equilibrium strategies and show the Pareto optimality under some case. Third, most previous work study the robust control about the terminal wealth, which is also different from ours. Inspired by \cite{Mataramvura2008Risk}, we define the admissible sets strictly and propose another condition to replace the uniformly integrability condition to show the verification theorem. Last, numerical analyses are illustrated to {depict} the players' economic behaviours. Ambiguity aversion will always prompt the participants to ask for fewer benefits whether the economy is booming or sluggish, but this effect is minimal when the economy is in recession. For the first game, the ambiguity aversions of the two parties have different effects on the participants' strategy.
\vskip 5pt

The reminder of this paper is organized as follows. Section 2 presents the formulation of the overfunded DB pension plan under model uncertainty. In Section 3, we introduce two different robust equilibrium games for the firm and the union. Section 4 shows the admissible sets of the robust equilibrium problems and derives the explicit forms of equilibrium strategies, worst case measures and value functions. Optimality of the strategies is  {verified} strictly. The numerical results about the robust equilibrium strategies are illustrated in Section 5. Section 6 concludes this paper. Most of the proofs are in {Appendix.}

\section{\bf DB pension fund game}
In this section, we present the financial model of the overfunded DB pension fund. The sponsoring firm invests the fund surplus in the financial market and is concerned with the investment performance of the fund. To be more attractive, the overfunded pension fund provides excess benefits to the participants as a union. We consider a financial market similar with \cite{Josa2019Equilibrium} consisting of one risk-free asset and $n$ risky assets {for simplicity.}
\vskip 5pt
Consider a filtered complete probability space $\left(\Omega,\mathcal{F},\left\{\mathcal{F}_t\right\}_{t\ge 0},\mathbb{P}\right)$. $\mathcal{F}_t$ represents the information of the financial market {up to} time $t$. Different from \cite{Josa2019Equilibrium}, we suppose {that} the firm and the union are both uncertain about the financial market and $\mathbb{P}$ is the reference probability measure of the market. In order to apply the Girsanov's Theorem in infinite time horizon (see Corollary 5.2 and Proposition 5.12 of Chapter III in \cite{Ioannis1991Brownian}), the probability space is specified: $\Omega=C\left[0,\infty\right)^n$, {$W=\{W\left(t\right): t\ge 0\}$} is the coordinate mapping process on $\Omega$, $\mathbb{P}$ is the Wiener measure on $\left(\Omega,\mathcal{F}\right)$ and $\left\{\mathcal{F}_t\right\}_{t\ge 0}$ is the filtration generated by $W$. Obviously, $W=(W_1,W_2,\cdots,W_n)^T$ is an $n$-dimensional standard Brownian motion on the space $\left(\Omega,\mathcal{F},\left\{\mathcal{F}_t\right\}_{t\ge 0},\mathbb{P}\right)$.  We do not consider transaction costs in this paper. All the processes introduced {are} supposed to be well defined and adapted to $\left\{\mathcal{F}_t\right\}_{t\ge 0}$.
\subsection{\bf Financial model}
The firm sponsoring the fund invests in the financial market and expects that the fund surplus increases. For simplicity, we only consider  equity risk and suppose that there are one risk-free asset $S^0$ and $n$ risky assets $S^1,S^2\cdots,S^n$ in the financial market. The dynamics of these $n+1$ assets are
\begin{eqnarray}
	\rd S^0\left(t\right)\!&=&rS^0\left(t\right)\rd t,\!\quad S^0\left(0\right)=1,\label{equ:s0}\\
	\rd S^i\left(t\right)\!&=&S^i\!\left(t\right)\!\left(b_i\rd t+\sum_{i=1}^{n}\!\sigma_{ij}\rd W_j\left(t\right)\right),  S^i\left(0\right)=s_i,\! i=1,\cdots,n,\!\label{equ:s}
\end{eqnarray}
where  $r>0$  is the risk-free interest rate,  $b_i>r$ is the mean rate of return of the risky asset $S^i$,  {and} $\sigma_{ij}>0$ represents  the volatility coefficient. Denote {$\sigma=\left(\sigma_{ij}\right)_{n\times n}$} as the volatility matrix of the risky assets. Let $b=(b_1,b_2,\cdots,b_n)^T$ and $\vec{1}=(1,\cdots,1)^T\in\mathbf{R}^{n}$. The Sharpe ratio vector is $\theta=\sigma^{-1}(b-r\vec{1})$. {We assume that the financial market is complete, i.e., {the matrix $\Sigma\triangleq\sigma\sigma^T$} is positive definite.}
\vskip 5pt
In the financial market, we suppose that the pension fund is overfunded. Different from the management of an underfunded pension fund, the firm is endowed with a positive surplus at initial time and invests in the financial market. In order to be more attractive, the participants in  the pension fund also expect to enjoy the benefits of this positive surplus process. As in \cite{Josa2019Equilibrium}, the union of the participants claims part of the fund surplus continuously. {We assume that the number of the participants is stable, which means that we do not consider longevity risk in our work.}
\vskip 5pt
Let {$X=\{X(t): t\geq s \}$} be the surplus process of the DB pension fund with positive value {$X\left(s\right)=x$ at initial time $s\ge0$}. The firm invests an amount of {$\pi_i=\{\pi_i(t): t\geq s \}$} in $S^i$. The rest part $\pi_0=X-\sum_{i=1}^{n}\pi_i$ is allocated in the risk-free asset $S^0$. Denote $\pi=\left(\pi_1,\cdots,\pi_n\right)^T$ as the investment strategy of the firm. The union claims {benefit $P=\{P(t): t\geq s \}$} from the pension fund continuously. The firm is concerned with the surplus process $X$ while the union cares about the benefit $P$. Their decisions $\pi$ and $P$ both affect $X$. By adopting strategy $(\pi,P)$, the surplus process  $X$ is given by
\[\rd X(t)=\sum_{i=0}^n\pi_i(t)\frac{\rd S^i(t)}{S^i(t)}-P(t)\rd t,\
\
 X(s)=x. \]
As we consider a stable pension system, the surplus process totally depends on the investment performance of the firm, the benefits of the union and the initial endowment. There is a trade-off here for the benefits: {on the one hand, larger $P$ means more current benefit}, on the other hand, the fund surplus becomes small and the fund may become underfunded when $P$ is large. Substituting Eqs.~(\ref{equ:s0}) and (\ref{equ:s}) into the last equation, the surplus process {$ X$} follows
\begin{equation}
	\begin{cases}
		\rd X\left(t\right)=\left[rX\left(t\right)+\pi^T\left(t\right)
		\left(b-r\vec{1}\right)-P\left(t\right)\right]\rd t+\pi^T\left(t\right)\sigma\, \rd W\left(t\right),\\
		X\left(s\right)=x.
	\end{cases}
	\label{yuanfangcheng00}
\end{equation}
\vskip 5pt
Both the firm and the union can observe the fund surplus and know exactly the current value of the fund when making decisions. {As such}, the trading strategy $\pi$ and the benefit $P$ {depend on} the  surplus process $X$, i.e., $\pi\left(t\right)$ and $P\left(t\right)$ have the form $\pi\left(t\right)=\pi\left(t,X\left(t\right)\right)$, $P\left(t\right)=P\left(t,X\left(t\right)\right)$ at time $t$. In fact, as the system is Markovian and stationary, $\pi$ and $P$ have the form $\pi=\pi(x)$, $P=P(x)$, which is shown in  \cite{Josa2019Equilibrium}  and  means that the strategies only depend on the current wealth of the fund surplus. In our paper with ambiguity aversion, this dependence is also required, {and we will} see that the robust equilibrium strategies are of these forms.

\subsection{\bf Ambiguity aversion}
The financial model we consider is similar with the model presented  in   \cite{Josa2019Equilibrium}. However, in their work, the firm and the union can observe the financial model accurately, i.e., the probability measure $\mathbb{P}$ is the real probability measure of the financial market. However, as shown in \cite{blanchard1993movements}, the first moments of the assets are hard to be estimated accurately. When the investor makes decisions ignoring model uncertainty, he will be faced with large utility loss, see \cite{2013Robust}, \cite{huang2017robust}, etc. Ambiguity aversion is a way to characterize the decision maker's attitude towards model uncertainty. Particularly in the long term management of a DB pension fund, it is necessary to consider model uncertainty and characterize the firm (union)'s ambiguity aversion. {In what follows,} similar with \cite{Maenhout2004Robust} and \cite{Maenhout2006Robust}, we characterize model uncertainty by a set of equivalent probability measures.
\vskip 5pt
We suppose that the firm and the union both are ambiguity averse, i.e., both of them are concerned about the accuracy of the reference model. We define a set of equivalent probability measures to $\mathbb{P}$ by
\[\mathcal{Q}: =\{\mathbb{Q}|\mathbb{Q}\sim \mathbb{P}\}.\]
The firm and the union are not fully confident about the reference probability measure $\mathbb{P}$ and consider the set $\mathcal{Q}$ when making decisions. Each probability measure $\mathbb{Q}\in \mathcal{Q}$ represents a possible actual financial model.
\vskip 5pt
For $s\ge0$ and $x>0$, let $\E_{s,x}Y$ denote the expectation of random variable $Y$ under the condition $X\left(s\right)=x$. {Especially}, when $s=0$, it is written as $\E_xY$ for short. In order to show $\mathcal{Q}$ more clearly, we consider a set of $n$-dimensional-valued {processes $\mathcal{H}\triangleq \{h=\{h(t): t\geq s\}\}$} satisfying:
\begin{enumerate}
	\item {$h$} is progressively measurable w.r.t. the filtration  $\{\mathcal{F}_t\}_{t\geq s}$.
	\item For $x>0$ with initial condition $X(s)=x$, the {following} Novikov's condition holds:
	\[
	\E_{s,x}\left[\exp\left\{\int_{s}^{T}\frac{h^T\left(t\right)h\left(t\right)}{2}\rd t\right\}\right]<\infty, \quad \forall T\ge s.
	\]
\end{enumerate}
{Based on} Corollary 5.2 and Proposition 5.12 in Chapter III of  \cite{Ioannis1991Brownian}, {for $h\in \mathcal{H}$}, there is a unique equivalent  measure $\mathbb{Q}^h$ of $\mathbb{P}$ on the space $\left(\Omega,\mathcal{F},\left\{\mathcal{F}_t\right\}_{t\ge s}\right)$ with
\[\frac{\mathrm{d}\mathbb{Q}^h}{\mathrm{d}\mathbb{P}}|_{\mathcal{F}_t}=\Theta^h(t),\]
where {$\{\Theta^h(t): t\geq s\}$ is a}  $\mathbb{P}-$martingale {defined} by
\begin{equation*}
	\begin{split}
		\Theta^h(t)=\exp\left\{\int_s^th(s)^T\mathrm{d}W(s)
		-\frac{1}{2}\int_s^th(s)^Th(s)\mathrm{d}s\right\}.
	\end{split}
\end{equation*}
 {Based on} Girsanov's Theorem, the following process is {an} $n$-dimensional standard Brownian motion under $\mathbb{Q}^h$
\[\rd W^h\left(t\right)=\rd W\left(t\right)+h\left(t\right)\rd t.\]
\begin{remark}
	In the following, the expectation under the probability measure $\mathbb{Q}^h$ is denoted by $\E^h_{s,x}[\cdot]$ (or $\E^h_x[\cdot]$).
\end{remark}
{Under the  $\mathbb{Q}^h$, the fund surplus $X$ follows
\begin{eqnarray}\label{xinfangcheng0}
	\begin{cases}
		\rd X\left(t\right)	=&\left[rX\left(t\right)+\pi^T\left(t\right)
\left(b-r\vec{1}\right)-P\left(t\right)-\pi^T\left(t\right)\sigma h\left(t\right)\right]\rd t\\
&+\pi^T\left(t\right)\sigma\, \rd W^h\left(t\right),\\
X\left(s\right)=&x.		
	\end{cases}
\end{eqnarray}}
\vskip 5pt
Different from the financial model in \cite{Josa2019Equilibrium}, the wealth process includes an additional process $h$ to describe the model uncertainty. Meanwhile, in our framework, we are mainly concerned with model uncertainty of the first moments of the yields of the stock and $h$ only affects the drift terms in the wealth process. There are two players in our financial model and they have different ambiguity aversions. We denote processes $h_F$ {and}  $h_U$ as the measure transformation processes of the firm and the union, respectively.
\vspace{0.5cm}
\section{\bf Robust equilibrium games}
In this section, we present two different robust games for the firm and the union. The strategies taken by the firm and the union {affect} the fund surplus. The goals of the firm and the union are relatively opposite. The union expects to claim more benefits from the fund while the firm tries to increase the wealth of the fund. {As such}, they form a non-zero-sum game when making decisions and we are concerned with their Nash equilibrium strategies. Generally for an insurance company, there are two optimization rules: maximizing the expected wealth and minimizing the probability of ruin, see \cite{1995Optimal}. In \cite{Josa2019Equilibrium}, they show two similar different objectives for the firm. The union often does not care about the stability of the fund and aims to maximize the total expected future benefits from the fund. In the following, we present two different objectives for the firm and one objective for the union.

\subsection{\bf The first game: Maximizing the expected wealth}\label{sbs:ga}
In some cases, the firm is concerned with the expected utility of the fund surplus. When ignoring ambiguity, given $P$, the firm chooses $\pi$ to maximize the payoff
\begin{equation}
	j_F\left(s,x;\pi,P\right)=\E_{s,x}\left[\int_{s}^{\infty}e^{-\beta t}v\left(X\left(t\right)\right)\rd t\right], \label{jf1}
\end{equation}
where $v\left(\cdot\right)$ is a utility function and $\beta>0$ is the time preference of the firm. Besides, given $\pi$, the union seeks $P$ to maximize the payoff
\begin{equation}
	j_U\left(s,x;\pi,P\right)=\E_{s,x}\left[\int_{s}^{\infty}e^{-\alpha t}u\left(P\left(t\right)\right)\rd t\right],
	\label{ju1}
\end{equation}
where $u\left(\cdot\right)$ is a utility function and $\alpha>0$ is the time preference of the union.

The two objectives (\ref{jf1}) and (\ref{ju1}) have been shown in \cite{Josa2019Equilibrium}. They also obtain the explicit forms of equilibrium strategies $(\pi^*,P^*)$ under these two objectives. In our work, we are interested in the effect of model uncertainty on the firm and the union's behaviours. {Next}, we present the objectives under model uncertainty. In spite that the firm and the union are ambiguity averse, the  original probability measure $\mathbb{P}$ has reference value and the new measure $\mathbb{Q}^h$ can not deviate too far from the original measure. As in \cite{Maenhout2004Robust}, \cite{Maenhout2006Robust}, \cite{Ailing2020Optimal}, etc.,  we add a penalty term after the expected discounted utility $j_U\left(s,x;\pi,P\right)$ ($j_F\left(s,x;\pi,P\right)$) as the payoff function under ambiguity, {as such, the} {payoff of the union under ambiguity is}
\begin{eqnarray}
	J_U\left(s,x,\pi,P,h_U\right):=\E_{s,x}^{h_U}\left[\int_{s}^{\infty}e^{-\alpha t}u\left(P\left(t\right)\right)\rd t \!+\!\int_{s}^{\infty}e^{-\alpha t}\frac{\frac{1}{2}h_U^T\left(t\right)h_U\left(t\right)}{\varphi_U\left(t,X\left(t\right)\right)}\rd t\right].
	\label{ju2}
\end{eqnarray}
The increasing rate of the relative entropy from time $t$ to time $t+\mathrm{d}t$ equals $h^T(t)h(t)$, which has been shown in \cite{2018RobustLi}. $\varphi_U\left(t,X\left(t\right)\right)$ represents the degree of ambiguity aversion. When $\varphi_U\left(t,X\left(t\right)\right)$ increases, the union has less confidence about the reference model.

In this paper, $u\left(P\right)$ is the CRRA utility function, i.e., $$u\left(P\right)=\frac{P^{1-\gamma}}{1-\gamma},\quad\gamma>0,\quad \gamma\neq1,$$ where $\gamma$ represents the risk aversion of the union over financial risk. Besides, we suppose
$$\varphi_U\left(t,x\right)=\frac{\lambda}{\left(1-\gamma\right)J_U\left(t,x,\pi,P,h_U\right)},$$ where $\lambda>0$ is the ambiguity aversion parameter of the union. $\varphi_U\left(t,x\right)$ is positive and is inversely proportional to the payoff function.  When $\lambda$ increases, the union has less faith in the reference model.

\begin{remark}
	It is worth noting here that $X$ actually depends on the strategy variable $(\pi,P)$, and $\varphi_U\left(t,x\right)$ also depends on the strategy variable $(\pi,P)$ and the process $h_U$. However, for simplicity, $(\pi,P)$ is omitted and not shown in $X$, so as the following $\hat{\varphi}_U\left(x\right),\varphi_F\left(t,x \right)$ and $\hat{\varphi}_F\left(x\right)$.
\end{remark}
\vskip 5pt
The union first searches the worst case among all possible  equivalent probability measures, i.e., chooses $h^U$ to minimize the payoff function. {Then the union maximizes the payoff function under worst case scenario, which is called the objective function.} The goal of the union is to search the optimal benefit  under worst case scenario as follows
\begin{equation}
	V_U^{\pi}\left(s,x\right)\triangleq\sup_{P:\left(\pi,P\right)\in\Lambda}\inf_{h_U\in\mathcal{H}_U\left(\pi,P\right)}J_U\left(s,x,\pi,P,h_U\right).
\end{equation}
The above $\Lambda,\mathcal{H}_U\left(\pi,P\right)$ and the following $\mathcal{H}_F\left(\pi,P\right)$ represent the admissible sets of $(\pi,P)$, $h_U$ and $h_F$, respectively. The details of the definitions are given in {Subsection} \ref{admissible}.
\begin{remark}\label{rem1}
	If we assume that $P\left(t\right)=P\left(X\left(t\right)\right)$ and $h_U\left(t\right)=h_U\left(X\left(t\right)\right)$ as in the last section, {and set $\varphi_U\left(t,x\right)=e^{\alpha t}\hat{\varphi}_U\left(x\right)$, then} we have
	\begin{eqnarray*}
		J_U\left(s,x,\pi,P,h_U\right)\triangleq\E_{s,x}^{h_U}\left[\!\int_{s}^{\infty}e^{-\alpha t}\frac{P\left(X\left(t\right)\right)^{1-\gamma}}{1-\gamma}\rd t\!+\!\int_{s}^{\infty}e^{-\alpha t}\frac{\frac{1}{2}h_U^T\left(X\left(t\right)\right)h_U\left(X\left(t\right)\right)}{\hat{\varphi}_U\left(X\left(t\right)\right)}\rd t\!\right],
	\end{eqnarray*}
	which {implies}
	\begin{equation}
		J_U\left(s,x,\pi,P,h_U\right)=e^{-\alpha s}J_U\left(0,x,\pi,P,h_U\right).
		\label{shi1}
	\end{equation}
	Further {assuming} $\hat{\varphi}_U\left(x\right)=\frac{\lambda}{\left(1-\gamma\right)J_U\left(0,x,\pi,P,h_U\right)}$
	, {we have} $\varphi_U\left(t,x\right)=\frac{\lambda}{\left(1-\gamma\right)J_U\left(t,x,\pi,P,h_U\right)}$.
	Eq.~(\ref{shi1}) shows that the payoff function may be {``time-homogeneous"}, which helps us guess {the form of the solution for}  the HJBI equations in the next section.
	\label{rem2}
\end{remark}

\vspace{0.5cm}

The firm is also uncertain about the financial market and searches the optimal investment strategies under worst case scenario.  Similarly, the robust stochastic control problem for the firm is formulated as {follows:}
\begin{equation*}
	V_F^P\left(s,x\right)\triangleq\sup_{\pi:\left(\pi,P\right)\in\Lambda}
	\inf_{h_F\in\mathcal{H}_F\left(\pi,P\right)}J_F\left(s,x,\pi,P,h_F\right),
\end{equation*}
where
\begin{eqnarray}\label{jf22}
	J_F\left(s,x,\pi,P,h_F\right)&\triangleq&\E_{s,x}^{h_F}\left[\int_{s}^{\infty}e^{-\beta t}v\left(X\left(t\right)\right)\rd t+\int_{s}^{\infty}e^{-\beta t}\frac{\frac{1}{2}h_F^T\left(t\right)h_F\left(t\right)}{\varphi_F\left(t,X\left(t\right)\right)}\rd t\right].\nonumber
	\\
\end{eqnarray}
We {assume} that the firm also has a CRRA utility preference, i.e.,  $$v\left(x\right)=\frac{x^{1-\delta}}{1-\delta},\quad \delta>0,\quad \delta\neq 1,$$
where $\delta$ characterizes the firm's risk aversion of financial risk. The penalty term is given by
$$\varphi_F\left(t,x \right)=\frac{\mu}{\left(1-\delta\right)J_F \left(t,x,\pi,P,h_F\right)},$$ where $\mu>0$ is the ambiguity aversion parameter of the firm. When $\mu$ increases, the firm is more ambiguity averse towards model uncertainty. We {see} that  $\varphi_F\left(t,x\right)$ is also positive and inversely proportional to the payoff function $J_F$.
\begin{remark}
	Under the same assumption as in Remark \ref{rem1}, we {have}
	\begin{equation}
		J_F\left(s,x,\pi,P,h_F\right)=e^{-\beta s}J_F\left(0,x,\pi,P,h_F\right).
		\label{shi2}
	\end{equation}
	\label{rem3}
\end{remark}
There are two players (firm, union) in our financial model. Each of them has a strategy that affects the same fund surplus. Their goals are totally different: the firm cares about the wealth of the fund while the union cares about the benefits from the fund. They form a non-zero-sum game  and we summarize the optimization goals of them as the following robust stochastic control problem (PI)
\begin{eqnarray*}
	V_U^{\pi}\left(s,x\right)&\triangleq&\sup_{P:\left(\pi,P\right)\in\Lambda}\inf_{h_U\in\mathcal{H}_U\left(\pi,P\right)}J_U\left(s,x,\pi,P,h_U\right),
	\\
	V_F^P\left(s,x\right)&\triangleq&\sup_{\pi:\left(\pi,P\right)\in\Lambda}\inf_{h_F\in\mathcal{H}_F\left(\pi,P\right)}J_F\left(s,x,\pi,P,h_F\right).
\end{eqnarray*}
In Problem (PI), the firm's strategy $\pi$ interacts with the union's strategy $P$ to jointly affect the max-min payoff functions of both players: $V_U^{\pi}\left(s,x\right)$ and $V_F^P\left(s,x\right)$. 
{Here,} we present the definitions of the {robust equilibrium strategies} and value functions of Problem (PI).

\begin{Def}
	We call $\left(\pi_*,P_*\right)\in\Lambda$ the robust equilibrium strategy and $h_U^*\in\mathcal{H}_U\left(\pi_*,P_*\right)$, $h_F^*\in\mathcal{H}_F\left(\pi_*,P_*\right)$ the worst case measure transformation processes, if they {satisfy}
	\begin{eqnarray*}
		V_U^{\pi_*}\left(s,x\right)&=&\inf_{h_U\in \mathcal{H}_U\left(\pi_*,P_*\right)}J_U\left(s,x,\pi_*,P_*,h_U\right)=J_U\left(s,x,\pi_*,P_*,h_U^*\right),\\
		V_F^{P_*}\left(s,x\right)&=&\inf_{h_F\in \mathcal{H}_F\left(\pi_*,P_*\right)}J_F\left(s,x,\pi_*,P_*,h_F\right)=J_F\left(s,x,\pi_*,P_*,h_F^*\right).
	\end{eqnarray*}
	The value functions are given by $V_U\left(s,x\right)=V_U^{\pi_*}\left(s,x\right),V_F\left(s,x\right)=V_F^{P_*}\left(s,x\right)$,  for $\left(s,x\right)\in\mathbf{R}_+^2$.
	\label{definition of value function}
\end{Def}
In the first game,  {under worst case scenario, the firm aims to maximize the expected utility of the fund surplus while the union expects to maximize the expected utility of benefit.}  Different from  \cite{Josa2019Equilibrium}, ambiguity aversion is considered in this non-zero-sum game.
\vspace{0.5cm}
\subsection{\bf The second game: Maximizing the probability}\label{sec:game2}
As stated in the beginning of this section, some firms are also concerned with the probability of ruin. The firm may be more interested to keep the fund sustainable and overfunded.
\vskip 5pt
In \cite{Josa2019Equilibrium}, they think that the firm may also be concerned with preventing the fund surplus falling below a low level. We suppose that the firm has two fund surplus levels, $l>0$ (the lower level) and $v>0$ (the upper level). The  initial wealth of the fund is $X\left(s\right)=x\in\left(l,v\right), s\ge 0$ and the firm expects to maximize the probability of reaching $v$ before  {hitting} $l$. In the second game, the union still cares about the expected utility of the benefits.
\vskip 5pt

Next, we present the payoff functions of the firm and the union without ambiguity first. Then we introduce ambiguity in our model and show the robust control problem in the second game. Given $P$, the  firm chooses $\pi$ to maximize {the  following probability of the fund surplus reaching $v$ before  hitting $l$}:
\begin{equation*}
	j_{\bar{F}}\left(s,x;\pi,P\right)=\mathbb{P}(\mathcal{T}_v<\mathcal{T}_l|
	X(s)=x), l<x<v,
\end{equation*}
where $\mathcal{T}_z$ represents the first time that $X$ hits the value $z\geq0$, i.e., $\mathcal{T}_z=\inf
\{t>s: X(t)=z\}$. Then the {last equation is rewritten as follows:}

\begin{equation}\nonumber
	j_{\bar{F}}\left(s,x;\pi,P\right)=\E_{s,x}h\left(X\left(\mathcal{T}\right)\right),
\end{equation}
with $\mathcal{T}=\mathcal{T}_l\wedge \mathcal{T}_v$ (we need to verify $\mathcal{T}<\infty, \mathbb{P}\text{-a.s.}$ in advance) and $h\left(l\right)=0$, $h\left(v\right)=1$.

Given $\pi$, the union seeks $P$ to maximize the payoff
\begin{equation}\label{equ:ju2g}
	{j_{\bar{U}}\left(s,x;\pi,P\right)}=\E_{s,x}\left[\int_{s}^{\mathcal{T}}e^{-\alpha t}u\left(P\left(t\right)\right)\rd t+e^{-\alpha \mathcal{T}}g\left(X\left(\mathcal{T}\right)\right)\right],
\end{equation}
where $u\left(\cdot\right)$ is a utility function, $g\left(\cdot\right)$ is a bequest function, and $\alpha>0$ is the time preference of the union. In \cite{Josa2019Equilibrium}, they show that under the preference of Eq.~(\ref{equ:ju2g}), explicit solutions are unable to be derived even for very simple bequest functions $g$. Therefore, they approximate the union's payoff with the expected utility on the interval $\left[s,\infty\right)$ for a large value of $v$ as
\begin{equation}\nonumber
	{j_{\bar{U}}\left(s,x;\pi,P\right)}=\E_{s,x}\left[\int_{s}^{\infty}e^{-\alpha t}u\left(P\left(t\right)\right)\rd t\right],
\end{equation}
which is exactly the same as the payoff described by Eq.~(\ref{ju1}) in the last subsection \ref{sbs:ga}.

{Next,} we show the objectives of the firm and the union under ambiguity aversion. The objective $J_{\bar{U}}\left(s,x,\pi,P,h_{\bar{U}}\right)$ of the union under ambiguity  is the same as Eq.~(\ref{ju2}), i.e.,

\begin{eqnarray*}
	J_{\bar{U}}\left(s,x,\pi,P,h_{\bar{U}}\right)\triangleq\E_{s,x}^{h_{\bar{U}}}\left[\int_{s}^{\infty}e^{-\alpha t}u\left(P\left(t\right)\right)\rd t \!+\!\int_{s}^{\infty}e^{-\alpha t}\frac{\frac{1}{2}h_{\bar{U}}^T\left(t\right)h_{\bar{U}}\left(t\right)}{\varphi_{\bar{U}}\left(t,X\left(t\right)\right)}\rd t\right],
\end{eqnarray*}
where  $u(\cdot)$ and $\varphi_{\bar{U}}(\cdot,\cdot)$ are all the same as in the last subsection.

\vskip 5pt
The firm aims to maximize the probability under worst case scenario. {As such,} the {preference} of the firm under ambiguity aversion is as follows:
\begin{eqnarray}
J_{\bar{F}}\left(s,x,\pi,P,h_{\bar{F}}\right){\triangleq}\E_{s,x}\left[h\left(X\left(\mathcal{T}\right)\right)+\int_{s}^{\mathcal{T}}\frac{\frac{1}{2}h_{\bar{F}}^T\left(t\right)h_{\bar{F}}\left(t\right)}{\varphi_{\bar{F}}\left(t,X\left(t\right)\right)}\rd t\right],\label{jf222}
\end{eqnarray}
with $$\varphi_{\bar{F}}\left(t,x \right)=\frac{\mu}{J_{\bar{F}} \left(t,x,\pi,P,h_{\bar{F}}\right)+c},$$ where $c$ is an undetermined constant and $\mu>0$ is the ambiguity aversion parameter of the firm. The larger $\mu$ is, the  higher the ambiguity aversion level of the firm is. Then $\varphi_{\bar{F}}\left(t,x\right)$ is positive and inversely proportional to the payoff function plus the constant $c$.
\begin{remark}
	Initially, we set $c=0$. However, the homogeneity of the HJBI equations does not hold and explicit solutions can not be obtained. In order to keep the homogeneity of the system, some constant $c$ needs to be added in the penalty term.	
\end{remark}
If we assume  $h_{\bar{F}}\left(t\right)=h_{\bar{F}}\left(X\left(t\right)\right)$ {and} set $\varphi_{\bar{F}}\left(t,x\right)=\hat{\varphi}_{\bar{F}}\left(x\right)$, then we have
\begin{equation}
	J_{\bar{F}}\left(s,x,\pi,P,h_{\bar{F}}\right)=J_{\bar{F}}\left(0,x,\pi,P,h_{\bar{F}}\right).
	\label{shi3}
\end{equation}
Further {assuming}  $\hat{\varphi}_{\bar{F}}\left(x\right)=\frac{\lambda}{J_{\bar{F}}\left(0,x,\pi,P,h_{\bar{F}}\right)+c}$
, {we have} $\varphi_{\bar{F}}\left(t,x\right)=\frac{\lambda}{J_{\bar{F}}\left(t,x,\pi,P,h_{\bar{F}}\right)+c}$. {As such}, Eq.~(\ref{shi3})  {shows} that the payoff function has ``time-homogeneity", which is useful to derive the optimal solution.
\vskip 5pt
In the second game, the firm is concerned with the probability hitting some level while the union still hopes to attain more benefits. On the one hand, with larger benefits, the union has larger utility. On the other hand, the probability reaching the upper level decreases with larger benefits. {As such}, the firm and the union compete and both  affect the fund surplus. The goal of the firm and the union is to search the robust Nash equilibrium strategy $(\pi,P)$. The robust stochastic problem (PI1) of the firm and the union is as follows:
\begin{eqnarray*}
	V_{\bar{U}}^{\pi}\left(s,x\right)&\triangleq&\sup_{P:\left(\pi,P\right)\in\Lambda}\inf_{h_{\bar{U}}\in\mathcal{H}_{\bar{U}}\left(\pi,P\right)}J_{\bar{U}}\left(s,x,\pi,P,h_{\bar{U}}\right),
	\\
	V_{\bar{F}}^P\left(s,x\right)&\triangleq&\sup_{\pi:\left(\pi,P\right)\in\Lambda}\inf_{h_{\bar{F}}\in\mathcal{H}_{\bar{F}}\left(\pi,P\right)}J_{\bar{F}}\left(s,x,\pi,P,h_{\bar{F}}\right),
\end{eqnarray*}
where the admissible sets of  $\Lambda,\mathcal{H}_{\bar{U}}\left(\pi,P\right)$ and $\mathcal{H}_{\bar{F}}\left(\pi,P\right)$ are all given in the next section.
\vskip 5pt
The definitions of the robust equilibrium strategy, the worst case measure transformation process, and value functions of Problem (PI1) are exactly the same as in Definition \ref{definition of value function}.
\vskip 10pt
\section{\bf Main results}
In this section, we derive the robust equilibrium strategies of Problems (PI) and (PI1) separately. {Using} stochastic dynamic programming method, we present the HJBI equations for these two games. Explicit forms of the robust equilibrium strategies are obtained by the HJBI equations. However, the optimality of the robust strategy is not easy to be proved. In order to show the optimality, first we present the definitions of  admissible sets to ensure the well-posedness and the integrability {(or finiteness)} of the problem. Then we show the optimality of the explicit solutions in detail.
\vskip 5pt
\subsection{\bf The first game}
Problem (PI) is concerned with the utilities of the wealth and benefit of the fund  surplus. We first define the admissible sets for $(\pi,P)$, $h_U$ and $h_F$. Then we search the robust equilibrium strategy and the worst case measure transformation process within the admissible sets. Detailed proofs of the optimality of the robust equilibrium strategy are also presented. {In addition}, we show that  the robust equilibrium strategy is also Pareto optimal {in some cases}.
\subsubsection{\bf Admissible sets}\label{admissible}
For  fixed $s\ge 0$ and $x>0$, in order to ensure the conditions in the Girsanov's Theorem, integrability of the objective function and the validity of the verification theorem, we need to introduce the admissible sets of  the strategy $(\pi,P)$ and corresponding measure transformation process $(h_F, h_U)$. First we define the admissible set of the strategy $(\pi,P)$.
\begin{Def}
	A strategy {$(\pi,P)=\left\{\left(\pi\left(t\right),P\left(t\right )\right ): t\ge s\right\}$} is admissible, if it satisfies
	\begin{enumerate}	
		\item[(i)]{$\left(\pi,P\right)$} is  adapted to the filtration $\{\mathcal{F}_t\}_{t\ge s}$ and $\pi\left(t\right)>0,P\left(t\right)> 0,\forall t\ge s, \mathbb{P}\text{-a.s.}$.
		\item[(ii)]Under the strategy $(\pi,P)$,  SDE~(\ref{yuanfangcheng00}) has a pathwise unique solution $X$  with $X\left(t\right)>0,\forall t\ge s, \mathbb{P}\text{-a.s.}$.
	\end{enumerate}
	The set of  admissible strategies $\left(\pi,P\right)$ is denoted {by} $\Lambda$.
	\label{def2.1}
\end{Def}
{Next,} we {introduce} the admissible set of the measure transformation processes $h_U$ and $h_F$.
\begin{Def}
	For every fixed admissible strategy $\left(\pi,P\right)$, $X$ is the unique solution of SDE~(\ref{yuanfangcheng00}) under the strategy $\left(\pi,P\right)$.
	We say that {$h_U=\left\{h_U\left(t\right): t\geq s\right\}$} is admissible  about $\left(\pi,P\right)$, if $X\left(t\right)>0$, $\pi\left(t\right)>0$, $P\left(t\right)>0$, $\forall t\ge s$, $\mathbb{Q}^{h_U}\text{-a.s.}$ and $h_U$ satisfies
	\begin{enumerate}
		\item[(i)]
		\begin{equation}\nonumber
			\E_{s,x}\left[\exp\left\{\int_{s}^{T}\frac{h_U\left(t\right)h_U\left(t\right)}{2}\rd t\right\}\right]<\infty, \quad \forall T\ge s,
		\end{equation}
		\item[(ii)]
		\begin{equation}\nonumber
			\E_{s,x}^{h_U}\left[\int_{s}^{\infty}\left| e^{-\alpha t}\frac{\left[P\left(t\right)\right]^{1-\gamma}}{1-\gamma}\right| \rd t\right]<\infty,
		\end{equation}
		\item[(iii)]
		\begin{equation}\nonumber
			\lim_{t\rightarrow +\infty}\E_{s,x}^{h_U}\left[e^{-\alpha t}\frac{\left[X\left(t\right)\right]^{1-\gamma}}{1-\gamma}\right]=0,
		\end{equation}
		\item[(iv)]
		\begin{equation}\nonumber
			\E_{s,x}^{h_U}\left[\int_{s}^{T}\left(X\left(t\right)\right)^{-2\gamma}\pi^T\left(t\right)\pi\left(t\right)\rd t\right]<\infty, \quad \forall T\ge s.
		\end{equation}
	\end{enumerate}
	Denote $\mathcal{H}_U\left(\pi,P\right)$ as the set of all admissible processes $h_U$ about $\left(\pi,P\right)$.\label{def2.12}
\end{Def}	
Similarly, we {give} the set of the admissible measure transformation process $h_F$ for the firm as follows.
\begin{Def}
	For every fixed admissible strategy $\left(\pi,P\right)$, $X$ is the unique solution of SDE~(\ref{yuanfangcheng00}) under the strategy $\left(\pi,P\right)$. We say that {$h_F=\left\{h_F\left(t\right): t\geq s\right\}$} is admissible about $\left(\pi,P\right)$, if $X\left(t\right)>0, \pi\left(t\right)>0, P\left(t\right)>0, \forall t\ge s, \mathbb{Q}^{h_F}\text{-a.s.}$ and  $h_F$ satisfies
	\begin{enumerate}
		\item[(i)]
		\begin{equation}\nonumber
			\E_{s,x}\left[\exp\left\{\int_{s}^{T}\frac{h_F\left(t\right)h_F\left(t\right)}{2}\rd t\right\}\right]<\infty, \quad \forall T\ge s,
		\end{equation}
		\item[(ii)]
		\begin{equation}\nonumber
			\E_{s,x}^{h_F}\left[\int_{s}^{\infty}\left|e^{-\beta t}\frac{\left[X\left(t\right)\right]^{1-\delta}}{1-\delta}\right|\rd t\right]<\infty,
		\end{equation}
		\item[(iii)]
		\begin{equation}\nonumber
			\lim_{t\rightarrow +\infty}\E_{s,x}^{h_F}\left[e^{-\beta t}\frac{\left[X\left(t\right)\right]^{1-\delta}}{1-\delta}\right]=0,
		\end{equation}
		\item[(iv)]
		\begin{equation}\nonumber
			\E_{s,x}^{h_F}\left[\int_{s}^{T}\left(X\left(t\right)\right)^{-2\delta}\pi^T\left(t\right)\pi\left(t\right)\rd t\right]<\infty, \quad \forall T\ge s.
		\end{equation}
	\end{enumerate}
	All admissible processes $h_F$ about $\left(\pi,P\right)$ form an admissible set denoted by  $\mathcal{H}_F\left(\pi,P\right)$.
	\label{def2.2}
\end{Def}
Condition (i) in Definitions \ref{def2.12} and \ref{def2.2} is the Novikov's condition, which guarantees the validity of the measure transformation from reference measure $\mathbb{P}$ to $\mathbb{Q}^h$. The integrability of the penalty terms in Eqs.~(\ref{ju2}) and (\ref{jf22}) is ensured in Condition (ii) of  Definitions \ref{def2.12} and \ref{def2.2}. Conditions (iii) and (iv) are proposed to ensure the optimality of the robust equilibrium {strategy.}

\vskip 5pt
\begin{remark}In most stochastic dynamic programming problems, if the problem is formulated in infinite time, Condition (iii) is always required to ensure {the optimality}. If the problem is formulated in finite time horizon, we do not need to introduce Condition (iii) of Definitions  \ref{def2.12} and \ref{def2.2}, see \cite{Mataramvura2008Risk} for example.
Besides, we introduce the integrability condition (iv) of   Definitions  \ref{def2.12} and \ref{def2.2} in our work. {In \cite{korn2002stochastic} and \cite{kraft2012optimal},} the uniformly integrability condition is required and localization method is applied to show the optimality. In our work, the robust control problem has two features: {On the one hand, the sign of the selected utility function does not change and then the monotonic convergence theorem can be established; on the other hand, this problem is formulated in infinite time horizon and does not involve the stopping time.} Therefore, in our model, the integral expansion theorem can be established and we can replace the uniformly integrability condition by Condition (iv) to avoid the use of localization method.
	
	\label{remh2}
\end{remark}

In fact, the admissible set $\Lambda$ and admissible measure transformation sets $\mathcal{H}_U\left(\pi,P\right)$ and $\mathcal{H}_F\left(\pi,P\right)$ about $\left(\pi,P\right)\in\Lambda$  have already been shown in Problem (PI) of Subsection \ref{sbs:ga}. In this subsection, the detailed descriptions of the sets are presented. Some conditions about the sets ensure the well-posedness of the robust control problem. The other conditions are useful to show the optimality of the derived strategy.
 Now we derive the robust equilibrium strategy under worst case scenario and the value functions of  Problem (PI). {In addition}, the optimality is also shown precisely.
\vskip 5pt
\subsubsection{\bf Robust equilibrium strategy}
{In this subsection}, we solve the robust stochastic control problem (PI). {First}, we derive the associated HJBI equations to Problem (PI) based on stochastic dynamic programming method. {Second}, we guess the forms of the value functions and obtain the explicit forms of the robust equilibrium strategy as well as the  processes associated with the  worst case measures. Last, we prove the optimality of the equilibrium strategy in detail. The derivation and solving procedure of the HJBI equations is similar with many previous work. The main contribution here is to verify the optimality strictly.
\vskip 5pt
$\forall f(\cdot,\cdot)\in C^{1,2}\left(\mathbf{R}_+^2\right)$, the infinitesimal generator $\mathcal{A}^{\pi,P,h}$ {of the surplus process $X$  defined by (\ref{xinfangcheng0}) is}
\begin{eqnarray}
	\mathcal{A}^{{\pi},{P},{h}}f\left(s,x\right)&\triangleq &f_s\left(s,x\right)+\left(rx+{\pi}^T\left(b-r\vec{1}\right)-{P}-{\pi}^T\sigma h\right)f_x\left(s,x\right)\nonumber
	\\
	&&+\frac{1}{2}{\pi}^T\Sigma{\pi} f_{xx}\left(s,x\right).
	\label{suanzi}
\end{eqnarray}
The following theorem presents {the associated HJBI equations to Problem (PI)}.
\begin{proposition}[HJBI equations]Suppose that there are two functions {$W^U\left(\cdot,\cdot\right)\in C^{1,2}\left(\mathbf{R}_+^2\right)$, and} $W^F\left(\cdot,\cdot\right)\in C^{1,2}\left(\mathbf{R}_+^2\right)$. {Then the HJBI systems to Problem $\left(PI\right)$} are
	\begin{equation}
		0=\sup_{\hat{P}}\inf_{\hat{h}_U}\left
\{\mathcal{A}^{\hat{\pi},\hat{P},\hat{h}_U}
W^U\left(s,x\right)+\Phi^U\left(s,x,\hat{\pi},\hat{P},\hat{h}_U,W^U\right)\right\},
		\label{hjbfc1}
	\end{equation}
	\begin{equation}
		0=\sup_{\hat{\pi}}\inf_{\hat{h}_F}
\left\{\mathcal{A}^{\hat{\pi},\hat{P},
\hat{h}_F}W^F\left(s,x\right)+\Phi^F\left(s,x,\hat{\pi},\hat{P},\hat{h}_F,W^F\right)\right\},
		\label{hjbcf2}
	\end{equation}
	where
	\begin{equation*}
		\Phi^U\left(s,x,
\hat{\pi},\hat{P},\hat{h}_U,W^U\right)\triangleq e^{-\alpha s}\frac{\hat{P}^{1-\gamma}}
{1-\gamma}+\left(1-\gamma\right)\frac{\hat{h}_U^T\hat{h}_U}{2\lambda}W^U\left(s,x\right),
	\end{equation*}
	\begin{equation*}
		\Phi^F\left(s,x,\hat{\pi},\hat{P},\hat{h}_F,W^F\right)\triangleq e^{-\beta s}\frac{x^{1-\delta}}{1-\delta}+\left(1-\delta\right)\frac{\hat{h}_F^T\hat{h}_F}
{2\mu}W^F\left(s,x\right).
	\end{equation*}
	\label{pro3.1}
\end{proposition}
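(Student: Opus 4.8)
The plan is to read off \eqref{hjbfc1} and \eqref{hjbcf2} as the infinitesimal (local) form of the dynamic programming principle for the two coupled sup--inf problems, treating each player's subproblem with the rival's strategy frozen at its equilibrium value $(\hat\pi,\hat P)$.

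First I would treat the union. Freezing the firm's investment at $\hat\pi$ and regarding $W^U=V_U^{\hat\pi}$ as the candidate value function, the dynamic programming principle over a short horizon $[s,s+\tau]$ for the worst-case problem reads
\begin{equation*}
W^U(s,x)=\sup_{P}\inf_{h_U}\E^{h_U}_{s,x}\!\left[\int_s^{s+\tau}\!\!\left(e^{-\alpha t}\frac{P(t)^{1-\gamma}}{1-\gamma}+e^{-\alpha t}\frac{\tfrac12 h_U^T(t)h_U(t)}{\varphi_U(t,X(t))}\right)\!\rd t+W^U(s+\tau,X(s+\tau))\right].
\end{equation*}
Since $W^U\in C^{1,2}(\mathbf{R}_+^2)$, I would apply It\^o's formula to $W^U(t,X(t))$ along the $\mathbb{Q}^{h_U}$-dynamics \eqref{xinfangcheng0}: the drift is exactly $\mathcal{A}^{\hat\pi,P,h_U}W^U(t,X(t))$ by the definition \eqref{suanzi} of the generator, while the $\rd W^{h_U}$-integral is a $\mathbb{Q}^{h_U}$-martingale whose expectation vanishes (after a standard localization, controlled by Condition (iv) of Definition~\ref{def2.12}). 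Subtracting $W^U(s,x)$, dividing by $\tau$ and letting $\tau\downarrow0$ yields
\begin{equation*}
0=\sup_{P}\inf_{h_U}\left\{\mathcal{A}^{\hat\pi,P,h_U}W^U(s,x)+e^{-\alpha s}\frac{P^{1-\gamma}}{1-\gamma}+e^{-\alpha s}\frac{\tfrac12 h_U^Th_U}{\varphi_U(s,x)}\right\}.
\end{equation*}

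It then remains to cast the penalty into the stated shape. I would invoke the Maenhout-type scaling $\varphi_U(s,x)=\lambda/[(1-\gamma)J_U(s,x,\pi,P,h_U)]$ together with the identification $J_U=W^U$ along the equilibrium and the time-homogeneity \eqref{shi1}, which reduces the running penalty to $(1-\gamma)\tfrac{h_U^Th_U}{2\lambda}W^U(s,x)$; adding $e^{-\alpha s}u(P)=e^{-\alpha s}\tfrac{P^{1-\gamma}}{1-\gamma}$ reproduces $\Phi^U$ and hence \eqref{hjbfc1}. The firm's equation follows verbatim: freeze $P=\hat P$, run the same principle and It\^o step on $W^F=V_F^{\hat P}$, and simplify the penalty via $\varphi_F=\mu/[(1-\delta)J_F]$ and \eqref{shi2} to obtain $\Phi^F$ and \eqref{hjbcf2}.

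The main obstacle is not the formal manipulation but the self-referential penalty: since $\varphi_U$ (resp.\ $\varphi_F$) is defined through the very value function $J_U$ (resp.\ $J_F$), the penalty is not an ordinary running cost, and the substitution $J_U\mapsto W^U$ is legitimate only once $W^U$ is recognized as the equilibrium payoff $V_U^{\hat\pi}$. A second, conceptual subtlety is that the two displays are not the HJB equations of one control problem but the coupled equilibrium conditions of a non-zero-sum game, so each is derived with the opponent frozen at $(\hat\pi,\hat P)$; the genuine justification that a smooth solution of this coupled system returns the value functions is the content of the verification theorem, deferred to the admissibility framework of Definitions~\ref{def2.1}--\ref{def2.2}. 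At the level of this proposition I would therefore present the argument as the standard heuristic passage from the dynamic programming principle to its infinitesimal generator form.
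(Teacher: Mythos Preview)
Your proposal is correct and matches the paper's intent: the paper itself omits the derivation entirely, stating only that it is simple and referring to \cite{Maenhout2006Robust} for details. Your heuristic passage from the dynamic programming principle to its infinitesimal form, together with the Maenhout-type substitution $J_U\mapsto W^U$ in the penalty, is exactly the standard argument the paper is deferring to, and your closing remarks correctly flag that the rigorous justification is postponed to the verification theorem.
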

\begin{proof}
The derivation is simple and we omit it here, see \cite{Maenhout2006Robust} for details.
\end{proof}
{The following theorem gives the closed-forms of solutions to the HJBI equations  in  Proposition \ref{pro3.1}.}
\begin{theorem}{Assume that the parameters $A$ and $B$ defined by Eqs.~(\ref{A}) and (\ref{B}) are positive. Then the associated HJBI equations~(\ref{hjbfc1} )-(\ref{hjbcf2}) to Problem (PI)  have  solutions in $C^{1,2}\left(\mathbf{R}_+^2\right)$ :}
	\begin{eqnarray}
		W^U\left(s,x\right)&=&Ae^{-\alpha s}\frac{x^{1-\gamma}}{1-\gamma},
		\label{jieu}
		\\
		W^F\left(s,x\right)&=&Be^{-\beta s}\frac{x^{1-\delta}}{1-\delta}.
		\label{jief}
	\end{eqnarray}
	Moreover, for fixed $\left(s,x\right)$, the maximum point $\hat{P}_*$ and the minimum point $\hat{h}_{U*}$ in Eq.~(\ref{hjbfc1}) are given by
	\begin{eqnarray}
		\hat{P}_*&=&A^{-\frac{1}{\gamma}}x,
		\label{p}
		\\
		\hat{h}_{U*}&=&\frac{\lambda}{\mu+\delta}\theta,
		\label{hu}	
	\end{eqnarray}
	and the maximum point $\hat{\pi}_*$ and the minimum point $\hat{h}_{F*}$ in Eq.~(\ref{hjbcf2}) are  given by
	\begin{eqnarray}
		\hat{\pi}_*&=&\frac{1}{\mu+\delta}\Sigma^{-1}\left(b-r\vec{1}\right)x,
		\label{pi}
		\\
		\hat{h}_{F*}&=&\frac{\mu}{\mu+\delta}\theta,
		\label{hf}
	\end{eqnarray}
{where
\begin{eqnarray}
		A&=&\left[ \frac{\alpha}{\gamma}-\frac{1-\gamma}{\gamma}\left(r+\left(\frac{1}{\mu+\delta}-\frac{\lambda+\gamma}{2\left(\mu+\delta\right)^2}\right)\theta^T\theta\right) \right]^{-\gamma},
		\label{A}
		\\
		B\!&=&\!\frac{1}{1\!-\!\delta}\left[ \frac{\beta}{1\!-\!\delta}\!+\!\frac{\alpha\!-\!r}{\gamma}\!+\!\left( \frac{\left(1\!-\!\gamma\right)\left(\lambda\!+\!\gamma\right)}
{2\gamma\left(\mu\!+\!\delta\right)^2}\!-\!\frac{1\!-\!\gamma}
{\gamma\left(\mu\!+\!\delta\right)}\!-\!\frac{1}{2\left(\mu\!+\!\delta\right)} \right)\theta^T\theta \right]^{-1}.
		\label{B}
	\end{eqnarray}}
	\label{the3.1}
\end{theorem}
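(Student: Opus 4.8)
The plan is to verify the two ansatz \eqref{jieu}--\eqref{jief} directly by substitution, thereby reducing each HJBI equation to a static (pointwise in $(s,x)$) optimization solvable by first-order conditions, and then to match coefficients to identify $A$ and $B$. The power-multiplicative form $W^U = Ae^{-\alpha s}x^{1-\gamma}/(1-\gamma)$, $W^F = Be^{-\beta s}x^{1-\delta}/(1-\delta)$ is the natural guess suggested by the ``time-homogeneity'' noted in Remark~\ref{rem2}; its convenience is the logarithmic identities $W^U_x/W^U = (1-\gamma)/x$, $W^U_{xx}/W^U = -\gamma(1-\gamma)/x^2$ (and the analogues for $W^F$ with $\delta$), so that every term produced by the generator \eqref{suanzi} carries the common factor $e^{-\alpha s}x^{1-\gamma}$ (resp. $e^{-\beta s}x^{1-\delta}$) and the spatial variable drops out after division.

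For the union's equation \eqref{hjbfc1} I would first observe that $\hat P$ and $\hat h_U$ enter separably: $\hat P$ appears only through $-\hat P\,W^U_x + e^{-\alpha s}\hat P^{1-\gamma}/(1-\gamma)$, while $\hat h_U$ appears only through $-\hat\pi^T\sigma \hat h_U\,W^U_x + (1-\gamma)\tfrac{\hat h_U^T\hat h_U}{2\lambda}W^U$, so $\sup_{\hat P}$ and $\inf_{\hat h_U}$ decouple. The inner minimization is a quadratic in $\hat h_U$ whose Hessian is proportional to $(1-\gamma)W^U = Ae^{-\alpha s}x^{1-\gamma}$; this is where $A>0$ enters, making it strictly convex so the first-order condition gives a genuine minimizer $\hat h_{U*} = \lambda\sigma^T\hat\pi\,W^U_x/\big((1-\gamma)W^U\big) = \lambda\sigma^T\hat\pi/x$, and inserting the equilibrium $\hat\pi_*$ together with the identity $\sigma^T\Sigma^{-1}(b-r\vec 1) = \theta$ yields \eqref{hu}. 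The outer maximization over $\hat P$ is strictly concave (second derivative $-\gamma e^{-\alpha s}\hat P^{-\gamma-1}<0$), and its first-order condition $e^{-\alpha s}\hat P^{-\gamma} = W^U_x$ gives $\hat P_* = A^{-1/\gamma}x$ as in \eqref{p}, positive precisely because $A>0$.

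The firm's equation \eqref{hjbcf2} is handled similarly, but here $\hat\pi$ and $\hat h_F$ are coupled through the bilinear term $-\hat\pi^T\sigma \hat h_F\,W^F_x$, so it is a genuine inf--sup problem. I would exploit that, under $B>0$ (so $(1-\delta)W^F = Be^{-\beta s}x^{1-\delta}>0$), the integrand is strictly convex in $\hat h_F$ (Hessian proportional to $(1-\delta)W^F$) and, after eliminating $\hat h_F$, strictly concave in $\hat\pi$ (its $\hat\pi$-Hessian being $\Sigma\,W^F_{xx} = -\delta(1-\delta)W^F\Sigma/x^2$), so computing the inner infimum and then the outer supremum both yield genuine extrema. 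Solving the inner problem gives $\hat h_F(\hat\pi) = \mu\sigma^T\hat\pi/x$; substituting and using $(\sigma^T\hat\pi)^T(\sigma^T\hat\pi) = \hat\pi^T\Sigma\hat\pi$ collapses the outer problem to the strictly concave maximization of $\hat\pi^T(b-r\vec 1) - \tfrac{\mu+\delta}{2x}\hat\pi^T\Sigma\hat\pi$, whose first-order condition gives $\hat\pi_* = \tfrac{1}{\mu+\delta}\Sigma^{-1}(b-r\vec 1)x$ as in \eqref{pi}, and hence $\hat h_{F*} = \tfrac{\mu}{\mu+\delta}\theta$ as in \eqref{hf}.

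Finally I would substitute all four optimizers back into \eqref{hjbfc1} and \eqref{hjbcf2}. Each equation then reduces, after dividing by the common factor $e^{-\alpha s}x^{1-\gamma}$ (resp. $e^{-\beta s}x^{1-\delta}$), to a scalar identity in which only $\theta^T\theta$ survives as a market parameter; collecting the constant and the $\theta^T\theta$ terms gives the algebraic equation for $A^{-1/\gamma}$, hence \eqref{A}, and then, inserting $\hat P_* = A^{-1/\gamma}x$ into the firm's equation, the algebraic equation for $B^{-1}$, hence \eqref{B}. A convenient feature I would point out is that the system decouples: $\hat\pi_*$ is fixed by the firm's equation alone, $A$ by the union's equation once $\hat\pi_*$ is known, and $B$ by the firm's equation once $\hat P_*$ is known, so no fixed-point argument is needed, and membership in $C^{1,2}(\mathbf R_+^2)$ is immediate for these power functions. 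The main obstacle --- and the reason the positivity hypotheses on $A$ and $B$ are imposed --- is not the algebra but the justification that the formal stationary points are the true $\sup$/$\inf$: the signs of $(1-\gamma)W^U$ and $(1-\delta)W^F$ govern the convexity/concavity of the static problems, and only when $A,B>0$ do the first-order conditions produce a genuine maximizer--minimizer (and a positive benefit $\hat P_*$) rather than a spurious critical point of the wrong type.
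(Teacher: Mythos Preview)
Your proposal is correct and follows essentially the same route as the paper: guess the separable power form, reduce each HJBI equation to a static optimization, solve by first-order conditions (using $A,B>0$ to guarantee the convexity/concavity that makes the stationary points genuine extrema), and then match coefficients to pin down $A$ and $B$. The only cosmetic difference is that the paper first separates time via $W^U(s,x)=e^{-\alpha s}W^U(x)$ and only later specializes to the power form, whereas you substitute the full ansatz from the outset; your observation that the system decouples ($\hat\pi_*$ from the firm's equation, then $A$, then $B$) is exactly how the paper's computation unfolds as well. One small slip: the $\hat\pi$-Hessian of the \emph{reduced} firm objective (after eliminating $\hat h_F$) is $-\tfrac{\mu+\delta}{x}W^F_x\,\Sigma$, not $\Sigma W^F_{xx}$ alone, but your subsequent formula $\hat\pi^T(b-r\vec 1)-\tfrac{\mu+\delta}{2x}\hat\pi^T\Sigma\hat\pi$ is correct, so this does not affect the argument.
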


\begin{proof}
See Appendix \ref{a:the3.1}.
\end{proof}
The results in Theorem \ref{the3.1} show that $\hat{h}_{U*}$ and $\hat{h}_{F*}$ are two constants while $\hat{P}_*$ and $\hat{\pi}_*$ are two  functions proportional to $x$. As such, we always use $\hat{P}_*\left(x\right)$ and $\hat{\pi}_*\left(x\right)$ to {denote} $\hat{P}_*$ and $\hat{\pi}_*$.
\vskip 5pt
We have shown the solutions of the HJBI equations in Theorem \ref{the3.1}. Next we show that the solution of the HJBI equations is just the solution of Problem (PI).
We need to prove that the  $C^{1,2}\left(\mathbf{R}_+^2\right)$ solutions  (\ref{jieu}) and (\ref{jief}) in Theorem \ref{the3.1} are exactly the value functions of Problem (PI) for the union and the firm. We also discuss the relationship between the robust equilibrium strategy and the maximum points given by Eqs.~(\ref{p}) and (\ref{pi}), the relationship between their respective worst case measures and the measures corresponding to the minimum points given by Eqs.~(\ref{hu}) and (\ref{hf}).
For simplicity,  denote $\pi\left(t\right)=\hat{\pi}_*\left(X\left(t\right)\right), P\left(t\right)=\hat{P}_*\left(X\left(t\right)\right)$, then we can prove the following lemma.
\begin{lemma}
	{Under  $\left(\pi,P\right)=\left(\left\{\hat{\pi}_*\left(X\left(t\right)\right): t\geq s\right\},\{\hat{P}_*\left(X\left(t\right)\right): t\geq s\}\right)$,} SDE~(\ref{yuanfangcheng00}) has a unique solution $X^*$.
	\label{pro3.0}
\end{lemma}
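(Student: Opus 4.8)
The plan is to substitute the explicit equilibrium controls obtained in Theorem \ref{the3.1} into the state dynamics (\ref{yuanfangcheng00}) and to observe that the resulting equation is a linear (geometric Brownian motion type) SDE with constant coefficients, for which classical strong existence and pathwise uniqueness apply directly. Throughout one works under the reference measure $\mathbb{P}$, where $W$ is a standard Brownian motion, since (\ref{yuanfangcheng00}) is the $\mathbb{P}$-dynamics of the surplus.

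First I would insert $\pi(t)=\hat{\pi}_*\left(X(t)\right)=\frac{1}{\mu+\delta}\Sigma^{-1}\left(b-r\vec{1}\right)X(t)$ from (\ref{pi}) and $P(t)=\hat{P}_*\left(X(t)\right)=A^{-\frac{1}{\gamma}}X(t)$ from (\ref{p}) into (\ref{yuanfangcheng00}). Using $\theta=\sigma^{-1}\left(b-r\vec{1}\right)$ together with $\Sigma=\sigma\sigma^T$, hence $\Sigma^{-1}=\left(\sigma^{-1}\right)^T\sigma^{-1}$ and $\Sigma^{-1}\sigma=\left(\sigma^{-1}\right)^T$, one simplifies $\pi^T(t)\left(b-r\vec{1}\right)=\frac{1}{\mu+\delta}\theta^T\theta\,X(t)$ and $\pi^T(t)\sigma=\frac{1}{\mu+\delta}X(t)\theta^T$. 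Therefore, under the candidate equilibrium, the surplus obeys
\[
\rd X(t)=\left(r+\tfrac{1}{\mu+\delta}\theta^T\theta-A^{-\frac{1}{\gamma}}\right)X(t)\,\rd t+\tfrac{1}{\mu+\delta}X(t)\,\theta^T\rd W(t),\qquad X(s)=x,
\]
which is a geometric Brownian motion with constant scalar drift coefficient and constant volatility vector. Because both the drift and the diffusion coefficients are linear in $X$, they are globally Lipschitz and of linear growth, so the classical strong existence and pathwise uniqueness theorem for SDEs (as in Chapter V of \cite{Ioannis1991Brownian}) yields a unique strong solution $X^*$.

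Equivalently, I would exhibit the solution explicitly as
\[
X^*(t)=x\exp\left\{\left(r+\tfrac{1}{\mu+\delta}\theta^T\theta-A^{-\frac{1}{\gamma}}-\tfrac{1}{2\left(\mu+\delta\right)^2}\theta^T\theta\right)(t-s)+\tfrac{1}{\mu+\delta}\theta^T\left(W(t)-W(s)\right)\right\},
\]
verify by Itô's formula that it solves the reduced SDE, and invoke uniqueness to conclude it is the only solution. The exponential form also shows $X^*(t)>0$ for all $t\ge s$, which is consistent with the positivity requirement in Definition \ref{def2.1} and will be needed when checking admissibility of $\left(\hat{\pi}_*,\hat{P}_*\right)$. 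There is no serious obstacle here: the only point requiring (routine) care is the algebraic reduction of $\pi^T\sigma$ and $\pi^T\left(b-r\vec{1}\right)$ via the identities $\Sigma^{-1}\sigma=\left(\sigma^{-1}\right)^T$ and $\left(\sigma^{-1}\right)^T\sigma^{-1}=\Sigma^{-1}$; once the equation is recognized as linear with constant coefficients, existence and uniqueness are immediate from the standard theory.
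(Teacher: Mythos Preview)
Your proposal is correct and follows essentially the same approach as the paper: substitute the equilibrium feedback controls into (\ref{yuanfangcheng00}), recognize the resulting equation as a geometric Brownian motion with constant coefficients, and write down the explicit exponential solution (the paper records this via an auxiliary lemma on GBM and displays the same closed-form expression under $\mathbb{P}$, $\mathbb{Q}^{h_U^*}$ and $\mathbb{Q}^{h_F^*}$). Your additional remarks on the Lipschitz/linear-growth justification and the positivity of $X^*$ are fine and entirely compatible with the paper's argument.
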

\begin{proof}
	See Appendix \ref{a:pro3.0}.
\end{proof}
 {Based on Lemma \ref{pro3.0}, we  obtain a pair of strategy $\left(\pi_*,P_*\right)$:\ $\pi_*=\left\{\hat{\pi}_*\left(X^*\left(t\right)\right): t\geq s\right \}$} and {$P_*=\{\hat{P}_*\left(X^*\left(t\right)\right): t\geq s\}$ and a pair of measure transformation process $\left(h_F^*,h_U^*\right)$ with $h_F^*\left(t\right)\equiv \hat{h}_{F*}$ and $h_U^*\left(t\right)\equiv \hat{h}_{U*}, t\ge s$.} Now we show that the strategy $\left(\pi_*,P_*\right)$  is admissible and the measure transformation processes $h_F^*$ and $h_U^*$ are admissible.
\begin{lemma}
	Suppose that the $A$ and $B$ in Theorem \ref{the3.1} are positive. Then we have  $\left(\pi_*,P_*\right)\in\Lambda$, $ h_U^*\in\mathcal{H}_U\left(\pi_*,P_*\right)$ and 	 $h_F^*\in\mathcal{H}_F\left(\pi_*,P_*\right)$.
	\label{pro3.2}
\end{lemma}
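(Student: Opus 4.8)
The plan is to exploit the fact that, once the explicit candidate strategy $(\pi_*,P_*)$ and the constant processes $(h_U^*,h_F^*)$ are substituted, the surplus process degenerates into a geometric Brownian motion, whose power moments are available in closed form; the positivity of $A$ and $B$ then translates directly into the negativity of the relevant exponential growth rates, which is what every integrability condition ultimately requires.

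First I would establish $(\pi_*,P_*)\in\Lambda$. By Lemma~\ref{pro3.0} the SDE~(\ref{yuanfangcheng00}) has a pathwise unique solution $X^*$ under this strategy; substituting $\hat\pi_*(x)=\frac{1}{\mu+\delta}\Sigma^{-1}(b-r\vec 1)x$ and $\hat P_*(x)=A^{-1/\gamma}x$ and using $\theta=\sigma^{-1}(b-r\vec 1)$, $\Sigma=\sigma\sigma^T$ (so that $(b-r\vec 1)^T\Sigma^{-1}(b-r\vec 1)=\theta^T\theta$ and $(b-r\vec 1)^T\Sigma^{-1}\sigma=\theta^T$), one finds that $X^*$ solves the homogeneous linear equation
\begin{equation*}
\rd X^*(t)=\Big(r+\tfrac{\theta^T\theta}{\mu+\delta}-A^{-\frac{1}{\gamma}}\Big)X^*(t)\,\rd t+\tfrac{1}{\mu+\delta}X^*(t)\theta^T\,\rd W(t),
\end{equation*}
i.e.\ a geometric Brownian motion. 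This gives $X^*(t)>0$ for all $t\ge s$, $\mathbb{P}$-a.s., and since $A>0$ forces $A^{-1/\gamma}>0$ we have $\hat P_*(X^*)>0$, so the requirements of Definition~\ref{def2.1} hold.

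Next, for the measure-transformation processes I would invoke Girsanov's Theorem. Because $h_U^*\equiv\frac{\lambda}{\mu+\delta}\theta$ and $h_F^*\equiv\frac{\mu}{\mu+\delta}\theta$ are deterministic constants, Condition~(i) (Novikov) of Definitions~\ref{def2.12} and~\ref{def2.2} holds trivially, since $\int_s^T\frac{|h^*|^2}{2}\rd t$ is deterministic and finite. Under $\mathbb{Q}^{h_U^*}$ (resp.\ $\mathbb{Q}^{h_F^*}$), substituting the candidate strategy into SDE~(\ref{xinfangcheng0}) again produces a geometric Brownian motion, now with drift shifted by the deterministic multiple of $\theta^T\theta$ coming from the $-\pi_*^T\sigma h^*$ term. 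Consequently every quantity in Conditions~(ii)--(iv) is a power of a geometric Brownian motion, whose expectation is explicit: for a GBM with drift $m$ and volatility row $\nu$, $\mathbb{E}^{h}[X(t)^{p}]=x^{p}\exp\{[pm+\tfrac12 p(p-1)|\nu|^2](t-s)\}$.

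The crux, and the main obstacle, is to show these exponential rates are strictly negative. For the union, Conditions~(ii) and~(iii) of Definition~\ref{def2.12} both reduce to the single inequality
\begin{equation*}
-\alpha+(1-\gamma)m_U-\tfrac12\gamma(1-\gamma)|\nu_U|^2<0,
\end{equation*}
where $m_U,\nu_U$ are the drift and volatility of $X$ under $\mathbb{Q}^{h_U^*}$. I would insert the explicit value $A^{-\frac{1}{\gamma}}=\frac{\alpha}{\gamma}-\frac{1-\gamma}{\gamma}\big(r+(\frac{1}{\mu+\delta}-\frac{\lambda+\gamma}{2(\mu+\delta)^2})\theta^T\theta\big)$ read off from~(\ref{A}) and simplify; the assumption $A>0$ is precisely what drives this rate below zero. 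The analogous computation with $B$ from~(\ref{B}) settles the firm's Conditions~(ii)--(iii) in Definition~\ref{def2.2}. Finally, Condition~(iv) is a companion estimate: since $\pi_*^T\pi_*$ is a constant multiple of $(X^*)^2$, the integrand $(X^*)^{-2\gamma}\pi_*^T\pi_*$ (resp.\ $(X^*)^{-2\delta}\pi_*^T\pi_*$) is again a power of the same GBM, so $\mathbb{E}^{h}\big[\int_s^T(X^*)^{-2\gamma}\pi_*^T\pi_*\,\rd t\big]$ is a finite integral of an exponential over the bounded interval $[s,T]$, hence finite for every $T\ge s$. Care is needed to carry all constants so that the algebra genuinely collapses onto the signs of $A$ and $B$; this bookkeeping, rather than any conceptual difficulty, is where the bulk of the work lies.
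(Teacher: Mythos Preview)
Your proposal is correct and follows essentially the same route as the paper: both identify that under $(\pi_*,P_*)$ the surplus is a geometric Brownian motion (under $\mathbb{P}$ and under each $\mathbb{Q}^{h^*}$), note that the constant $h^*$'s make Novikov trivial, compute the explicit power moments $\E^{h^*}[X^*(t)^{1-\gamma}]$ and $\E^{h^*}[X^*(t)^{1-\delta}]$, and observe that the resulting exponential-rate inequalities are exactly the conditions $A>0$ and $B>0$, while Condition~(iv) reduces to integrating $C_1e^{C_2(t-s)}$ over a bounded interval. The only cosmetic difference is that the paper packages the moment formula as a standalone lemma on GBM before invoking it.
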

\begin{proof}
	See Appendix \ref{a:pro3.2}.
\end{proof}
{Based on Lemma \ref{pro3.2} and Theorem \ref{the3.1},  we only need to show that $W^U$, $W^F$,
$\left(\pi_*,P_*\right)$ and $(h_F^*, h_U^*) $ are the solutions of Problem (PI) in the following.}
\begin{theorem}[Verification theorem]
Suppose that the $A$ and $B$ in Theorem \ref{the3.1} are positive. Then $W^U$ and $W^F$ in Theorem \ref{the3.1} are the value functions of the union and firm, respectively.  Besides, for the robust stochastic control problem $\left(PI\right)$, $\left(\pi_*,P_*\right)$ is the robust equilibrium strategy, $\left(h_F^*, h_U^*\right)$ is  the worst case measure transformation process of the firm and the union, respectively.
{In addition}, the corresponding optimal wealth process under different probability measures $\mathbb{P}$, $\mathbb{Q}^{h_U^*}$ and $\mathbb{Q}^{h_F^*}$ can be expressed {by}

{{\begin{eqnarray}\label{zuiyoucaifu}
X^*(t)=&\!\!\!\!\!\!\!\!\!\!\!\!\!\!\!\!\!\!\!\!\!\!\!\!\!\!\!
x\exp\left\{\left(r+\frac{\theta^T\theta}
{\mu+\delta}-A^{-\frac{1}{\gamma}}-
\frac{\theta^T\theta}{2\left(\mu+
\delta\right)^2}\right)\left(t-s\right)\!\!
+\!\!\frac{1}{\mu+\delta}\theta^T\left(W(t)\!\!-\!\!W(s)\right)\right\}
				\nonumber\\
\!=&x\!\exp\!\left\{\!\left(\!r\!+\!
				\frac{\theta^T\theta}{\mu\!+\!\delta}\!-\!
				A^{-\frac{1}{\gamma}}\!\!-\!\!\theta^T\!
				\theta\frac{\lambda}{\left(\mu\!+\!\delta\right)^2}
				\!-\!\frac{\theta^T\theta}{2\left(\mu\! +\!\delta\right)^2}\right)\!\left(t-s\right)\!+\!
\frac{1}{\mu+\delta}\theta^T\!\left(W^{h_U^*}(t)\!\!-\!\!W^{h_U^*}(s)\!\right)\!\right\}
				\nonumber\\
				\!=&x\!\exp\!\left\{\!\left(\!r\!+\!\frac{\theta^T\theta}
{\mu\!+\!\delta}\!-\!A^{-\frac{1}{\gamma}}\!\!-\!\!\theta^T
\theta\frac{\mu}{\left(\mu\!+\!\delta\right)^2}\!\!-\!\!
\frac{\theta^T\theta}{2\left(\mu\!+\!\delta\right)^2}\right)
\left(t\!-\!s\right)+\frac{1}{\mu\!+\!\delta}\theta^T
\left(W^{h_F^*}(t)\!\!-\!\!W^{h_F^*}\!(s)\!\right)\!\right\}.
	\end{eqnarray}}}
	\label{the3.2}
\end{theorem}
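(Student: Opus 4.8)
The plan is to run a saddle-point verification argument separately for the union (with the firm's strategy frozen at $\pi_*$) and for the firm (with the union's benefit frozen at $P_*$), and then to solve the resulting closed-loop equation explicitly. First I would record the consequence of Theorem \ref{the3.1} that the maximizers and minimizers computed there furnish a genuine saddle point of the integrand Hamiltonians: writing $G^U(s,x,P,h_U):=\mathcal{A}^{\pi_*,P,h_U}W^U(s,x)+\Phi^U(s,x,\pi_*,P,h_U,W^U)$, the explicit points $\hat P_*,\hat h_{U*}$ give $G^U(s,x,P,\hat h_{U*})\le G^U(s,x,\hat P_*,\hat h_{U*})=0\le G^U(s,x,\hat P_*,h_U)$ for all $P,h_U$, and analogously for the firm's $G^F$ with $\hat\pi_*,\hat h_{F*}$. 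Here I would also make the Maenhout penalty unambiguous: invoking the time-homogeneity of Remark \ref{rem2}, I fix the scaling $\varphi_U$ through the candidate value function $W^U$ (so that $\varphi_U=\lambda/[(1-\gamma)W^U]$ and $e^{-\alpha t}\tfrac12 h_U^Th_U/\varphi_U$ is exactly the penalty appearing in $\Phi^U$), and likewise $\varphi_F$ through $W^F$; this is what makes the running reward-plus-penalty in $J_U,J_F$ coincide termwise with $\Phi^U,\Phi^F$ along trajectories. I would also note that the pointwise minimizer over $h_U$ is the constant $\hat h_{U*}$, independent of $P$ (only the frozen $\pi_*$ enters the $h_U$-dependent terms), so the same worst-case candidate serves for every competing $P$ in the upper-bound half.

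The core estimate is a Dynkin computation. For an admissible $(\pi_*,P)$ and an admissible $h_U\in\mathcal{H}_U(\pi_*,P)$, I apply It\^o's formula to $W^U(t,X(t))$ under $\mathbb{Q}^{h_U}$ and integrate from $s$ to $T$; condition (iv) of Definition \ref{def2.12} guarantees that the $\rd W^{h_U}$-term is a true martingale, so its $\E_{s,x}^{h_U}$-expectation vanishes and no localization is needed. This yields
\[
W^U(s,x)=\E_{s,x}^{h_U}\!\left[W^U(T,X(T))\right]-\E_{s,x}^{h_U}\!\left[\int_s^T\mathcal{A}^{\pi_*,P,h_U}W^U(t,X(t))\,\rd t\right].
\]
Substituting $\mathcal{A}^{\pi_*,P,h_U}W^U=G^U-\Phi^U$ together with the two saddle inequalities gives, at $P=P_*$, the bound $W^U(s,x)\le \E_{s,x}^{h_U}[W^U(T,X(T))]+\E_{s,x}^{h_U}[\int_s^T\Phi^U\,\rd t]$ for every $h_U$, and at $h_U=\hat h_{U*}$ the reverse bound for every $P$. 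Letting $T\to\infty$, the transversality condition (iii) kills the boundary term $\E_{s,x}^{h_U}[W^U(T,X(T))]$ (this is where the CRRA sign of $W^U$ is controlled by the signed limit in (iii)), while condition (ii) secures absolute integrability of the running term, so the remaining integral converges to $J_U(s,x,\pi_*,P,h_U)$ by monotone/dominated convergence. The two limits then read $\inf_{h_U}J_U(s,x,\pi_*,P_*,h_U)=W^U(s,x)=J_U(s,x,\pi_*,P_*,\hat h_{U*})$ and $\inf_{h_U}J_U(s,x,\pi_*,P,h_U)\le W^U(s,x)$ for every admissible $P$, whence $V_U^{\pi_*}=W^U$ with the outer sup attained at $P_*$ and the inner inf at $h_U^*$. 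The identical sup-inf argument applied to $W^F$, $G^F$ (freezing $P=P_*$, using conditions (ii)--(iv) of Definition \ref{def2.2}) identifies $V_F^{P_*}=W^F$ with equilibrium point $(\pi_*,h_F^*)$. By Definition \ref{definition of value function} this is precisely the claim that $(\pi_*,P_*)$ is the robust equilibrium strategy and $(h_F^*,h_U^*)$ the worst-case measure transformation processes.

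Finally, for the explicit wealth process I substitute $\pi_*(x)=\frac{1}{\mu+\delta}\Sigma^{-1}(b-r\vec{1})x$ and $P_*(x)=A^{-1/\gamma}x$ into SDE~(\ref{yuanfangcheng00}); using $\theta=\sigma^{-1}(b-r\vec{1})$ and $\Sigma=\sigma\sigma^T$ the drift collapses to $(r+\theta^T\theta/(\mu+\delta)-A^{-1/\gamma})X$ and the diffusion to $\tfrac{1}{\mu+\delta}\theta^T X\,\rd W$, a geometric Brownian motion whose unique solution (Lemma \ref{pro3.0}) is the first line of (\ref{zuiyoucaifu}). The other two representations follow by Girsanov: writing $\rd W=\rd W^{h_U^*}-h_U^*\,\rd t$ with $h_U^*\equiv\frac{\lambda}{\mu+\delta}\theta$, and $\rd W=\rd W^{h_F^*}-h_F^*\,\rd t$ with $h_F^*\equiv\frac{\mu}{\mu+\delta}\theta$, and absorbing the extra drift into the exponent. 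I expect the main obstacle to be the rigorous infinite-horizon passage intertwined with the self-referential Maenhout penalty: one must justify that the value-function-based choice of $\varphi_U,\varphi_F$ is consistent with their functional definitions at the optimum, and that the boundary and stochastic-integral terms genuinely vanish under the tailored admissibility conditions (iii)--(iv) rather than the usual uniform-integrability and localization machinery, exactly as flagged in Remark \ref{remh2}.
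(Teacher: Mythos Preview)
Your proposal is correct and follows essentially the same route as the paper: the paper packages your saddle inequalities $G^U\le 0\le G^U$ and $G^F\le 0\le G^F$ into a separate Lemma \ref{pro3.3} (properties (i)--(iii)), then runs exactly your It\^o/Dynkin computation, uses condition (iv) to make the stochastic integral a true martingale, and passes $T\to\infty$ via condition (iii) and what it calls the ``integral expansion theorem'' (your monotone/dominated step on the constant-sign integrand). One small point you handle more carefully than the paper is the self-referential Maenhout scaling: the paper silently identifies $\varphi_U,\varphi_F$ with the candidate-value-function versions, whereas you make this explicit; conversely, the paper is slightly more explicit than you in restricting the competing $P$ to those with $h_U^*\in\mathcal{H}_U(\pi_*,P)$ when running the upper-bound half.
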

\begin{remark}$\left(\hat{\pi}_*\left(x\right),\hat{P}_*\left(x\right)\right)$ and  $\left(\hat{h}_{F*},\hat{h}_{U*}\right)$ are the optimal feedback functions and the worst case measure transformation parameters of the firm and the union, respectively.
\end{remark}
Before proving Theorem \ref{the3.2}, we present the following lemma.
\begin{lemma}
	{Suppose that the $A$ and $B$ in Theorem \ref{the3.1} are positive, and the functions $W^U\left(\cdot,\cdot\right)$ and $W^F\left(\cdot,\cdot\right)$  in Theorem \ref{the3.1} belong to  $ C^{1,2}\left(\mathbf{R}_+^2\right)$.} Then
	\begin{enumerate}
		\item[(i)]$\forall \hat{h}_U\in\mathbf{R},\quad\forall \hat{h}_F\in\mathbf{R},\quad\forall\left(t,y\right)\in\mathbf{R}_+^2,$
		\begin{eqnarray}
			&&\mathcal{A}^{\hat{\pi}_*\left(y\right),\hat{P}_*\left(y\right),\hat{h}_U}W^U\left(t,y\right)+\Phi^U\left(t,y,\hat{\pi}_*\left(y\right),\hat{P}_*\left(y\right),\hat{h}_U,W^U\right)\ge 0,\nonumber\\
			&&\mathcal{A}^{\hat{\pi}_*\left(y\right),\hat{P}_*\left(y\right),\hat{h}_F}W^F\left(t,y\right)+\Phi^F\left(t,y,\hat{\pi}_*\left(y\right),\hat{P}_*\left(y\right),\hat{h}_F,W^F\right)\ge 0.\nonumber
		\end{eqnarray}
		\item[(ii)]$\forall \hat{P}\in\mathbf{R},\quad\forall\hat{\pi}\in\mathbf{R},\quad\forall\left(t,y\right)\in\mathbf{R}_+^2,$
		\begin{eqnarray}
			&&\mathcal{A}^{\hat{\pi}_*\left(y\right),\hat{P},\hat{h}_{U*}}W^U\left(t,y\right)+\Phi^U\left(t,y,\hat{\pi}_*\left(y\right),\hat{P},\hat{h}_{U*},W^U\right)\le 0,\nonumber\\
			&&\mathcal{A}^{\hat{\pi},\hat{P}_*\left(y\right),\hat{h}_{F*}}W^F\left(t,y\right)+\Phi^F\left(t,y,\hat{\pi},\hat{P}_*\left(y\right),\hat{h}_{F*},W^F\right)\le 0.\nonumber
		\end{eqnarray}
		\item[(iii)]$\forall
		\left(t,y\right)\in\mathbf{R}_+\times\left(0,\infty\right),$
		\begin{eqnarray}
			&&\mathcal{A}^{\hat{\pi}_*\left(y\right),\hat{P}_*\left(y\right),\hat{h}_{U*}}W^U\left(t,y\right)+\Phi^U\left(t,y,\hat{\pi}_*\left(y\right),\hat{P}_*\left(y\right),\hat{h}_{U*},W^U\right)= 0,\nonumber\\
			&&\mathcal{A}^{\hat{\pi}_*\left(y\right),\hat{P}_*\left(y\right),\hat{h}_{F*}}W^F\left(t,y\right)+\Phi^F\left(t,y,\hat{\pi}_*\left(y\right),\hat{P}_*\left(y\right),\hat{h}_{F*},W^F\right)= 0,\nonumber
		\end{eqnarray}
		\item[(iv)]$\forall\left(\pi,P\right)\in\Lambda, h_U\in\mathcal{H}_U\left(\pi,P\right),h_F\in\mathcal{H}_F\left(\pi,P\right),$
		\begin{eqnarray}
			&\lim_{T\rightarrow+\infty}E_{s,x}^{h_U}W^U\left(T,X\left(T\right)\right)=\lim_{T\rightarrow+\infty}E_{s,x}^{h_F}W^F\left(T,X\left(T\right)\right)=0,\nonumber
		\end{eqnarray}
where $X$ is the unique solution of SDE~(\ref{yuanfangcheng00}).
	\end{enumerate}
	\label{pro3.3}
\end{lemma}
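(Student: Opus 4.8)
The plan is to treat the four items as the pointwise ingredients of a max--min verification argument, exploiting that both candidate value functions are power functions so that every optimization collapses to either a one-dimensional concavity statement or a convex quadratic in the measure-change variable. First I would record the derivatives $W^U_s=-\alpha W^U$, $W^U_x=Ae^{-\alpha s}x^{-\gamma}$, $W^U_{xx}=-\gamma Ae^{-\alpha s}x^{-\gamma-1}$ and their firm analogues for $W^F$, together with the identities $\sigma^T\hat{\pi}_*(x)=\frac{x}{\mu+\delta}\theta$, $\hat{\pi}_*^T(b-r\vec{1})=\frac{x}{\mu+\delta}\theta^T\theta$ and $\hat{\pi}_*^T\Sigma\hat{\pi}_*=\frac{x^2}{(\mu+\delta)^2}\theta^T\theta$, all immediate from $\theta=\sigma^{-1}(b-r\vec{1})$ and $\Sigma=\sigma\sigma^T$. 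These reduce $\mathcal{A}^{\hat{\pi}_*,\hat{P}_*,h}W^U+\Phi^U$ and its firm counterpart to explicit scalar expressions in $x$.

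For item (iii) I would substitute the ansatz directly: evaluated at $(\hat{\pi}_*,\hat{P}_*,\hat{h}_{U*})$ every term becomes proportional to $e^{-\alpha s}x^{1-\gamma}$, and the vanishing of the common coefficient is precisely the defining relation (\ref{A}) for $A$; the firm equation collapses to relation (\ref{B}) for $B$ in the same way. Thus (iii) holds by the very construction of $A$ and $B$ in Theorem \ref{the3.1}, and it is this identity that anchors the inequalities in (i) and (ii).

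Items (i) and (ii) are then the two halves of the saddle inequality. For (i), I would fix $(\hat{\pi}_*,\hat{P}_*)$ and view the left side as a function of the measure-change variable. The quadratic coefficient is $(1-\gamma)\frac{1}{2\lambda}W^U=\frac{1}{2\lambda}Ae^{-\alpha s}x^{1-\gamma}>0$ exactly because $A>0$ (for the firm, $(1-\delta)\frac{1}{2\mu}W^F>0$ because $B>0$); the map is strictly convex, its first-order condition returns $\hat{h}_{U*}=\frac{\lambda}{\mu+\delta}\theta$ (resp. $\hat{h}_{F*}=\frac{\mu}{\mu+\delta}\theta$), and by (iii) the minimum value is $0$, so the expression is $\ge 0$ for every admissible $h$. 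For (ii), I would fix $(\hat{\pi}_*,\hat{h}_{U*})$ and regard the left side as a function of $P>0$; the only $P$-dependence is $-PW^U_x+e^{-\alpha s}\frac{P^{1-\gamma}}{1-\gamma}$, which is strictly concave with second derivative $-\gamma e^{-\alpha s}P^{-\gamma-1}<0$ and maximizer $\hat{P}_*=A^{-\frac{1}{\gamma}}x$, whence the expression is $\le 0$. The firm half is identical with $P$ replaced by $\pi$: fixing $(\hat{P}_*,\hat{h}_{F*})$, the $\pi$-dependent part is the quadratic $\pi^T(b-r\vec{1}-\sigma\hat{h}_{F*})W^F_x+\frac{1}{2}\pi^T\Sigma\pi\,W^F_{xx}$, whose Hessian $\Sigma\,W^F_{xx}$ is negative definite because $W^F_{xx}<0$ (again using $B>0$), so it is concave with maximizer $\hat{\pi}_*$ and the expression is $\le 0$.

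Item (iv) is then immediate: since $W^U(T,X(T))=Ae^{-\alpha T}\frac{X(T)^{1-\gamma}}{1-\gamma}$, the limit $\lim_{T\to\infty}\E_{s,x}^{h_U}W^U(T,X(T))=0$ is merely $A$ times condition (iii) of Definition \ref{def2.12}, and the firm statement follows the same way from condition (iii) of Definition \ref{def2.2}. The main obstacle I expect is the sign bookkeeping rather than any analytic depth: because $\gamma$ and $\delta$ may lie on either side of $1$, neither the convexity in $h$ nor the concavity in $(\pi,P)$ is automatic. The resolution is to track the products $(1-\gamma)W^U=Ae^{-\alpha s}x^{1-\gamma}$ and $(1-\delta)W^F=Be^{-\beta s}x^{1-\delta}$ instead of $W^U,W^F$ themselves; their positivity, which is exactly the standing hypotheses $A>0$ and $B>0$, is precisely what makes the quadratic-in-$h$ coefficients positive and forces $W^F_{xx}<0$, thereby upgrading the formal HJBI saddle point of Proposition \ref{pro3.1} into the genuine inequalities (i)--(ii). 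Everything else is routine differentiation and the substitution already carried out in proving Theorem \ref{the3.1}.
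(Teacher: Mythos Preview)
Your proposal is correct and follows essentially the same route as the paper's proof: properties (i) and (iii) come from the convex-quadratic structure in $h$ built into the HJBI equations, (ii) from concavity in $P$ (union) and in $\pi$ (firm, via $W^F_{xx}<0$), and (iv) is a direct restatement of the transversality conditions in Definitions \ref{def2.12} and \ref{def2.2}. You are more explicit than the paper about the sign bookkeeping through $(1-\gamma)W^U>0$ and $(1-\delta)W^F>0$, but the underlying argument is identical.
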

\begin{proof}
	See Appendix \ref{a:pro3.3}.
\end{proof}
Next, {we return to prove Theorem \ref{the3.2}   and show} the optimality of the solutions given in Theorem \ref{the3.1}.
\begin{proof}
	$\forall\left(\pi,P\right)\in\Lambda$, $h_U\in\mathcal{H}_U$, let $X$ be the unique solution of SDE (\ref{yuanfangcheng00}). {Using} It\^{o}'s formula, we have
	{\begin{equation}
			W^U\left(T\!,\!X(T)\right)\!=\! W^U\left(s\!,\!X(s)\right)\!\!+\!\!\int_{s}^{T}\mathcal{A}^{\pi,P,h_U}W^U\left(t,X(t)\right)\rd t\!\!+\!\int_{s}^{T}\!W_x^U\left(t,X(t)\right)\!\pi\left(t\right)^T\!
			\sigma\rd W^{h_U}(t).\label{wdif}
	\end{equation}}
	As $\Sigma$ is positive definite, {$\exists$ $ C\geq 0$,$\;$ s.t.$\;$} $u^T\Sigma u\le Cu^Tu$, $\forall u\in \mathbf{R}^n$. {In addition, using}  $W_x^U\left(t,X(t)\right)=e^{-\alpha t}X(t)^{-\gamma}$, we know that
	\begin{equation}\nonumber
		\E_{s,x}^{h_U}\int_{s}^{T}\left(W_x^U\left(t,X(t)\right)\right)^2\pi\left(t\right)^T\Sigma\pi\left(t\right)\rd t<\infty,\ \ \forall T>s,
	\end{equation}
	is equivalent to Condition (iv) of Definition \ref{def2.2}. {As such, \\ $\left\{\int_{s}^{T}W_x^U\left(t,X(t)\right)\pi\left(t\right)^T\sigma \rd W^{h_U}(t): T\geq s \right\}$ is a $\mathbb{Q}^{h_U}$ martingale. Taking} expectation on both sides of Eq.~(\ref{wdif}), we have 
	\begin{equation}
		\E_{s,x}^{h_U}W^U\left(T,X(T)\right)=W^U\left(s,x\right)+\E_{s,x}^{h_U}\int_{s}^{T}\mathcal{A}^{\pi\left(t\right),
			P\left(t\right),h_U\left(t\right)}W^U\left(t,X(t)\right)\rd s.
		\label{th36}
	\end{equation}
 {Using property (i) of Lemma \ref{pro3.3},  $\pi_*\left(t\right)=\hat{\pi}_*\left(X(t)\right)$ and $P_*\left(t\right)=\hat{P}_*\left(X(t)\right)$, we have that}
	$\forall \ h_U\in \mathcal{H}_U\left(\pi_*,P_*\right)$,
{\begin{equation}
\!\!\!\!	\!\!\!\!	W^U\left(s,x\right)\!\le \! \E_{s,x}^{h_U}\int_{s}^{T}\Phi^U\left(t,X(t),\pi_*\left(t\right),P_*\left(t\right),h_U\left(t\right),W^U\right)\rd t\!+\!\E_{s,x}^{h_U}W^U\left(T,X(T)\right).
		\label{wwww}
\end{equation}}
{Applying property (iv) in Lemma \ref{pro3.3} to the last term of Eq.~(\ref{wwww}),
noticing that one term of $\Phi^U$ is non-negative while the sign of the other term is constant, based on the integral expansion theorem, letting $T\rightarrow+\infty$, we obtain
	\begin{equation}
		W^U\left(s,x\right)\le J_U\left(s,x,\pi_*,P_*,h_U\right).
		\label{budengsji1}
	\end{equation}}
{Noting that Ineq.~(\ref{budengsji1}) holds for any $h_U\in \mathcal{H}_U\left(\pi_*,P_*\right)$, we have}
	\begin{equation*}
		W^U\left(s,x\right)\le \inf_{h_U\in \mathcal{H}_U\left(\pi_*,P_*\right)}J_U\left(s,x,\pi_*,P_*,h_U\right).
	\end{equation*}
	Thus,
	\begin{equation}\label{21}
		W^U\left(s,x\right)\le \sup_{P:\left(\pi_*,P\right)\in\Lambda}\inf_{h_U\in \mathcal{H}_U\left(\pi_*,P\right)}J_U\left(s,x,\pi_*,P,h_U\right)=V_U^{\pi_*}\left(s,x\right).
	\end{equation}
	On the other hand, applying property (ii) in Lemma \ref{pro3.3} to Eq.~(\ref{th36}), we have that:
	for $(\pi_*,h_U^*)$, $\forall P$ such that $\left(\pi_*,P\right)\in\Lambda$ and $ h_U^*\in\mathcal{H}_U\left(\pi_*,P\right)$,
	{	\begin{equation*}
			W^U\left(s,x\right)\ge \E_{s,x}^{h_U^*}\int_{s}^{T}\Phi^U\left(t,X(t),\pi_*\left(t\right),P\left(t\right),h_U^*\left(t\right),W^U\left(t,X(t)\right)\right)\rd s+\E_{s,x}^{h_U^*}W^U\left(T,X(T)\right).
	\end{equation*}}
	Letting $T\rightarrow+\infty$, we have
\begin{equation}
		W^U\left(s,x\right)\ge J_U\left(s,x,\pi_*,P,h_U^*\right)\ge \inf_{h_U\in \mathcal{H}_U\left(\pi_*,P\right)}J_U\left(s,x,\pi_*,P,h_U\right).
		\label{budengshi2}
\end{equation}
	Thus,
\begin{equation}\label{22}
		W^U\left(s,x\right)\ge \sup_{P:\left(\pi_*,P\right)\in\Lambda}\inf_{h_U\in \mathcal{H}_U\left(\pi_*,P\right)}J_U\left(s,x,\pi_*,P,h_U\right)=V_U^{\pi_*}\left(s,x\right).
\end{equation}
{Applying property (iii) in Lemma \ref{pro3.3} to Eq.~(\ref{th36}), using Ineqs.~(\ref{21})-(\ref{22}) and letting $T\rightarrow+\infty$, we have}
	\begin{equation*}
		V_U^{\pi_*}\left(s,x\right)=W^U\left(s,x\right)= J_U\left(s,x,\pi_*,P_*,h_U^*\right).
	\end{equation*}
	On the one hand,
	\begin{eqnarray}\label{23}
		\sup_{P:\left(\pi_*,P\right)\in\Lambda}\inf_{h_U\in \mathcal{H}_U\left(\pi_*,P\right)}J_U\left(s,x,\pi_*,P,h_U\right)\le \inf_{h_U}\sup_{\substack{P:\left(\pi_*,P\right)\in\Lambda\\h_U\in\mathcal{H}_U\left(\pi_*,P\right)}}J_U\left(s,x,\pi_*,P,h_U\right).
	\end{eqnarray}
	On the other hand, using Ineqs.~$\left(\ref{budengsji1}\right)$ and $\left(\ref{budengshi2}\right)$,
	\begin{eqnarray*}
		&&\inf_{h_U}\sup_{\substack{P:\left(\pi_*,P\right)\in\Lambda\\h_U\in\mathcal{H}_U\left(\pi_*,P\right)}}J_U\left(s,x,\pi_*,P,h_U\right)\le \sup_{\substack{P:\left(\pi_*,P\right)\in\Lambda\\h_U^*\in\mathcal{H}_U\left(\pi_*,P\right)}}J_U\left(s,x,\pi_*,P_*,h_U\right)\le W^U\left(s,x\right)\nonumber
		\\
		&\le& \inf_{h_U\in \mathcal{H}_U\left(\pi_*,P_*\right)}J_U\left(s,x,\pi_*,P_*,h_U\right) \le\sup_{P:\left(\pi_*,P\right)\in\Lambda}\inf_{h_U\in \mathcal{H}_U\left(\pi_*,P\right)}J_U\left(s,x,\pi_*,P,h_U\right).
	\end{eqnarray*}
{Thus, using Ineq.~(\ref{23}), we obtain}
	\begin{eqnarray}
		&&\inf_{h_U\in \mathcal{H}_U\left(\pi_*,P_*\right)}J_U\left(s,x,\pi_*,P_*,h_U\right)
		=\sup_{P:\left(\pi_*,P\right)\in\Lambda}\inf_{h_U\in \mathcal{H}_U\left(\pi_*,P\right)}J_U\left(s,x,\pi_*,P,h_U\right) \nonumber
		\\&=&V_U^{\pi_*}\left(s,x\right)=W^U\left(s,x\right)= J_U\left(s,x,\pi_*,P_*,h_U^*\right).\nonumber
	\end{eqnarray}
Similarly,
	\begin{eqnarray}
		&&\inf_{h_F\in \mathcal{H}_F\left(\pi_*,P_*\right)}J_F\left(s,x,\pi_*,P_*,h_F\right)=\sup_{\pi:\left(\pi,P_*\right)\in\Lambda}\inf_{h_F\in \mathcal{H}_F\left(\pi,P_*\right)}J_F\left(s,x,\pi,P_*,h_F\right) \nonumber
		\\&=&V_F^{P_*}\left(s,x\right)=W^F\left(s,x\right)= J_F\left(s,x,\pi_*,P_*,h_F^*\right).\nonumber
	\end{eqnarray}
{Therefore}, $W^U\left(\cdot,\cdot\right)=V_U\left(\cdot,\cdot\right),W^F\left(\cdot,\cdot\right)=V_F\left(\cdot,\cdot\right)$ are the value functions of the union and the firm. {In addition,}
	the strategies {$P_*$ and $\pi_*$} are the robust equilibrium strategies of  the union and the firm, respectively. Moreover, $h_U^*$ and $h_F^*$ are the worst case measure transformation processes of the union and the firm{, respectively.} The explicit form of the optimal wealth process is shown in the proof of Lemma \ref{pro3.0}.
\end{proof}
\subsubsection{\bf Static analysis of the equilibrium strategy}\label{RA}
Theorem \ref{the3.2} presents the optimal solution of Problem (PI). We are most concerned with the optimal feedback functions (in other words, the robust equilibrium strategies): the union's optimal feedback function $\hat{P}_*\left(x\right)=A^{-\frac{1}{\gamma}}x$ and the firm's optimal feedback function $\hat{\pi}_*\left(x\right)=\frac{1}{\mu+\delta}\Sigma^{-1}\left(b-r\vec{1}\right)x$. As in \cite{Josa2019Equilibrium}, the functions are all proportional functions of the fund surplus $x$, which can be  easily applied in practice.
\vskip 5pt
In reality, the impacts of different parameters on the optimal feedback functions are important, which can provide a guidance for the firm and the union to adjust their strategies. Next, we show the economic behaviours of the firm and the union explicitly and reveal the economic explanations to illustrate the rationality of our robust equilibrium strategy.
\vskip 5pt
Eq.~(\ref{pi}) for the robust investment strategy has a similar form as in \cite{Maenhout2004Robust}. $\hat{\pi}_*$ is an inverse proportional function of the volatility and $\mu+\delta$ and is proportional to the Sharpe ratio, which coincides with the Merton's line. When considering ambiguity, the firm becomes more conservative and decreases the risky allocation. By comparing Eq.~(\ref{pi}) and the results in \cite{Josa2019Equilibrium} ignoring ambiguity, the risk aversion parameter is modified to add the term of the ambiguity aversion parameter {${\mu} $.} Besides, we  {see} that the parameters of the union do not affect $\hat{\pi}_*$, i.e., the optimal feedback function of the firm only relies on its own risk aversion  and ambiguity aversion coefficients.
\vskip 5pt

The robust equilibrium  benefit $\hat{P}_*$ is given in Eq.~(\ref{p}), which includes a parameter $A$ given in Eq.~(\ref{A}). As the form of $A$ is complicated, the impacts of different parameters on $\hat{P}_*$  are not distinct. {We calculate the derivatives of the benefit ratio}
\begin{eqnarray*}
	\frac{\hat{P}_*}{x}=A^{-\frac{1}{\gamma}}=\frac{\alpha}{\gamma}-\frac{1-\gamma}{\gamma}\left(r+\left(\frac{1}{\mu+\delta}-\frac{\lambda+\gamma}{2\left(\mu+\delta\right)^2}\right)\theta^T\theta\right)
\end{eqnarray*}
with respect to the time preference of the union $\alpha$, risk-free interest rate $r$, Sharpe ratio $\theta$, the risk aversion coefficient of the firm $\delta$ and the ambiguity aversion coefficient of the firm $\mu$, the risk aversion coefficient of the union $\gamma$ and the aversion coefficient of the union  $\lambda$  explicitly to show the effects. We {see} that although  $\hat{\pi}_*$ does not depend on the union's aversion parameters, $\frac{\hat{P}_*}{x}$ is influenced by the firm's aversion parameters. The first-order derivatives are calculated as follows:
\begin{eqnarray}
	\frac{\partial A^{-\frac{1}{\gamma}}}{\partial\alpha}&=&\frac{1}{\gamma},
	\label{piandao1}\\
	\frac{\partial A^{-\frac{1}{\gamma}}}{\partial r}&=&-\frac{1-\gamma}{\gamma},
	\label{piandao2}\\
	\frac{\partial A^{-\frac{1}{\gamma}}}{\partial\theta}&=&\frac{\left(1-\gamma\right)\left[\left(\lambda+\gamma\right)-2\left(\mu+\delta\right)\right]}{\gamma\left(\mu+\delta\right)^2},
	\label{piandao3}\\
	\frac{\partial A^{-\frac{1}{\gamma}}}{\partial\delta}&=&\frac{\partial A^{-\frac{1}{\gamma}}}{\partial\mu}=\frac{\partial A^{-\frac{1}{\gamma}}}{\partial\left(\mu+\delta\right)}=\frac{\left(1-\gamma\right)\left[\left(\mu+\delta\right)-\left(\lambda+\gamma\right)\right]}{\gamma\left(\mu+\delta\right)^3}\theta^T\theta,\label{piandao4}\\
	\frac{\partial A^{-\frac{1}{\gamma}}}{\partial\gamma
	}&=&\frac{r-\alpha}{\gamma^2}+\frac{2\left(\mu+\delta\right)-\lambda-\gamma^2}{2\gamma^2\left(\mu+\delta\right)^2}\theta^T\theta,
	\label{piandao5}\\
	\frac{\partial A^{-\frac{1}{\gamma}}}{\partial\lambda
	}&=&\frac{\left(1-\gamma\right)}{2\gamma\left(\mu+\delta\right)^2}\theta^T\theta.
	\label{piandao6}
\end{eqnarray}
As the increase of the time preference makes the union more concerned with the present earnings,  Eq.~(\ref{piandao1}) shows that $\frac{\hat{P}_*}{x}$ increases with $\alpha$. In Eq.~(\ref{piandao2}), we  {see} that in the case of low risk aversion ($0<\gamma<1$), the equilibrium benefit ratio decreases while in the case of high risk aversion ($\gamma>1$), the effect is opposite. In spite that the firm is always risk aversion, the union acts completely opposite for ($0<\gamma<1$) and  ($\gamma>1$). As  the empirical studies indicate that the range of risk aversion coefficient is between 1 and 10, see \cite{Azar2006Measuring}, we only consider the union with high risk aversion  ($\gamma>1$). When the risk-free interest rate increases, the expected returns of the risk-free asset and risky assets all increase. Therefore, the fund surplus is expected to increase, {as a result,} the union can claim large benefit ratio from the fund. In our model, {$\theta$, $\delta$ and $\mu$}  influence the behaviour of the firm and {further} the fund surplus. As such, {$\theta$, $\delta$ and $\mu$ have large effects on the benefit ratio of the union.} Eq.~(\ref{piandao3}) shows that for the union with high risk aversion, $\frac{\hat{P}_*}{x}$ decreases (increases) when the Sharpe ratio $\theta$ increases in the case $\left(\lambda+\gamma\right)>2\left(\mu+\delta\right)$ $\left(\left(\lambda+\gamma\right)<2\left(\mu+\delta\right)\right)$.  {Particularly,} when $\left (\lambda+\gamma\right)=2\left(\mu+\delta\right)$,  $\frac{\hat{P}_*}{x}$ and $\theta$ are independent. $\delta$ and $\mu$ appear as a whole in Eq.~(\ref{p}). Eq.~(\ref{piandao4}) shows that the aversion coefficients $\delta$ and $\mu$ of the firm not only have the same effects on $\frac{\hat{P}_*}{x}$, but also have the same effects on $P_*$.  For the union with high risk aversion, the benefit ratio $\frac{\hat{P}_*}{x}$ decreases (increases) as $\mu+\delta$ increases  when $\left(\mu+\delta\right)>\left(\lambda+\gamma\right)\;\left(\left(\mu+\delta\right)<\left( \lambda+\gamma\right)\right)$.
\vskip 5pt
Although the firm's aversion parameters $\delta$ and $\mu$ have the same impacts on both strategies, the effects of the union's aversion parameters $\gamma$ and $\lambda$ on $\frac{\hat{P}_*}{x}$ are different. Eq.~(\ref{piandao5})  shows that the influence of $\gamma$ on $\frac{\hat{P}_*}{x}$ is complicated depending on the {relationship among $r$, $\alpha$, $\gamma$,$\mu$, $\delta$, $\lambda$ and  $\theta$.} The right hand of Eq.~(\ref{piandao5}) does not have clear economic meanings and  the benefits may increase or decrease with $\gamma$.  Eq.~(\ref{piandao6}) shows that for the  union with high risk aversion, $\frac{\hat{P}_*}{x}$ decreases as the union's ambiguity aversion coefficient $\lambda$ increases.  This is {because}  when $\lambda$ increases, the union  has less confidence about the reference model and will become more conservative  when making decisions. Thus, the union will decrease benefits from the fund. More illustrations about the relationship between $\frac{\hat{P}_*}{x}$ and the risk and ambiguity aversion parameters   are in Section \ref{NA}.

\subsubsection{\bf Pareto optimality}
In this {subsection,  in some specific case, similar to that of} \\ \cite{Josa2019Equilibrium}, we show that the robust equilibrium strategy  {is}  Pareto optimal. We consider the case when the union and the firm have the same risk aversion parameters and ambiguity aversion parameters, i.e., $\gamma=\delta$ {and} $\lambda=\mu$. Then the robust equilibrium strategy $\left(\pi_*,P_*\right)$ is {Pareto optimal, i.e.,} there is no admissible strategy $\left(\pi,P\right)\in\Lambda$ such that
\begin{eqnarray}
	\inf_{h_U\in\mathcal{H}_U\left(\pi,P\right)}J_U\left(s,x,\pi,P,h_U\right)\ge J_U\left(s,x,\pi_*,P_*,h_U^*\right),
	\label{shuangshi1}\\
	\inf_{h_F\in\mathcal{H}_F\left(\pi,P\right)}J_F\left(s,x,\pi,P,h_F\right)\ge J_F\left(s,x,\pi_*,P_*,h_F^*\right),
	\label{shuangshi2}
\end{eqnarray}
and at least one of the {last} two inequalities holds strictly.
\vskip 5pt
\cite{Josa2019Equilibrium} derive the general form of the Pareto optimal strategy and show that their equilibrium strategy is Pareto efficient. In this paper, we study a particular Pareto optimal problem in which the union and the firm cooperate to maximize the expected discount utility of the union. The robust stochastic control problem (PI0) is as follows{:}
\begin{equation*}
	V_0\left(s,x\right)\triangleq\sup_{\left(\pi,P\right)\in\Lambda}\inf_{h_0\in\mathcal{H}_U\left(\pi,P\right)}J_U\left(s,x,\pi,P,h_0\right).
\end{equation*}
Similar to Proposition \ref{pro3.1}, {the associated HJBI equation to Problem (PI0)} is
\begin{equation}
0=\sup_{\hat{\pi},\hat{P}}\inf_{\hat{h}_0}\left\{\mathcal{A}^{\hat{\pi},\hat{P},\hat{h}_0}W^0\left(s,x\right)+\Phi^U\left(s,x,\hat{\pi},\hat{P},\hat{h}_0,W^0\right)\right\}.
	\label{hjbfc3}
\end{equation}
We solve the HJBI equation (\ref{hjbfc3}) and list the results in the following theorem.
\begin{theorem}
{Assume that $A_0$ given in Eq.~(\ref{A_0}) is positive. Then} the  HJBI equation  (\ref{hjbfc3}) has a $C^{1,2}\left(\mathbf{R}_+^2\right)$ solution:
	\begin{equation}\nonumber
		W^0\left(s,x\right)=A_0e^{-\alpha s}\frac{x^{1-\gamma}}{1-\gamma}.
	\end{equation}
	Besides, {the minimum point $\hat{h}_{0*}$,  the maximum points $\hat{\pi}_{0*}$ and $\hat{P}_{0*}$} in Eq.~(\ref{hjbfc3})  are respectively
	\begin{eqnarray*}
		\hat{h}_{0*}&=&\frac{\lambda}{\gamma+\lambda}\theta,\\
		\hat{\pi}_{0*}&=&\hat{\pi_0}_*\left(x\right)=\frac{1}{\gamma+\lambda}\Sigma^{-1}\left(b-r\vec{1}\right)x,\\
		\hat{P}_{0*}&=&\hat{P_0}_*\left(x\right)=\left(A_0\right)^{-\frac{1}{\gamma}}x,\\
	\end{eqnarray*}
	where
	\begin{equation}
		A_0=\left[ \frac{\alpha}{\gamma}-\frac{1-\gamma}{\gamma}\left(r+\frac{\theta^T\theta}{2\left(\gamma+\lambda\right)}\right) \right]^{-\gamma}.
		\label{A_0}
	\end{equation}
	\label{the4.1}
\end{theorem}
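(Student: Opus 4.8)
The plan is to solve the single HJBI equation (\ref{hjbfc3}) by the ansatz-and-verify method, exactly paralleling the derivation of Theorem \ref{the3.1} but now for the cooperative problem in which one decision maker controls both $\hat\pi$ and $\hat P$ to serve the union's objective. First I would posit the separable form $W^0(s,x)=A_0 e^{-\alpha s}\frac{x^{1-\gamma}}{1-\gamma}$ and record $W^0_s=-\alpha W^0$, $W^0_x=A_0 e^{-\alpha s}x^{-\gamma}$, and $W^0_{xx}=-\gamma A_0 e^{-\alpha s}x^{-\gamma-1}$. Substituting these, together with the generator (\ref{suanzi}) and the penalty $\Phi^U$, into the bracketed expression of (\ref{hjbfc3}) reduces everything to a function of $(\hat\pi,\hat P,\hat h_0)$ multiplied by the common factors $e^{-\alpha s}$ and powers of $x$.

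Next I would carry out the inner infimum over $\hat h_0$. The only $\hat h_0$-dependent terms are $-\hat\pi^T\sigma\hat h_0\,W^0_x$ and $(1-\gamma)\frac{\hat h_0^T\hat h_0}{2\lambda}W^0$; because $(1-\gamma)W^0=A_0 e^{-\alpha s}x^{1-\gamma}>0$ whenever $A_0>0$, this is a strictly convex quadratic in $\hat h_0$, so the infimum is attained at its unique stationary point. The first-order condition gives $\hat h_0=\frac{\lambda\sigma^T\hat\pi}{x}$, using the identity $W^0_x/[(1-\gamma)W^0]=1/x$. The outer supremum then splits: the $\hat P$-part $-\hat P\,W^0_x+e^{-\alpha s}\frac{\hat P^{1-\gamma}}{1-\gamma}$ is concave in $\hat P$ and maximized at $\hat P_{0*}=(A_0)^{-1/\gamma}x$ directly from its first-order condition, while substituting the optimal $\hat h_0$ into the $\hat\pi$-part collapses the cross term and the entropy term (via $\Sigma=\sigma\sigma^T$) into a single quadratic with coefficient $-\frac{\lambda+\gamma}{2}A_0 e^{-\alpha s}x^{-\gamma-1}$ on $\hat\pi^T\Sigma\hat\pi$. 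This is concave for $A_0>0$ and $\lambda+\gamma>0$, and its maximizer is $\hat\pi_{0*}=\frac{1}{\gamma+\lambda}\Sigma^{-1}(b-r\vec{1})x$.

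Finally I would substitute the three optimizers back into (\ref{hjbfc3}). The key bookkeeping step is that every surviving term carries the common factor $e^{-\alpha s}x^{1-\gamma}$, so it cancels uniformly; using $(b-r\vec{1})^T\Sigma^{-1}(b-r\vec{1})=\theta^T\theta$, the residual is an algebraic identity relating $A_0$ and $A_0^{-1/\gamma}$ that, after dividing by $A_0$ and noting the exponent collapses to $-1/\gamma$, yields exactly $A_0^{-1/\gamma}=\frac{\alpha}{\gamma}-\frac{1-\gamma}{\gamma}\bigl(r+\frac{\theta^T\theta}{2(\gamma+\lambda)}\bigr)$, i.e. (\ref{A_0}). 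Since $W^0$ is a smooth product of an exponential and a power on $\mathbf{R}_+^2$, membership in $C^{1,2}(\mathbf{R}_+^2)$ is immediate. The only genuine subtlety — the point I would treat most carefully — is the order of the $\sup$–$\inf$ and the second-order conditions: one must check that the hypothesis $A_0>0$ simultaneously renders the $\hat h_0$-quadratic strictly convex and the $\hat\pi$-quadratic strictly concave, so that the stationary points really are the inner minimizer and the outer maximizer rather than a saddle or a boundary point. This positivity of $A_0$ is precisely what the hypothesis of the theorem supplies.
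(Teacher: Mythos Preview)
Your proposal is correct and follows essentially the same ansatz-and-verify route as the paper: posit the separable power form, solve the inner minimization in $\hat h_0$ and the outer maximizations in $\hat P,\hat\pi$ via first-order conditions (checking the convexity/concavity granted by $A_0>0$), and then reduce the HJBI equation to the algebraic identity~(\ref{A_0}). The only cosmetic difference is that the paper carries a generic $W^0(x)$ through the optimizer formulas before specializing to the power form, whereas you substitute $W^0(s,x)=A_0e^{-\alpha s}x^{1-\gamma}/(1-\gamma)$ from the outset; the substance is the same.
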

\begin{proof}
	See Appendix \ref{a:the4.1}.
\end{proof}
For $\left(\pi_0^*,P_0^*\right)\in \Lambda,\; h_0^*\in\mathcal{H}_U\left(\pi_0^ *,P_0^*\right)$,
we call $\left(\pi_0^*,P_0^*\right)$ the robust optimal strategy and $h_0^*$ the worst case measure transformation process, {respectively,} if they satisfy
\begin{eqnarray*}
	V_0\left(s,x\right)=\inf_{h_0\in \mathcal{H}_U\left(\pi_0^*,P_0^*\right)}J_U\left(s,x,\pi_0^*,P_0^*,h_0\right)=J_U\left(s,x,\pi_0^*,P_0^*,h_0^*\right).
\end{eqnarray*}
Moreover, $V_0\left(s,x\right)$ is called the value function of Problem (PI0).
\begin{theorem}\label{T1}
Suppose that $A_0$ in Theorem \ref{the4.1} is positive, then  $W^0=V_0$ is the value function of Problem (PI0). {Let $\left(\pi,P\right)=\left(\{\hat{\pi}_{0*}\left(X\left(t\right)\right): t\geq s\},\{\hat{P}_{0*}\left(X\left(t\right)\right): t\geq s\}\right)$,} then SDE~(\ref{yuanfangcheng00}) has a unique solution denoted by $X^*$. If {
		 $\left(\pi_0^* ,P_0^*\right)\triangleq \left(\{\hat{\pi}_{0*}\left(X^*\left(t\right)\right): t\geq s\},\{\hat{P}_{0*}\left(X^*\left(t\right)\right): t\geq s\}\right)$} and $h_0^*$ is a constant process with value $\hat{h}_{0*}$, {then the} $\left(\pi_0^* ,P_0^*\right)$ is the robust optimal strategy for the firm and the union, {and}  $h_0^*$ is the worst case measure transformation process.
\end{theorem}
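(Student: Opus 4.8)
The plan is to mirror the verification argument of Theorem \ref{the3.2}, but for the single $\sup$-$\inf$ problem (PI0) rather than the coupled game, which makes the argument structurally simpler since there is only one value function and no interaction to resolve. By Theorem \ref{the4.1} we already possess an explicit $C^{1,2}(\mathbf{R}_+^2)$ solution $W^0(s,x)=A_0 e^{-\alpha s}\frac{x^{1-\gamma}}{1-\gamma}$ of the HJBI equation~(\ref{hjbfc3}), together with its inner maximizers $\hat\pi_{0*}(x),\hat P_{0*}(x)$ and minimizer $\hat h_{0*}$. First I would establish the two preparatory facts corresponding to Lemmas \ref{pro3.0} and \ref{pro3.2}. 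Substituting the feedback controls into SDE~(\ref{yuanfangcheng00}) yields a linear (geometric-Brownian) equation whose drift and diffusion coefficients are proportional to $X$, so it admits a unique strong solution $X^*$ with an explicit exponential representation analogous to~(\ref{zuiyoucaifu}) but with $\mu+\delta$ replaced by $\gamma+\lambda$; in particular $X^*(t)>0$ for all $t\ge s$. Then I would verify that $(\pi_0^*,P_0^*)\in\Lambda$ and $h_0^*\in\mathcal{H}_U(\pi_0^*,P_0^*)$ by checking positivity, the Novikov condition, and conditions (ii)--(iv) of Definition \ref{def2.12}; these computations are the same as in Lemma \ref{pro3.2} with $\gamma+\lambda$ in place of $\mu+\delta$, and the positivity of $A_0$ is precisely what fixes the correct sign of the exponential growth rates appearing in conditions (iii) and (iv).

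Next I would prove the analogue of Lemma \ref{pro3.3} for $W^0$: (i) $\mathcal{A}^{\hat\pi_{0*}(y),\hat P_{0*}(y),\hat h_0}W^0(t,y)+\Phi^U(t,y,\hat\pi_{0*}(y),\hat P_{0*}(y),\hat h_0,W^0)\ge 0$ for every $\hat h_0$; (ii) the reverse inequality $\le 0$ at the worst case $\hat h_{0*}$ for every $\hat\pi,\hat P$; (iii) equality at the common optimum; and (iv) the transversality limit $\lim_{T\to+\infty}\E_{s,x}^{h_0}W^0(T,X(T))=0$ for all admissible $(\pi,P)$ and $h_0$. Items (i)--(iii) are immediate from the $\sup$-$\inf$ structure of~(\ref{hjbfc3}): the map $(\hat\pi,\hat P)\mapsto\mathcal{A}+\Phi^U$ is concave (using $W^0_{xx}<0$, the diffusion term is concave in $\hat\pi$ and the CRRA term is concave in $\hat P$) while $\hat h_0\mapsto\mathcal{A}+\Phi^U$ is convex (the penalty contributes a positive-definite quadratic, using $A_0>0$, and the generator only a linear term in $\hat h_0$), so $(\hat\pi_{0*},\hat P_{0*},\hat h_{0*})$ is a saddle point. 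Item (iv) follows from the explicit form of $X^*$ together with conditions (ii)--(iii) of Definition \ref{def2.12}, exactly as in the game.

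With these in hand the verification is a two-sided estimate. Applying It\^{o}'s formula to $W^0(t,X(t))$ and using condition (iv) of Definition \ref{def2.12} to secure the $\mathbb{Q}^{h_0}$-martingale property of the stochastic integral, I take $\E_{s,x}^{h_0}$ to obtain the integrated identity of the form~(\ref{th36}). Feeding in property (i) and letting $T\to+\infty$ gives $W^0(s,x)\le J_U(s,x,\pi_0^*,P_0^*,h_0)$ for every $h_0\in\mathcal{H}_U(\pi_0^*,P_0^*)$, hence $W^0(s,x)\le\inf_{h_0}J_U(s,x,\pi_0^*,P_0^*,h_0)\le V_0(s,x)$. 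Symmetrically, property (ii) with the constant process $h_0^*$ gives $W^0(s,x)\ge J_U(s,x,\pi,P,h_0^*)\ge\inf_{h_0}J_U(s,x,\pi,P,h_0)$ for every $(\pi,P)\in\Lambda$, and taking the supremum yields $W^0(s,x)\ge V_0(s,x)$. Combining the two bounds gives $W^0=V_0$, and property (iii) evaluated along $(\pi_0^*,P_0^*,h_0^*)$ turns both inequalities into equalities, so $W^0(s,x)=J_U(s,x,\pi_0^*,P_0^*,h_0^*)$; this identifies $(\pi_0^*,P_0^*)$ as the robust optimal strategy and $h_0^*$ as the worst-case measure transformation process.

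The main obstacle, and essentially the only place demanding genuine work rather than transcription from the proof of Theorem \ref{the3.2}, is the interchange of limit and expectation as $T\to+\infty$ in the absence of a uniform-integrability bound. Here I would lean on the two structural features recorded in Remark \ref{remh2}: the integrand $\frac{P^{1-\gamma}}{1-\gamma}$ keeps a fixed sign and the penalty term is non-negative, so the integral expansion (monotone convergence) theorem applies and the limit can be moved inside the expectation, while transversality (iv) kills the boundary term. Confirming that the explicit geometric-Brownian $X^*$ genuinely satisfies conditions (iii) and (iv) of Definition \ref{def2.12} — that is, computing the exponential growth rates of $\E_{s,x}^{h_0}[e^{-\alpha t}X(t)^{1-\gamma}]$ and of the quadratic-in-$\pi$ integrand and checking that $A_0>0$ forces them to decay — is the remaining technical point; everything else is a direct specialization of the machinery already established for Problem (PI), with $\mu+\delta$ replaced by $\gamma+\lambda$.
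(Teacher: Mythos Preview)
Your proposal is correct and follows essentially the same approach as the paper. The paper's proof is even more compressed: it simply observes that $A_0,\hat\pi_{0*},\hat P_{0*},\hat h_{0*}$ coincide with the corresponding quantities of Theorem~\ref{the3.1} in the special case $\gamma=\delta$, $\lambda=\mu$, so Lemmas~\ref{pro3.0}--\ref{pro3.3} apply verbatim and the verification of Theorem~\ref{the3.2} carries over; you instead spell out the analogues of those lemmas for (PI0) with the substitution $\mu+\delta\rightsquigarrow\gamma+\lambda$, which amounts to the same thing.
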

\begin{proof}
{$A_0$, $\hat{\pi}_{0*}$, $\hat{P}_{0*}$ and $\hat{h}_{0*}$ are exactly the  forms of the results obtained in Theorem \ref{the3.1} when $\gamma=\delta$ and $\lambda=\mu$. As such, based on Lemma \ref{pro3.0}, we know that $\left(\pi_0^*,P_0^*\right)$ is well-defined. In addition, based on  Lemma \ref{pro3.2}, we have $\left(\pi_0^*,P_0^*\right)\in \Lambda$ and $ h_0^*\in\mathcal{H}_U\left(\pi_0^ *,P_0^*\right)$. As such, we obtain the properties similar to Lemma \ref{pro3.3} as the procedure of the proof  is the same as that of Lemma \ref{pro3.3}. Finally, similar with the proof of Theorem \ref{the3.2}, Theorem \ref{T1} follows.}
\end{proof}
The next theorem shows that the robust equilibrium strategy in some specific case is Pareto optimal.
\begin{theorem}
{Suppose that} the union and firm have the same risk aversion parameters and ambiguity aversion parameters, i.e., {$\gamma=\delta$ and $\lambda=\mu$, and $A>0$ and $B>0$.} Then the robust equilibrium strategy $\left(\pi_*,P_*\right)$ obtained in Theorem \ref{the3.2} {is} Pareto optimal.
\end{theorem}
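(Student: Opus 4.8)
The plan is to deduce Pareto optimality from the single-agent cooperative problem (PI0) of Theorem~\ref{the4.1} and Theorem~\ref{T1}. The first step is to record the coincidence forced by the symmetry $\gamma=\delta$, $\lambda=\mu$. Substituting these into Eq.~(\ref{A}) and using $\mu+\delta=\lambda+\gamma$ collapses the bracket to $\frac{1}{\mu+\delta}-\frac{\lambda+\gamma}{2(\mu+\delta)^2}=\frac{1}{2(\gamma+\lambda)}$, so that $A$ equals $A_0$ of Eq.~(\ref{A_0}); likewise the maximizers (\ref{p})--(\ref{pi}) become $\hat\pi_*=\hat\pi_{0*}$ and $\hat P_*=\hat P_{0*}$, while the minimizers (\ref{hu})--(\ref{hf}) become $\hat h_{U*}=\hat h_{F*}=\hat h_{0*}=\frac{\lambda}{\gamma+\lambda}\theta$. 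Consequently the robust equilibrium strategy of Theorem~\ref{the3.2} is literally the cooperative optimizer, $(\pi_*,P_*)=(\pi_0^*,P_0^*)$, the worst-case measures coincide, $h_U^*=h_F^*=h_0^*$, and the union value functions agree, $W^U=W^0$, whence $J_U(s,x,\pi_*,P_*,h_U^*)=W^U(s,x)=W^0(s,x)=V_0(s,x)$.

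I would then argue by contradiction: suppose $(\pi_*,P_*)$ is not Pareto optimal, so there is $(\pi,P)\in\Lambda$ satisfying Ineqs.~(\ref{shuangshi1})--(\ref{shuangshi2}) with at least one strict. Since $V_0$ is by definition the supremum of $\inf_{h_0}J_U$ over all admissible strategies, $\inf_{h_U\in\mathcal{H}_U(\pi,P)}J_U(s,x,\pi,P,h_U)\le V_0$ for every $(\pi,P)\in\Lambda$. Combined with the union's Pareto inequality~(\ref{shuangshi1}), which reads $\inf_{h_U}J_U(s,x,\pi,P,h_U)\ge J_U(s,x,\pi_*,P_*,h_U^*)=V_0$, this forces $\inf_{h_U}J_U(s,x,\pi,P,h_U)=V_0$. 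Hence~(\ref{shuangshi1}) holds with equality, so the strict inequality must occur in the firm's relation~(\ref{shuangshi2}); moreover $(\pi,P)$ is itself an optimizer of (PI0).

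The decisive step is to upgrade ``$(\pi,P)$ is (PI0)-optimal'' to ``$(\pi,P)=(\pi_*,P_*)$ almost surely'', i.e.\ uniqueness of the (PI0) optimizer. Here I would exploit strict concavity of the Hamiltonian in the HJBI equation~(\ref{hjbfc3}). Since $W^0=A_0e^{-\alpha s}\frac{x^{1-\gamma}}{1-\gamma}$ gives $W^0_{xx}=-\gamma A_0 e^{-\alpha s}x^{-\gamma-1}<0$ on $(0,\infty)$ and $\Sigma$ is positive definite, the map $\hat\pi\mapsto\mathcal{A}^{\hat\pi,\hat P,\hat h_{0*}}W^0$ is strictly concave through its $\frac12\hat\pi^T\Sigma\hat\pi\,W^0_{xx}$ term, and $\hat P\mapsto -\hat P\,W^0_x+e^{-\alpha s}\frac{\hat P^{1-\gamma}}{1-\gamma}$ is strictly concave because $u$ is strictly concave; thus the maximizer $(\hat\pi_{0*},\hat P_{0*})$ in the (PI0)-analogue of Lemma~\ref{pro3.3}(ii) is unique, and $\mathcal{A}^{\hat\pi,\hat P,\hat h_{0*}}W^0+\Phi^U\le0$ holds with equality only at $(\hat\pi_{0*},\hat P_{0*})$. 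Running the verification chain of Theorem~\ref{T1} for the optimal $(\pi,P)$ makes the Dynkin inequality $W^0(s,x)\ge J_U(s,x,\pi,P,h_0^*)$ tight (the boundary term vanishing by the analogue of Lemma~\ref{pro3.3}(iv)), which forces the integrand $\mathcal{A}^{\pi,P,h_0^*}W^0+\Phi^U$ to vanish for a.e.\ $t$, $\mathbb{Q}^{h_0^*}$-a.s.; by the strict concavity just noted this identifies $(\pi(t),P(t))=(\hat\pi_{0*}(X(t)),\hat P_{0*}(X(t)))$, and uniqueness of the SDE solution (Lemma~\ref{pro3.0}) yields $(\pi,P)=(\pi_*,P_*)$ a.s.

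Once $(\pi,P)=(\pi_*,P_*)$ a.s., $J_F(s,x,\pi,P,h_F)=J_F(s,x,\pi_*,P_*,h_F)$ for every $h_F$, so~(\ref{shuangshi2}) also holds with equality; since neither inequality can then be strict, this contradicts the assumption and proves $(\pi_*,P_*)$ Pareto optimal. I expect the main obstacle to be exactly the uniqueness/strict-concavity step: one must justify rigorously that (PI0)-optimality makes the verification inequality tight along the whole trajectory, and that pointwise strict concavity upgrades the almost-everywhere vanishing of $\mathcal{A}^{\pi,P,h_0^*}W^0+\Phi^U$ into coincidence of the feedback controls. This is where the integrability condition~(iv) and the limiting condition~(iii) of Definitions~\ref{def2.12}--\ref{def2.2} are indispensable, the former to legitimize the martingale/Dynkin representation and the latter to kill the terminal term as $T\to\infty$.
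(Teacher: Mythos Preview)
Your proposal is correct and follows essentially the same route as the paper: identify that under $\gamma=\delta$, $\lambda=\mu$ the robust equilibrium strategy coincides with the cooperative (PI0)-optimizer of Theorem~\ref{T1}, and then invoke uniqueness of that optimizer to rule out any Pareto improvement. The paper's own proof is considerably terser---it simply asserts that equality in Eq.~(\ref{shuangshi1}) holds if and only if $(\pi,P)=(\pi_*,P_*)$, from which Pareto optimality follows---whereas you supply the missing justification via strict concavity of the HJBI Hamiltonian in $(\hat\pi,\hat P)$ and the verification chain, and you correctly flag the admissibility requirement $h_0^*\in\mathcal{H}_U(\pi,P)$ (needed for the Dynkin/martingale step) as the residual technicality.
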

\begin{proof}
	{$A_0$, $\hat{\pi}_{0*}$, $\hat{P}_{0*}$ and $\hat{h}_{0*}$ obtained in Problem  (PI0) are exactly the results obtained in Theorem \ref{the3.1} when $\gamma=\delta$ and $\lambda=\mu$. As such,} the robust equilibrium strategy $\left(\pi_*,P_*\right)$ obtained in Problem (PI) is exactly the robust optimal strategy for both players to cooperate to maximize the expected discount utility of the union.  The equality in Eq.~(\ref{shuangshi1}) holds if and only if $\left(\pi,P\right)=\left(\pi_*,P_*\right)$, thus $\left(\pi_*,P_* \right)$ {is}  Pareto optimal.
\end{proof}
\vskip 5pt
The pareto optimality shows that both the union and firm  have achieved a {win-win} result of cooperation in a non-zero-sum game.
\vskip 5pt
\subsection{\bf The second game} {In this subsection}, we show the robust equilibrium strategy of Problem (PI1) in subsection \ref{sec:game2}. First, we also show the admissible sets of the strategy $(\pi,P)$, measure transformation processes $h_{\bar{U}}$ and $h_P$ to guarantee the well-posedness of Problem (PI1). Then, we present the {associated} HJBI equations to the {problem} and obtain the explicit forms of the robust equilibrium strategy as well as the worst case measure transformation processes. The optimality of the equilibrium strategy {is proved} in detail.
\subsubsection{\bf Admissible sets}
For fixed $s\ge 0,x>0$, in order to ensure the feasibility of the Girsanov's Theorem, the well-posedness of the robust problem and the optimality of the solution to the HJBI equations, we should restrict the space of the strategy {$(\pi,P)=\left\{\left(\pi\left(t\right),P\left(t\right )\right): t\geq s\right\}$} and the process {$(h_{\bar{F}},h_{\bar{U}})=\left\{\left(h_{\bar{F}}\left(t\right), h_{\bar{U}}\left(t\right)\right): t\geq s\right\}$}. The admissible strategy set of $(\pi,P)$ {is}  denoted by $\Lambda$. The firm's and union's  admissible measure transformation process sets under $\left(\pi,P\right)$ are denoted by $\mathcal{H}_{\bar{F}}\left(\pi,P\right)$ {and} $\mathcal{H}_{\bar{U}}\left(\pi,P\right)$, respectively.
\vskip 5pt
The definition of $\Lambda$ in Problem (PI1) is the same as in Problem (PI), see Definition \ref{def2.1}. The conditions of space $\Lambda$ guarantee the well-posedness of SDE~(\ref{yuanfangcheng00}). As in Problem (PI1), the union has the same objective function as in Problem (PI), the requirement of $h_{\bar{U}}$ is also the same as in Problem (PI). We say   $h_{\bar{U}}\in\mathcal{H}_{\bar{F}}\left(\pi,P\right)$ if $h_{\bar{U}}$ satisfies all conditions in Definition \ref{def2.12}. However, in Problem (PI1), the firm aims to maximize the probability of reaching the upper level before the lower level, which is quite different from Problem (PI). And the definition of $\mathcal{H}_{\bar{F}}\left(\pi,P\right)$ is  different from that {of} Definition \ref{def2.2}.
\begin{definition}
For any $\left(\pi,P\right)\in\Lambda$, denote $X$ as the unique solution of SDE~(\ref{yuanfangcheng00}) under the strategy $\left(\pi,P\right)$. 
We say that {$h_{\bar{F}}=\left \{h_{\bar{F}}\left(t\right): t\geq s\right \}$} is admissible about $\left(\pi,P\right)$, if {$\pi\left(t\right)>0$, $P\left(t\right)>0$, $\forall t\ge s$, $\mathbb{Q}^{h_{\bar{F}}}$-a.s.} and $h_{\bar{F}}$ satisfies
	\begin{enumerate}
\item[(i)]
	$
		\E_{s,x}\left[\exp\left\{\int_{s}^{T}\frac{h_{\bar{F}}\left(X\left(t\right)\right)h_{\bar{F}}\left(X\left(t\right)\right)}{2}\rd t\right\}\right]<\infty, \quad \forall T\ge s,
	$
\item[(ii)]
	$
		\mathcal{T}<\infty, \quad \mathbb{Q}^{h_{\bar{F}}}\text{-a.s..}
	$
\end{enumerate}
	All admissible $h_{\bar{F}}$ about $\left(\pi,P\right)$ form a set denoted by $\mathcal{H}_{\bar{F}}\left(\pi,P\right)$.
	\label{def5.3}
\end{definition}
In Definition \ref{def5.3}, the Novikov condition (i) guarantees that Girsanov's Theorem can be applied from the reference measure  $\mathbb{P}$ to $\mathbb{Q}^{h_{\bar{F}}}$. Condition (ii) ensures the well-posedness of the objective function $J_{\bar{F}}\left(s,x,\pi,P,h_{\bar{F}}\right)$ given in Eq.~(\ref{jf22}).


\subsubsection{\bf Robust equilibrium strategy}
 {In this subsection}, we derive the explicit forms of the robust equilibrium strategy, the corresponding worst case measure transformation process  and the value functions of Problem (PI1). {Here, stochastic dynamic programming method works for solving Problem (PI1)}. First, we  {obtain the associated} HJBI equations to Problem (PI1) in the following proposition.


\begin{proposition}[HJBI equations]
Let {$W^{\bar{U}}\left(\cdot,\cdot\right)\in C^{1,2}\left(\left(l,v\right)\times\mathbf{R}_+\right)$, $W^{\bar{F}}\left(\cdot,\cdot\right)\in C^{1,2}\left(\left(l,v\right)\times\mathbf{R}_+\right)$.
Then the associated} HJBI equations to Problem (PI1) are
	\begin{equation}
		0=\sup_{\hat{P}}\inf_{\hat{h}_{\bar{U}}}\left\{\mathcal{A}^{\hat{\pi},\hat{P},\hat{h}_{\bar{U}}}W^{\bar{U}}\left(s,x\right)+\Phi^{\bar{U}}\left(s,x,\hat{\pi},\hat{P},\hat{h}_{\bar{U}},W^{\bar{U}}\right)\right\},
		\label{xhjbfc1}
	\end{equation}
\begin{equation}
		0=\sup_{\hat{\pi}}\inf_{\hat{h}_{\bar{F}}}\left\{\mathcal{A}^{\hat{\pi},\hat{P},\hat{h}_{\bar{F}}}W^{\bar{F}}\left(s,x\right)+\Phi^{\bar{F}}\left(s,x,\hat{\pi},\hat{P},\hat{h}_{\bar{F}},W^{\bar{F}}\right)\right\},
		\label{xhjbcf2}
	\end{equation}
with boundary conditions $W^{\bar{F}}\left(s,l\right)=0$ and $ W^{\bar{F}}\left(s,v\right)=1$, where
\begin{eqnarray}
		\Phi^{\bar{U}}\left(s,x,\hat{\pi},\hat{P},\hat{h}_{\bar{U}},W^{\bar{U}}\right)&\triangleq
		&e^{-\alpha s}\frac{\hat{P}^{1-\gamma}}{1-\gamma}+\left(1-\gamma\right)
		\frac{\hat{h}_{\bar{U}}^T\hat{h}_{\bar{U}}}{2\lambda}W^{\bar{U}}\left(s,x\right),\nonumber\\
		\Phi^{\bar{F}}\left(s,x,\hat{\pi},\hat{P},\hat{h}_{\bar{F}},W^{\bar{F}}\right)&\triangleq&\frac{\hat{h}_{\bar{F}}^T\hat{h}_{\bar{F}}}{2\mu}
		\left(W^{\bar{F}}\left(s,x\right)+c\right),\nonumber
\end{eqnarray}
and $\mathcal{A}^{\hat{\pi},\hat{P},\hat{h}}$ is defined as in Eq.~(\ref{suanzi}).
	
	\label{pro5.1}
\end{proposition}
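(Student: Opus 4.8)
The plan is to obtain Eqs.~(\ref{xhjbfc1})--(\ref{xhjbcf2}) as the formal Hamilton--Jacobi--Bellman--Isaacs equations produced by the robust stochastic dynamic programming principle applied separately to the union's value function $V_{\bar{U}}^{\pi}$ and the firm's value function $V_{\bar{F}}^{P}$, in the same spirit as the derivation underlying Proposition \ref{pro3.1} (see \cite{Maenhout2006Robust}). The two equations are then read off by inserting the Maenhout-type penalty ansatz for $\varphi_{\bar{U}}$ and $\varphi_{\bar{F}}$, and the boundary data for the firm are inherited from its exit-time payoff.

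For the union, I would note that its payoff $J_{\bar{U}}$ and penalty coincide exactly with those of Problem (PI), so the derivation of Eq.~(\ref{xhjbfc1}) is identical to that of Proposition \ref{pro3.1}. Concretely, fix an admissible triple $(\pi,P,h_{\bar{U}})$, apply It\^o's formula to $W^{\bar{U}}(t,X(t))$ under $\mathbb{Q}^{h_{\bar{U}}}$ over a short interval $[s,s+\Delta]$; the drift term is precisely the generator $\mathcal{A}^{\hat{\pi},\hat{P},\hat{h}_{\bar{U}}}W^{\bar{U}}$ from Eq.~(\ref{suanzi}). Adding the local contribution of the running utility $e^{-\alpha s}\frac{\hat{P}^{1-\gamma}}{1-\gamma}$ together with the entropy penalty, dividing by $\Delta$ and letting $\Delta\downarrow 0$, the dynamic programming principle forces the $\sup_{\hat{P}}\inf_{\hat{h}_{\bar{U}}}$ of the resulting expression to vanish. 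Substituting $\varphi_{\bar{U}}(s,x)=\frac{\lambda}{(1-\gamma)J_{\bar{U}}}$ and replacing $J_{\bar{U}}$ by the candidate value function $W^{\bar{U}}$ turns the penalty into $(1-\gamma)\frac{\hat{h}_{\bar{U}}^T\hat{h}_{\bar{U}}}{2\lambda}W^{\bar{U}}$, which is exactly the $\Phi^{\bar{U}}$ displayed in the statement.

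The firm is where the only genuine departure from Problem (PI) occurs, and here the computation is actually cleaner because the firm's payoff Eq.~(\ref{jf222}) carries no discounting and no running utility. It is the exit-time functional in which the reward $h(X(\mathcal{T}))$ is collected at the first exit $\mathcal{T}=\mathcal{T}_l\wedge\mathcal{T}_v$ with $h(l)=0$ and $h(v)=1$. Applying It\^o's formula to $W^{\bar{F}}(t,X(t))$ on $[s,\mathcal{T}]$ under $\mathbb{Q}^{h_{\bar{F}}}$ and invoking the exit-time dynamic programming principle yields, in the interior $(l,v)$, that the $\sup_{\hat{\pi}}\inf_{\hat{h}_{\bar{F}}}$ of $\mathcal{A}^{\hat{\pi},\hat{P},\hat{h}_{\bar{F}}}W^{\bar{F}}$ plus the penalty must vanish. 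Because there is no running utility, only the entropy term survives; substituting $\varphi_{\bar{F}}(s,x)=\frac{\mu}{J_{\bar{F}}+c}$ and replacing $J_{\bar{F}}$ by $W^{\bar{F}}$ gives the penalty $\frac{\hat{h}_{\bar{F}}^T\hat{h}_{\bar{F}}}{2\mu}(W^{\bar{F}}+c)$, i.e.\ the stated $\Phi^{\bar{F}}$ and Eq.~(\ref{xhjbcf2}). The boundary conditions $W^{\bar{F}}(s,l)=0$ and $W^{\bar{F}}(s,v)=1$ are inherited directly from $h(l)=0$ and $h(v)=1$, since at the exit the integral penalty no longer accrues.

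The step I expect to be delicate is the legitimacy of the exit-time dynamic programming principle rather than any algebra: one needs $\mathcal{T}<\infty$ $\mathbb{Q}^{h_{\bar{F}}}$-a.s., which is exactly condition (ii) of Definition \ref{def5.3} and makes $h(X(\mathcal{T}))$ and the penalty integral well defined, and one must justify interchanging $\sup$--$\inf$ with the limit $\Delta\downarrow 0$ and discarding the local-martingale increment. As in Proposition \ref{pro3.1}, I would treat this proposition as a purely formal derivation and defer all rigorous optimality claims to the verification theorem; I would also record that the translation constant $c$ inside $\varphi_{\bar{F}}$ is precisely what preserves the homogeneity of the system anticipated in Eq.~(\ref{shi3}).
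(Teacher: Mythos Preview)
Your proposal is correct and follows the same approach as the paper: the paper's own proof simply states that the derivation is analogous to Proposition~\ref{pro3.1} and refers to \cite{Maenhout2006Robust}, treating the HJBI equations as a formal consequence of the dynamic programming principle with the Maenhout penalty ansatz. Your write-up merely makes explicit the exit-time variant for the firm (no discounting, no running utility, boundary data inherited from $h(l)=0$, $h(v)=1$) that the paper leaves implicit, and like the paper you correctly defer all rigorous justification to the verification theorem.
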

\begin{proof}
	Similar with Proposition \ref{pro3.1}, 	the derivation is simple and we omit it here, see \cite{Maenhout2006Robust} for details.
\end{proof}
{The following theorem} presents  solutions of the HJBI equations in Proposition \ref{pro5.1} as well as the maximum points and the minimum points under {some} assumptions.
\begin{theorem}
If {$\alpha>r$, $\mu\neq 1$, $0<\eta<1$, $c=\frac{l^{1-\eta}}{v^{1-\eta}-l^{1-\eta}}$ and}
	\begin{eqnarray*}
		\Delta \triangleq\left[\left(1-\frac{\gamma}{2}\right)\theta^T\theta\right]^2-2\left(\alpha-r\right)\left(1-\gamma\right)\left(\lambda+\gamma\right)\theta^T\theta>0,
	\end{eqnarray*}
the HJBI equations {(\ref{xhjbfc1})-(\ref{xhjbcf2})} in Proposition \ref{pro5.1} {have} $C^{1,2}\left(\left(l,v\right)\times\mathbf{R}_+\right)$
{solutions}:
	\begin{eqnarray*}
		W^{\bar{U}}\left(s,x\right)&=&Ee^{-\alpha s}\frac{x^{1-\gamma}}{1-\gamma},
		\\
		W^{\bar{F}}\left(s,x\right)&=&\frac{x^{1-\eta}-l^{1-\eta}}{v^{1-\eta}-l^{1-\eta}}.
	\end{eqnarray*}
	Moreover, for fixed  $\left(s,x\right)$, the maximum point $\hat{P}_*$ and the minimum point $\hat{h}_{U*}$ in Eq.~(\ref{xhjbfc1}) are given by
	\begin{eqnarray*}		\hat{P}_*&=&E^{-\frac{1}{\gamma}}x, \\
		\hat{h}_{U*}&=&\frac{\lambda}{\omega}\theta,
	\end{eqnarray*}
	and the maximum point $\hat{\pi}_*$ and the minimum point $\hat{h}_{F*}$ in Eq.~(\ref{xhjbcf2}) are given by
	\begin{eqnarray*}
				\hat{\pi}_*&=&\frac{1}{\omega}\Sigma^{-1}\left(b-r\vec{1}\right)x,\\
		\hat{h}_{F*}&=&\frac{\mu\left(1-\eta\right)}{\omega}\theta,
	\end{eqnarray*}
{where the constants $E$, $\omega$ and $\eta$} are
	\begin{eqnarray}
		E&=&\left[ r+\frac{\theta^T\theta}{2\omega} \right]^{-\gamma},\label{e}\\
		\omega&=&\frac{\left(1-\frac{\gamma}{2}\right)\theta^T\theta+\sqrt{\Delta}}{2\left(\alpha-r\right)},\label{omega}\\
		\eta&=&\frac{\omega-\mu}{1-\mu}.\label{eta}
	\end{eqnarray}
	\label{jiexhjb}
\end{theorem}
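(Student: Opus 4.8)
The plan is to solve the coupled HJBI system (\ref{xhjbfc1})--(\ref{xhjbcf2}) by the ansatz-and-verify method: substitute the conjectured forms for $W^{\bar U}$ and $W^{\bar F}$, carry out the inner minimizations over the measure-change directions and the outer maximizations over the controls in closed form, and then impose the algebraic consistency conditions that pin down $E$, $\omega$ and $\eta$. The Nash structure dictates how the two equations couple: in (\ref{xhjbfc1}) the investment $\hat\pi$ is frozen at the firm's equilibrium value while $\hat P$ and $\hat h_{\bar U}$ are optimized, and symmetrically in (\ref{xhjbcf2}) the benefit $\hat P$ is frozen while $\hat\pi$ and $\hat h_{\bar F}$ are optimized. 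Each optimization is an unconstrained quadratic-type problem resolved by a single first-order condition.

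First I would treat the union's equation. Substituting $W^{\bar U}(s,x)=E e^{-\alpha s}x^{1-\gamma}/(1-\gamma)$ into the generator (\ref{suanzi}) and into $\Phi^{\bar U}$, every term carries the common factor $e^{-\alpha s}x^{1-\gamma}$. The $\hat h_{\bar U}$-dependent part is a strictly convex quadratic whose first-order condition gives $\hat h_{\bar U}=\lambda x^{-1}\sigma^{T}\hat\pi$; evaluated at $\hat\pi_*=\omega^{-1}\Sigma^{-1}(b-r\vec{1})x$ and using $\sigma^{T}\Sigma^{-1}(b-r\vec{1})=\theta$ this reduces to $\hat h_{U*}=(\lambda/\omega)\theta$. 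The $\hat P$-dependent part, $-\hat P\,E x^{-\gamma}+\hat P^{1-\gamma}/(1-\gamma)$, has first-order condition $\hat P_*=E^{-1/\gamma}x$. Inserting both optimizers and cancelling $E e^{-\alpha s}x^{1-\gamma}$ leaves one scalar identity linking $E^{-1/\gamma}$, $\omega$ and the parameters $\alpha,r,\gamma,\lambda,\theta^{T}\theta$.

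Next I would handle the firm's equation, where the decisive point is the choice of $c$. Because $W^{\bar F}$ solves a two-sided exit problem it is time-independent and not homogeneous, so the raw penalty $\tfrac{1}{2\mu}\hat h_{\bar F}^{T}\hat h_{\bar F}(W^{\bar F}+c)$ would break the scaling needed for a power ansatz. Taking $c=l^{1-\eta}/(v^{1-\eta}-l^{1-\eta})$ turns $W^{\bar F}+c$ into $x^{1-\eta}/(v^{1-\eta}-l^{1-\eta})$, homogeneous of the same degree as the drift and diffusion contributions; this is exactly what lets the power profile $W^{\bar F}(s,x)=(x^{1-\eta}-l^{1-\eta})/(v^{1-\eta}-l^{1-\eta})$ (already matching the boundary data $0$ at $l$ and $1$ at $v$) solve the equation. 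The inner minimization then gives $\hat h_{\bar F}=\mu(1-\eta)x^{-1}\sigma^{T}\hat\pi$; substituting back, the coefficient of the quadratic form $\hat\pi^{T}\Sigma\hat\pi$ condenses to $-\tfrac12[\mu+\eta(1-\mu)]$ times a positive prefactor, and the first-order condition in $\hat\pi$ yields $\hat\pi_*=[\mu+\eta(1-\mu)]^{-1}\Sigma^{-1}(b-r\vec{1})x$, which matches the claimed form once we set $\omega=\mu+\eta(1-\mu)$, i.e.\ the defining relation (\ref{eta}); this in turn gives $\hat h_{F*}=\mu(1-\eta)\omega^{-1}\theta$. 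Cancelling the common factor collapses the firm's equation to $r-E^{-1/\gamma}+\theta^{T}\theta/(2\omega)=0$, which is the formula (\ref{e}) for $E$.

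Finally I would close the system: eliminating $E^{-1/\gamma}$ between the firm's relation (\ref{e}) and the scalar identity from the union's equation produces the quadratic $(\alpha-r)\omega^{2}-(1-\tfrac{\gamma}{2})\theta^{T}\theta\,\omega+\tfrac12(1-\gamma)(\lambda+\gamma)\theta^{T}\theta=0$, whose positive root under the hypotheses $\alpha>r$ and $\Delta>0$ is exactly (\ref{omega}); the assumption $\mu\neq1$ makes (\ref{eta}) well defined, and the imposed range $0<\eta<1$ guarantees that $W^{\bar F}$ is the increasing, $C^{1,2}$ exit-probability profile on $(l,v)$. The main obstacle is conceptual rather than computational: identifying the homogenizing constant $c$ and recognizing that $\omega=\mu+\eta(1-\mu)$ is forced by the firm's first-order condition, so that the two a priori coupled equations decouple into a single quadratic for $\omega$ followed by explicit formulas for $\eta$ and $E$. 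I would finish by verifying the sign conditions ($\omega>0$, hence the base $r+\theta^{T}\theta/(2\omega)>0$ and $E>0$, so $W^{\bar U}$ has the stated sign), confirming the regularity of both candidates in $C^{1,2}((l,v)\times\mathbf{R}_+)$, and thereby establishing that they solve (\ref{xhjbfc1})--(\ref{xhjbcf2}) with the listed optimizers.
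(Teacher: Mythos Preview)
Your proposal is correct and follows essentially the same route as the paper's proof: impose the separable/power ans\"atze, carry out the inner min over $\hat h$ and outer max over the control via first-order conditions, use the choice of $c$ so that $W^{\bar F}+c$ becomes homogeneous, reduce each HJBI equation to a scalar identity (the paper's (\ref{fc1}) and (\ref{fc2})), and eliminate $E^{-1/\gamma}$ between them to obtain the quadratic in $\omega$ whose positive root is (\ref{omega}). The only cosmetic difference is that the paper first writes the first-order conditions in terms of generic functions $W^{\bar U}(x),W^{\bar F}(x)$ and then specializes, whereas you substitute the explicit power forms from the outset; both arrive at the same two scalar equations and the same closing argument.
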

\begin{proof}
See Appendix \ref{a:jiexhjb}.
\end{proof}
\begin{remark}
Although we have introduced the constant $c$ in Eq.~(\ref{jf222}), {the exact value of $c$ is presented here} in Theorem \ref{jiexhjb}, {as such, we see that $c$ depends on $v,l$ and $\eta$.} {If  $c=0$, then we find that Problem (PI1) losses homogeneity, and the explicit form of the robust equilibrium strategy can  not be obtained.} In order to ensure the homogeneity in the solving procedure of Problem (PI1), $c$ is involved and depends on the financial system. In fact, the parameter $c$ does not affect the economic meaning of the penalty term in Eq.~(\ref{jf222}).
\end{remark}
\vskip 5pt
The results in Theorem \ref{jiexhjb} show that $\hat{h}_{U*}$ and $\hat{h}_{F*}$ are two constant processes, while $\hat{P}_*$ and $\hat{\pi}_*$ are two proportional functions  of $x$. As such, we always use $\hat{P}_*\left(x\right)$ and $\hat{\pi}_*\left(x\right)$ to represent $\hat{P}_*$ and $\hat{\pi}_*$.
\vskip 5pt
Under the strategy {$\pi=\{\hat{\pi}_*\left(X\left(t\right)\right): t\geq s\}$ and $P=\{\hat{P}_*\left(X\left(t\right)\right): t\geq s\}$, based on} Lemma \ref{lemm3.1} in the appendix, we know that SDE~(\ref{yuanfangcheng00}) has a unique solution $X^*$ under different probability measures ($\mathbb{P}$, $\mathbb{Q}^{h_{\bar{U}}^*}$ and  $\mathbb{Q}^{h_{\bar{F}}^*}$):
{\!\!\!\!\!\!\!\!\!\!\!\!\!\!\!\!{\begin{eqnarray}
\!\!\!\!\!\!\!\!X^*(t)\!\!\!\!\!&=&\!\!\!x\!\exp\left\{\left(r+\frac{\theta^T\theta}{\omega}-E^{-\frac{1}{\gamma}}-\frac{\theta^T\theta}{2\omega^2}\right)\left(t-s\right)+\frac{1}{\omega}\theta^T\left(W(t)-W(s)\right)\right\}\nonumber
		\\		\!\!\!\!\!\!\!&=&\!\!\!x\!\exp\left
\{\left(r+\frac{\theta^T\theta}{\omega}\!-\!E^{-\frac{1}{\gamma}}
\!\!-\!\!\theta^T\theta\frac{\lambda}{\omega^2}-\frac{\theta^T\theta}{2\omega^2}\right)
\left(t\!-\!s\right)+\frac{1}{\omega}\theta^T\left(W^{h_{\bar{U}}^*}(t)\!-\!W^{h_{\bar{U}}^*}(s)\right)\right\}\nonumber
		\\
		\!\!\!\!\!\!&=&\!\!\!x\!\exp\!\left\{\!\left(\!r+\!\frac{\theta^T\theta}
		{\omega}\!-\!E^{-\frac{1}{\gamma}}
\!-\!\theta^T\theta\frac{\mu\left(1\!-
\!\eta\right)}{\omega^2}\!-
\!\frac{\theta^T\theta}{2\omega^2}
\right)\left(t\!\!-\!\!s\right)\!+
\!\frac{1}{\omega}\theta^T\left(W^{h_{\bar{F}}^*}(t)\!\!-\!\!W_s^{h_{\bar{F}}^*}(s)\right)\right\}.
		\label{xzuiyoucaifu}
\end{eqnarray}}}
\vskip 5pt
In Theorem \ref{jiexhjb}, {we have obtained a pair of strategy $\left(\pi_*,P_*\right)$, where $\pi_*=\left\{\hat{\pi}_*\left(X\left(t\right)\right): t\geq s\right\}$,  $P_*=\left\{\hat{P}_*\left(X\left(t\right)\right):t\geq s\right\}$, and a pair of measure transformation process  $\left(h_{\bar{F}}^*,h_{\bar{U}}^*\right)$, where $h_{\bar{F}}^*\left(t\right)\equiv \hat{h}_{\bar{F}}, h_{\bar{U}}^*\left(t\right)\equiv \hat{h}_{\bar{U}}, t\ge s$. We need to verify that the strategy and processes are admissible. Indeed, we easily know    $ \pi_*\left(t\right)>0$, $P_*\left(t\right)>0$, $\forall t\ge s$, $\mathbb{P}\text{-a.s.}$ ($\mathbb{Q}^{h_{\bar{U}}}\text{-a.s.}, \mathbb{Q}^{h_{\bar{F}}}\text{-a.s.}$) and $\mathcal{T}<\infty,\mathbb{P}\text{-a.s.}$ ($ \mathbb{Q}^{h_{\bar{U}}}\text{-a.s.},\mathbb{Q}^{h_{\bar{F}}}\text{-a.s.}$). As such, Condition (i) in Definition \ref{def2.1} holds. Eq.~(\ref{xzuiyoucaifu}) shows the existence of the solution to SDE (\ref{yuanfangcheng00}), i.e., Condition (ii) in Definition \ref{def2.1}  also holds, thus $\left(\pi_*,P_*\right)\in\Lambda$. Similar to the proof of Lemma  \ref{pro3.2}, we can get $h_{\bar{U}}^*\in\mathcal{H}_{\bar{U}}\left(\pi_*,P_*\right)$. As $h_{\bar{F}}^*$ is a constant, $h_{\bar{F}}^*\in\mathcal{H}_{\bar{F}}\left(\pi_*,P_*\right)$.}

\vskip 5pt
{Next, we only need to prove the optimality of $(\pi_*,P_*)$ to Problem (PI1), similar to Lemma \ref{pro3.3}, we present the following lemma first.}
\begin{lemma}
	If the assumptions in Theorem \ref{jiexhjb} hold, {then} the two functions $W^{\bar{U}}\left(\cdot,\cdot\right)$ {and} $W^{\bar{F}}\left(\cdot,\cdot\right)$ presented in Theorem \ref{jiexhjb} satisfy
	\begin{enumerate}
    \item[(i)]$\forall$ $\hat{h}_{\bar{U}}\in\mathbf{R}$,$\quad\forall$ $\hat{h}_{\bar{F}}\in\mathbf{R},\quad\forall
        \left(t,y\right)\in\left(l,v\right)\times\left(0,\infty\right),$
	\begin{eqnarray*}		&&\mathcal{A}^{\hat{\pi}_*\left(y\right),\hat{P}_*\left(y\right),\hat{h}_{\bar{U}}}W^{\bar{U}}\left(t,y\right)+\Phi^{\bar{U}}\left(t,y,\hat{\pi}_*\left(y\right),\hat{P}_*\left(y\right),\hat{h}_{\bar{U}},W^{\bar{U}}\right)\ge 0,\nonumber\\		&&\mathcal{A}^{\hat{\pi}_*\left(y\right),\hat{P}_*\left(y\right),\hat{h}_{\bar{F}}}W^{\bar{F}}\left(t,y\right)+\Phi^{\bar{F}}\left(t,y,\hat{\pi}_*\left(y\right),\hat{P}_*\left(y\right),\hat{h}_{\bar{F}},W^{\bar{F}}\right)\ge 0.
\end{eqnarray*}
 \item[(ii)]$\forall $ $\hat{P}\in\mathbf{R}$,$\quad\forall$ $\hat{\pi}\in\mathbf{R}$,$\quad\forall$ $ \left(t,y\right)\in\left(l,v\right)\times\left(0,\infty\right),$
	\begin{eqnarray*}		
&&\mathcal{A}^{\hat{\pi}_*\left(y\right),
\hat{P},\hat{h}_{U*}}W^{\bar{U}}\left(t,y\right)+\Phi^{\bar{U}}\left(t,y,\hat{\pi}_*\left(y\right),\hat{P},\hat{h}_{U*},W^{\bar{U}}\right)\le 0,\nonumber\\		&&\mathcal{A}^{\hat{\pi},\hat{P}_*\left(y\right),\hat{h}_{F*}}W^{\bar{F}}\left(t,y\right)+\Phi^{\bar{F}}\left(t,y,\hat{\pi},\hat{P}_*\left(y\right),\hat{h}_{F*},W^{\bar{F}}\right)\le 0.
	\end{eqnarray*}
 \item[(iii)]$\forall \ \left(t,y\right)\in\left(l,v\right)\times\left(0,\infty\right),$
	\begin{eqnarray*}
		&&\mathcal{A}^{\hat{\pi}_*\left(y\right),\hat{P}_*\left(y\right),\hat{h}_{U*}}W^{\bar{U}}\left(t,y\right)+\Phi^{\bar{U}}\left(t,y,\hat{\pi}_*\left(y\right),\hat{P}_*\left(y\right),\hat{h}_{U*},W^{\bar{U}}\right)= 0,\nonumber\\
		&&\mathcal{A}^{\hat{\pi}_*\left(y\right),\hat{P}_*\left(y\right),\hat{h}_{F*}}W^{\bar{F}}\left(t,y\right)+\Phi^{\bar{F}}\left(t,y,\hat{\pi}_*\left(y\right),\hat{P}_*\left(y\right),\hat{h}_{F*},W^{\bar{F}}\right)= 0.\nonumber
	\end{eqnarray*} \item[(iv)]
	$\forall $\ $\left(\pi,P\right)\in\Lambda$, \ $h_{\bar{U}}\in\mathcal{H}_{\bar{U}}\left(\pi,P\right)$, $h_{\bar{F}}\in\mathcal{H}_{\bar{F}}\left(\pi,P\right)$, and  for any sequence of stopping time $\left\{\mathcal{T}_N\right\}_{N\ge 1}\uparrow \mathcal{T}$, $ \mathbb{Q}^{h_{\bar{F}}}$-a.s., the following equations hold
	\begin{eqnarray}
		&&\lim_{T\rightarrow+\infty}\E_{s,x}^{h_{\bar{U}}}W^{\bar{U}}\left(T,X\left(T\right)\right)=0,\\
		&&\lim_{N\rightarrow+\infty}\E_{s,x}^{h_{\bar{F}}}W^{\bar{F}}\left(\mathcal{T}_N,X\left(\mathcal{T}_N\right)\right)=\E_{s,x}^{h_{\bar{F}}}h\left(X\left(\mathcal{T}\right)\right),\label{ui+as}
	\end{eqnarray}
	where $X$ is the unique solution of SDE~(\ref{yuanfangcheng00}).
		\end{enumerate}
	\label{proxxingzhi}
\end{lemma}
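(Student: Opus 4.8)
The plan is to establish (i)--(iii) as \emph{pointwise} statements at each fixed $(t,y)\in(l,v)\times(0,\infty)$, and to treat (iv) as two separate transversality arguments. The key observation for (i)--(iii) is that, once the functions $W^{\bar U}$ and $W^{\bar F}$ from Theorem~\ref{jiexhjb} are substituted, each bracketed expression $\mathcal{A}^{\cdot}W+\Phi^{\cdot}$ becomes an explicit quadratic in the free $h$-variable, or a sum of a linear term and a strictly concave term in the free strategy variable. So I would first record the stationarity identity (iii) by direct substitution, which is exactly the computation already carried out in the proof of Theorem~\ref{jiexhjb}, and then read off (i) and (ii) as the convexity/concavity inequalities centred at that stationary point.

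For (i) I would fix the strategy at $(\hat\pi_*(y),\hat P_*(y))$ and view the left-hand side as a function of $\hat h_{\bar U}$ (resp.\ $\hat h_{\bar F}$). Its quadratic coefficient is $\tfrac{1-\gamma}{2\lambda}W^{\bar U}(t,y)$ (resp.\ $\tfrac{1}{2\mu}\bigl(W^{\bar F}(t,y)+c\bigr)$), and the decisive positivity facts are $(1-\gamma)W^{\bar U}(t,y)=Ee^{-\alpha t}y^{1-\gamma}>0$ and $W^{\bar F}(t,y)+c=\dfrac{y^{1-\eta}}{v^{1-\eta}-l^{1-\eta}}>0$, both holding because $E>0$, $0<\eta<1$ and $v>l>0$, and in particular \emph{independently} of whether $\gamma\lessgtr 1$. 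Hence each expression is strictly convex in the measure-transformation variable; computing its first-order condition and using $\sigma^{T}\hat\pi_*(y)=\tfrac{y}{\omega}\theta$ (which follows from $\Sigma=\sigma\sigma^{T}$ and $\theta=\sigma^{-1}(b-r\vec 1)$) shows the unique minimizer is precisely $\hat h_{U*}=\tfrac{\lambda}{\omega}\theta$ (resp.\ $\hat h_{F*}=\tfrac{\mu(1-\eta)}{\omega}\theta$). Combined with (iii), the minimum value is $0$, so the expression is $\ge 0$ for every $\hat h$, which is (i).

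For (ii) I would symmetrically fix the worst-case measure at $\hat h_{U*}$ (resp.\ $\hat h_{F*}$) and vary the strategy. For the union the $\hat P$-dependence is $e^{-\alpha t}\tfrac{\hat P^{1-\gamma}}{1-\gamma}-\hat P\,W^{\bar U}_x(t,y)$, strictly concave in $\hat P>0$ by the CRRA form, whose first-order condition $\hat P^{-\gamma}=Ey^{-\gamma}$ returns the maximizer $\hat P_*(y)=E^{-1/\gamma}y$. For the firm the $\hat\pi$-dependence is the quadratic $\hat\pi^{T}(b-r\vec 1-\sigma\hat h_{F*})W^{\bar F}_x+\tfrac12\hat\pi^{T}\Sigma\hat\pi\,W^{\bar F}_{xx}$, strictly concave because $\Sigma$ is positive definite and $W^{\bar F}_{xx}(t,y)=-\eta(1-\eta)\dfrac{y^{-\eta-1}}{v^{1-\eta}-l^{1-\eta}}<0$; its first-order condition returns $\hat\pi_*(y)=\tfrac{1}{\omega}\Sigma^{-1}(b-r\vec 1)y$. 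Together with (iii) the maximum value is $0$, giving (ii).

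The remaining work, and the part I expect to be delicate, is the transversality statement (iv). The union's limit $\lim_{T\to\infty}\E^{h_{\bar U}}_{s,x}W^{\bar U}(T,X(T))=\lim_{T\to\infty}\E^{h_{\bar U}}_{s,x}\bigl[Ee^{-\alpha T}\tfrac{X(T)^{1-\gamma}}{1-\gamma}\bigr]=0$ is immediate from admissibility, since $h_{\bar U}\in\mathcal{H}_{\bar U}(\pi,P)$ satisfies all conditions of Definition~\ref{def2.12}, in particular Condition~(iii) there. The firm's identity is the genuinely new point: here $W^{\bar F}(t,y)=\dfrac{y^{1-\eta}-l^{1-\eta}}{v^{1-\eta}-l^{1-\eta}}$ is time-independent and takes values in $[0,1]$ for $y\in[l,v]$, with $W^{\bar F}(\cdot,l)=0=h(l)$ and $W^{\bar F}(\cdot,v)=1=h(v)$. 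Along $\mathcal{T}_N\uparrow\mathcal{T}$ the paths remain in $(l,v)$, so $W^{\bar F}(\mathcal{T}_N,X(\mathcal{T}_N))$ is uniformly bounded; by path-continuity together with the admissibility requirement $\mathcal{T}<\infty$, $\mathbb{Q}^{h_{\bar F}}$-a.s.\ (Condition~(ii) of Definition~\ref{def5.3}), one gets $X(\mathcal{T}_N)\to X(\mathcal{T})\in\{l,v\}$ and hence $W^{\bar F}(\mathcal{T}_N,X(\mathcal{T}_N))\to h(X(\mathcal{T}))$ almost surely. The bounded convergence theorem then yields $\lim_{N\to\infty}\E^{h_{\bar F}}_{s,x}W^{\bar F}(\mathcal{T}_N,X(\mathcal{T}_N))=\E^{h_{\bar F}}_{s,x}h(X(\mathcal{T}))$. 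The main obstacle is to run this last step \emph{without} the usual uniform-integrability/localization machinery: boundedness of $W^{\bar F}$ on $[l,v]$ is precisely what replaces it, so the argument hinges on confirming that the surplus cannot leave $(l,v)$ before $\mathcal{T}$ and that $\mathcal{T}$ is finite almost surely under the transformed measure.
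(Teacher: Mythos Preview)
Your proposal is correct and follows essentially the same approach as the paper's proof: parts (i)--(iii) are reduced to the pointwise convexity/concavity computations already carried out in Theorem~\ref{jiexhjb} (the paper simply refers back to the proof of Lemma~\ref{pro3.3}, noting that $\eta>0$ is what makes $W^{\bar F}_{xx}<0$ for the second inequality of (ii)), and the union part of (iv) is read off from Condition~(iii) of Definition~\ref{def2.12}. For the firm part of (iv), the paper bounds $\sup_N\E|X(\mathcal{T}_N)|^2$ to get uniform integrability and then passes to the limit, whereas you invoke bounded convergence directly from $W^{\bar F}\in[0,1]$ on $[l,v]$; these are the same observation, and your version is if anything slightly more direct.
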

\begin{proof}
See Appendix \ref{a:proxxingzhi}.
\end{proof}
Finally, we show that $(\pi_*,P_*)$ is indeed the robust equilibrium strategy of Problem (PI1), {i.e., we} verify that the solution of the HJBI equations in Proposition \ref{pro5.1} and the related maximum or minimum points exactly solve Problem (PI1).
\begin{theorem}
Suppose that the assumptions in Theorem \ref{jiexhjb} hold. Then $W^{\bar{U}}$ {and} $W^{\bar{F}}$ in Proposition \ref{pro5.1} are value functions of the union and firm in the second robust game. $\left(\pi_*,P_*\right)$ is the robust equilibrium strategy of Problem (PI1). {In addition,} the related worst case measure transformation processes of the firm and the union are given by {$h_{\bar{F}}^*$ and $h_{\bar{U}}^*$, respectively.} Eq.~(\ref{xzuiyoucaifu}) shows the explicit form of the corresponding optimal wealth process.
\label{xyanzheng}
\end{theorem}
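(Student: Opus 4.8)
The plan is to adapt the verification argument of Theorem \ref{the3.2} to the present setting, treating the union and the firm separately and then gluing the resulting one-sided estimates into a saddle point. Since the union's objective $J_{\bar U}$ and its admissibility class coincide with those of Problem (PI), the union half of the argument can be reproduced almost verbatim: for fixed $(\pi_*,P_*)$ apply It\^o's formula to $W^{\bar U}(t,X(t))$ on $[s,T]$ under $\mathbb{Q}^{h_{\bar U}}$, use condition (iv) of Definition \ref{def2.12} to see that the $\rd W^{h_{\bar U}}$--integral is a genuine $\mathbb{Q}^{h_{\bar U}}$--martingale, take expectations, and invoke property (i) of Lemma \ref{proxxingzhi}. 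Letting $T\to\infty$ with the limit in Lemma \ref{proxxingzhi}(iv) and the integral-expansion / monotone-convergence argument (the running utility keeps a constant sign) yields $W^{\bar U}(s,x)\le J_{\bar U}(s,x,\pi_*,P_*,h_{\bar U})$ for every admissible $h_{\bar U}$, hence $W^{\bar U}\le V_{\bar U}^{\pi_*}$; property (ii) applied with $h_{\bar U}^*$ gives the reverse inequality, and property (iii) gives equality along $(\pi_*,P_*,h_{\bar U}^*)$.

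The firm's half is where the second game genuinely departs from Problem (PI), because the reward is the boundary value $h(X(\mathcal{T}))$ collected at the exit time $\mathcal{T}=\mathcal{T}_l\wedge\mathcal{T}_v$ rather than a discounted running utility over $[s,\infty)$. Here I would apply It\^o's formula to $W^{\bar F}(X(t))$ on $[s,\mathcal{T}_N]$ along the localizing sequence $\mathcal{T}_N\uparrow\mathcal{T}$ of Lemma \ref{proxxingzhi}(iv). On $[l,v]$ the state stays bounded away from $0$ and $\infty$, so both $W_x^{\bar F}\propto x^{-\eta}$ and $\hat\pi_*(x)\propto x$ are bounded; consequently the stopped stochastic integral is a true martingale and its expectation vanishes. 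Combining this with property (i) of Lemma \ref{proxxingzhi} gives
\begin{equation*}
W^{\bar F}(s,x)\le \E_{s,x}^{h_{\bar F}}W^{\bar F}\left(\mathcal{T}_N,X(\mathcal{T}_N)\right)+\E_{s,x}^{h_{\bar F}}\int_{s}^{\mathcal{T}_N}\Phi^{\bar F}\left(t,X(t),\hat\pi_*,\hat P_*,h_{\bar F},W^{\bar F}\right)\rd t.
\end{equation*}
Letting $N\to\infty$, the boundary term converges to $\E_{s,x}^{h_{\bar F}}h(X(\mathcal{T}))$ by Eq.~(\ref{ui+as}) (this is exactly where the matching of the boundary data $W^{\bar F}(s,l)=0=h(l)$ and $W^{\bar F}(s,v)=1=h(v)$ is used), while the penalty integral converges upward by monotone convergence since $\Phi^{\bar F}=\frac{h_{\bar F}^Th_{\bar F}}{2\mu}(W^{\bar F}+c)\ge 0$. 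Hence $W^{\bar F}(s,x)\le J_{\bar F}(s,x,\pi_*,P_*,h_{\bar F})$ for every $h_{\bar F}\in\mathcal{H}_{\bar F}(\pi_*,P_*)$, and the reverse inequality (with the supremum over $\pi$) follows from property (ii) applied along the constant process $h_{\bar F}^*$.

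Finally, I would assemble the pieces exactly as in Theorem \ref{the3.2}: the two one-sided bounds give $W^{\bar U}=V_{\bar U}^{\pi_*}$ and $W^{\bar F}=V_{\bar F}^{P_*}$, property (iii) shows these values are attained at $(\pi_*,P_*,h_{\bar U}^*,h_{\bar F}^*)$, and the short minimax chain (the inf--sup versus sup--inf comparison used for Ineq.~(\ref{23})) upgrades this to the robust Nash equilibrium in the sense of Definition \ref{definition of value function}. The explicit trajectory formula (\ref{xzuiyoucaifu}) then identifies the optimal wealth under $\mathbb{P}$, $\mathbb{Q}^{h_{\bar U}^*}$ and $\mathbb{Q}^{h_{\bar F}^*}$.

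The main obstacle is the firm's passage to the limit at the \emph{random} exit time: unlike the infinite-horizon union problem, one cannot simply send a deterministic $T\to\infty$, and must instead control the stopped process along $\mathcal{T}_N\uparrow\mathcal{T}$ and justify interchanging limit and expectation in the boundary term. This rests on $\mathcal{T}<\infty$ a.s. (admissibility condition (ii) of Definition \ref{def5.3}), the boundedness of $W^{\bar F}$ on $[l,v]$ (so that a dominated-convergence argument delivers Eq.~(\ref{ui+as})), and the nonnegativity of the penalty $\Phi^{\bar F}$, all of which are already furnished by Lemma \ref{proxxingzhi} and the admissibility definitions.
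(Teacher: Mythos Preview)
Your approach is essentially the paper's: treat the union exactly as in Theorem~\ref{the3.2}, and for the firm run It\^o on $[s,\mathcal{T}_N]$, combine with Lemma~\ref{proxxingzhi}(i)--(iii), and pass to the limit via Lemma~\ref{proxxingzhi}(iv) together with the nonnegativity of $\Phi^{\bar F}$. That is correct in spirit.

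There is, however, one genuine gap in your firm argument. Your justification that the stopped stochastic integral has zero expectation relies on the boundedness of $W_x^{\bar F}(x)\propto x^{-\eta}$ \emph{and} of $\hat\pi_*(x)\propto x$ on $[l,v]$. This covers the inequality coming from Lemma~\ref{proxxingzhi}(i), where the strategy is $\pi_*$. But for the reverse inequality (your ``property~(ii) applied along $h_{\bar F}^*$'') you must vary $\pi$ over all admissible strategies with $(\pi,P_*)\in\Lambda$, and nothing in Definition~\ref{def2.1} or Definition~\ref{def5.3} forces such a $\pi$ to be bounded on $[s,\mathcal{T})$. Your boundedness argument therefore does not yield the martingale property in that direction, and the step ``follows from property~(ii)'' is not justified as written.

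The paper closes this gap by \emph{choosing} the localizing sequence rather than invoking an unspecified one: it sets
\[
\mathcal{T}_N\;=\;\mathcal{T}\wedge N\wedge\inf\Bigl\{t>s:\int_s^t\bigl|W_x^{\bar F}(z,X(z))\,\pi(z)^T\sigma\bigr|^2\,\rd z\ge N\Bigr\},
\]
so that $\bigl\{\int_s^{T\wedge\mathcal{T}_N}W_x^{\bar F}\,\pi^T\sigma\,\rd W^{h_{\bar F}}: s\le T\le N\bigr\}$ is a true $\mathbb{Q}^{h_{\bar F}}$--martingale for \emph{every} admissible $\pi$ and $h_{\bar F}$. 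Because $\mathcal{T}<\infty$ a.s.\ and the integrand is locally square-integrable (the SDE~(\ref{yuanfangcheng00}) is well posed), one still has $\mathcal{T}_N\uparrow\mathcal{T}$, so Lemma~\ref{proxxingzhi}(iv) applies to this sequence and all three inequalities (i), (ii), (iii) go through uniformly. Note also that Lemma~\ref{proxxingzhi}(iv) is stated for \emph{any} sequence $\mathcal{T}_N\uparrow\mathcal{T}$; it does not supply one, so your phrase ``the localizing sequence of Lemma~\ref{proxxingzhi}(iv)'' is a misreading---you have to construct it, and the construction above is what makes the $\pi$-varying direction work.
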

\begin{proof}
See Appendix \ref{a:xyanzheng}.
\end{proof}
We call   $\left(\hat{\pi}_*\left(\cdot\right),\hat{P}_*\left(\cdot\right)\right)$ and  $\left(\hat{h}_{F*},\hat{h}_{U*}\right)$ as the optimal feedback functions and the worst case measure transformation parameters of the firm and the union, respectively. Similar with the first game, we are interested in the optimal feedback functions of the two players' strategies. We {see} that the robust equilibrium strategy is still proportional to the wealth of the fund, which can be easily applied in practice.
\vskip 5pt
\section{\bf Sensitivity analysis}
\label{NA}
In this section, we illustrate the impacts of the risk and ambiguity aversion parameters on the robust equilibrium strategies $(\hat{P}_*,\hat{\pi}_*)$ in the two games. For simplicity, we consider a financial market with one risk-free bond and one risky asset. The parameters we adopt are the same as in \cite{Josa2019Equilibrium}: in the bull market, $(b,\sigma)=(0.144604,0.10748)$ while in the bear market, $(b,\sigma)=(0.014,0.2678)$. {In addition}, $r=0.01$, $\alpha=\beta=0.02$. We restrict $\gamma$, $\delta\in\left[1,10\right]$, which means that both the union and firm have high risk aversion coefficients.
\vskip 5pt
\begin{figure}[htbp]
	\centering
	\includegraphics[scale=0.62]{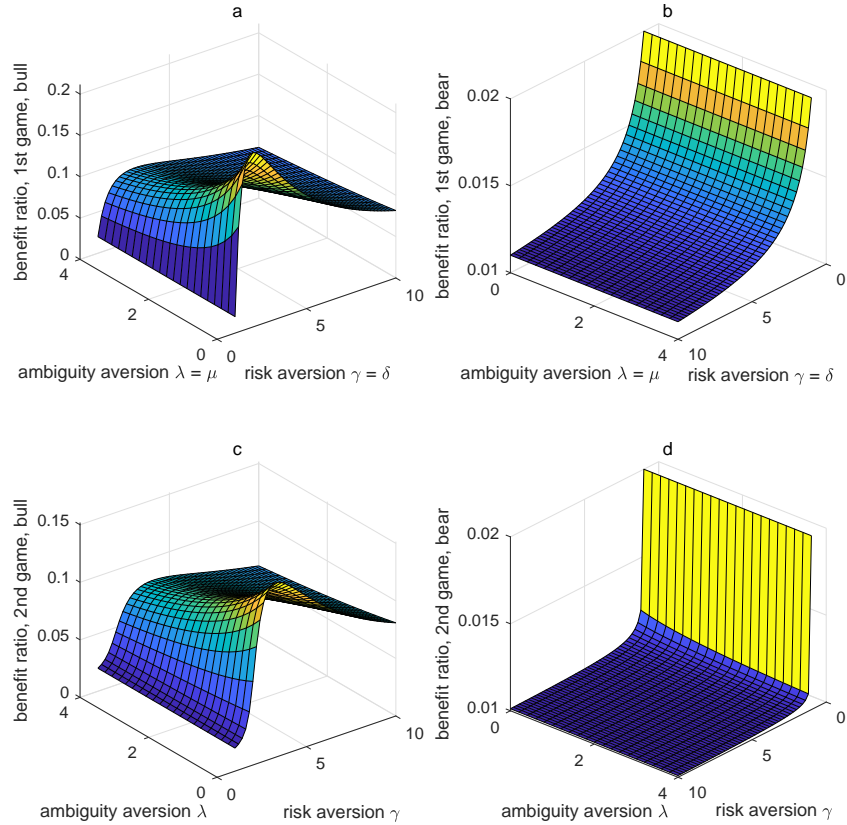}
	\caption{Benefit ratio $\frac{\hat{P}_*}{x}$ in two games, bull and bear.}
	\label{figure1}
\end{figure}

We first show the impacts in the first game, where the firm seeks to maximize the expected discounted utility of the fund surplus. The robust investment ratio $\frac{\hat{\pi}_*}{x}$ has a simple form and has been discussed in {Subsection} \ref{RA}. To compare the influences of the risk aversion parameter and ambiguity aversion parameter on the robust benefit ratio separately, we set $\gamma=\delta\in\left[1,10\right]$ and $\lambda=\mu\in\left[0,4\right]$, which means that the union and the firm have the same risk aversion coefficients and ambiguity aversion coefficients. Figure.~\ref{figure1}(a) shows that the impact of the ambiguity aversion is much more moderate than the risk aversion in the bull market. The benefit ratio increases with the risk aversion parameter rapidly first and the decreases slowly. When the firm and the union have less confidence about the financial model, the benefits claimed from the fund decrease. The impacts in the bear market are  illustrated in Figure.~\ref{figure1}(b). We  {observe}  from Figure.~\ref{figure1}(b) that  ambiguity aversion has very little influence while the risk aversion has a large impact on the benefit ratio in the bear market. The benefit ratio of the union is always a decreasing function of the risk aversion parameter. Moreover, comparing Fig.~\ref{figure1}(a) and Fig.~\ref{figure1}(b), we see that compared with in the bull market, the risky asset has less expected return and the union will be more conservative when the economy is in recession.

\vskip 5pt
{Next,} we focus on the second game, where the firm seeks to maximize  the probability of reaching an upper level before hitting a lower level. {Based on}  Theorem \ref{jiexhjb}, the robust equilibrium strategies rely on the union's aversion parameters while are independent of the firm's ambiguity aversion. Fig.~\ref{figure1}(c) and Fig.~\ref{figure1}(d) depict the evolution of the benefit ratio with the union's risk aversion parameter $\gamma$ and ambiguity aversion parameter $\lambda$  in a bull and bear market, respectively. {Comparing Figs.~\ref{figure1}(a,b) and Figs.~\ref{figure1}(c,d), we  {see} that the impacts of the ambiguity and risk aversion parameters on the benefit ratio in the two games are similar.}  The ambiguity aversion parameter always has a negative effect on the benefit ratio. However, in the bull market, the benefit ratio is an inverted $U$-shaped function of risk aversion parameter while in the bear market, the benefit ratio decreases with $\gamma$. {In addition}, the decrease in the bear market is very drastic. When $\gamma$ increases, {the benefit ratio decreases rapidly}. Observing Figs.~\ref{figure1}(a,b) and Figs.~\ref{figure1}(c,d), we {also see} that compared with the first game, the union  adopts  a  conservative behaviour in the second game.

\begin{figure}[h]
	\centering
	\includegraphics[scale=0.66]{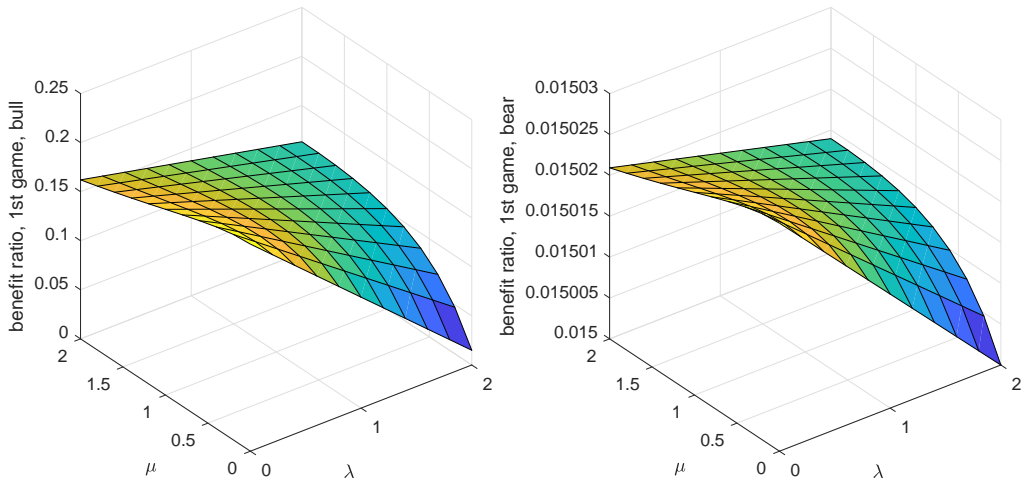}
	\caption{Benefit ratio $\frac{\hat{P}_*}{x}$ in first game, bull and bear, $\gamma=\delta=2$.}
	\label{figure2}
\end{figure}
Previously, we suppose that the firm and the union have the same aversion parameters. In this paper, we are mainly interested in the effects of model uncertainty on the players' behaviours. Next, we set $\gamma=\delta=2$ and study the impacts of the firm's and union's ambiguity aversion parameters separately. Fig.~\ref{figure2} reveals that in the first game, no matter in a bull or bear market, the benefit ratio decreases with the union's ambiguity aversion parameter $\lambda$ {while increases first and then decreases with the firm's ambiguity aversion parameter $\mu$, see Eq.~(\ref{piandao4})}. Besides, when $\mu$ increases, the benefit ratio is less sensitive to $\lambda$. It is also natural that the benefit ratio in the bear market is still lower that in the bull market.


\begin{figure}[h]
	\centering
	\includegraphics[scale=0.62]{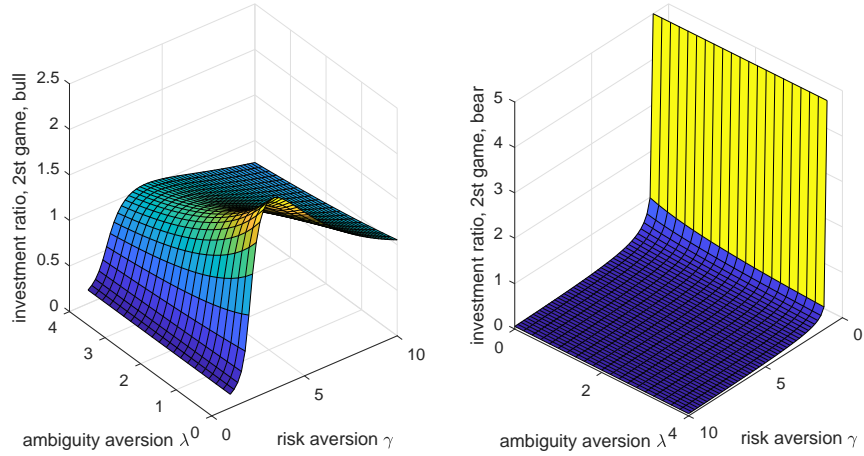}
	\caption{Proportion of fund $\frac{\hat{\pi}_*}{x}$ invested in the risky asset in the second game, bull and bear.}
	\label{figure3}
\end{figure}
\vskip 5pt
Fig.~\ref{figure3} shows that the evolution of  investment ratio has a similar pattern with the benefit ratio in the second game shown in Fig.~\ref{figure1}(c) and Fig.~\ref{figure1}(d). The impact of the risk aversion parameter $\gamma$ on the investment strategy is larger than the ambiguity aversion parameter $\lambda$. When $\gamma$ increases, the investment ratio increases rapidly to the highest point and then  decreases slowly in the bull market. Meanwhile, the investment ratio is a decreasing function of $\lambda$. However, in the bear market, the effect of $\gamma$ is much larger than that of $\lambda$.  When $\gamma$ increases, the investment strategy decreases rapidly to around zero.
\vskip 5pt
In fact, these two types of aversion parameters of the union always indirectly affect the firm's investment strategy $\frac{\hat{\pi}_*}{x}$ through the benefit ratio $\frac{\hat{P}_*}{x}$. When the benefits claimed by the union increase, the firm will increase risky allocation to achieve high returns to ensure the stability of the fund surplus. As such, the evolution of the investment strategy is similar with the benefit ratio in the second game.

%
\vskip 5pt
Comparing the two sub-figures in Fig.~\ref{figure3}, we find that when $\gamma$ is low, the investment ratio in the bear market is higher than that in the bull market. However, the conclusion is opposite when the risk aversion coefficient $\gamma$ is relatively high. When the economy is in recession and the union has low risk aversion, the benefit ratio is high relatively to the risk-free interest rate. As such, the fund surplus may always be above the lower level. In order to reach the upper level, the firm will adopt a  much more aggressive investment strategy.  On the other hand, if the union has high risk aversion, {then} the benefit ratio becomes close to $r$, {as such}, the investment strategy decreases.

%
\vskip 15pt
\section{\bf Conclusion}
In this paper, we investigate the overfunded DB pension plan game proposed in \\ \cite{Josa2019Equilibrium}. The firm manages the fund surplus and invests in the financial market while the pension fund participants claim part of the fund surplus as additional benefits. Different from \cite{Josa2019Equilibrium},  {we} assume that both the firm and the union are ambiguity aversion and consider two robust stochastic non-zero-sum games for them. The aims of the firm and the union are to searching the robust equilibrium strategies under worst case scenario.
\vskip 5pt
In this first game, both players maximize their expected discounted utility under worst case scenario. However, in the second game, the firm aims to minimize the probability of the fund surplus reaching a low level. {Using} stochastic dynamic programming method, we obtain the explicit forms of the robust equilibrium strategies, the value functions and worst case measures. {Particularly,}  in the first game, when the firm and the union have the same aversion parameters, the robust equilibrium strategy is Pareto optimal. We {also} see that the robust equilibrium strategies are proportional to the fund surplus in these two games. Besides, the robust investment strategy of the firm in the first game is only related to the risk aversion and ambiguity aversion of the firm. However, in the second game, the robust investment strategy  only depends on the risk aversion and ambiguity aversion of the union. Finally, we present sensitivity analysis to {reveal} the player's economic behaviors. Ambiguity aversion always prompts the union to adopt a more conservative strategy whether the economy is booming or sluggish, but this effect is smaller when the economy is in recession.

\vskip 5pt
On the one hand, the robust optimal strategy in our work can provide a guidance to the firm and the union in practice to manage risk of model uncertainty. On the other hand, different from many previous work, we replace the uniform integrability condition by another condition, which is useful in {proving} the verification theorem of robust control problem.
\vskip 15pt
{\bf Acknowledgements.}
The authors acknowledge the support from the National Natural Science Foundation of China (Grant  No.11901574, No.11871036, and No.11471183). The authors also thank the members of the group of {Mathematical Finance and Actuarial Science}  at the Department of Mathematical Sciences, Tsinghua University for their feedbacks and useful conversations.
\vskip 15pt
\appendix
\renewcommand{\theequation}{\thesection.\arabic{equation}}
\section{Proof of Theorem \ref{the3.1}.}\label{a:the3.1}
\begin{proof}
{Based on Remarks \ref{rem2} and \ref{rem3}}, we guess that {the solutions of the HJBI equations (\ref{hjbfc1})-(\ref{hjbcf2}) have the forms:}
	\begin{equation}
		W^U\left(s,x\right)=e^{-\alpha s}W^U\left(x\right),
		W^F\left(s,x\right)=e^{-\beta s}W^F\left(x\right),
		\label{xingzhi1}
	\end{equation}
with the properties
	\begin{equation}
		\left(1-\gamma\right)W^U\left(x\right)>0,\quad
		\left(1-\delta\right)W^F\left(x\right)>0.
		\label{xingzhi2}
	\end{equation}
{By using} Eq.~(\ref{xingzhi1}), the HJBI {equations are rewritten as follows:}
	\begin{eqnarray}
		0&=&\sup_{\hat{P}}\inf_{\hat{h}_U}\{-\alpha W^U\left(x\right)+\left(rx+\hat{\pi}^T\left(b-r\vec{1}\right)-\hat{P}-\hat{\pi}^T\sigma \hat{h}_U\right)W^U_x\left(x\right)\nonumber
		\\
		&&+\frac{1}{2}\hat{\pi}^T\Sigma\hat{\pi} W^U_{xx}\left(x\right)+\frac{\hat{P}^{1-\gamma}}{1-\gamma}+\left(1-\gamma\right)\frac{\hat{h}_U^T\hat{h}_U}{2\lambda}W^U\left(x\right)\},
		\label{jianshi1}\\
		0&=&\sup_{\hat{\pi}}\inf_{\hat{h}_F}\{-\beta W^F\left(x\right)+\left(rx+\hat{\pi}^T\left(b-r\vec{1}\right)-\hat{P}-\hat{\pi}^T\sigma \hat{h}_F\right)W^F_x\left(x\right)\nonumber\\
		&&+\frac{1}{2}\hat{\pi}^T\Sigma\hat{\pi} W^F_{xx}\left(x\right)+\frac{x^{1-\delta}}{1-\delta}+\left(1-\delta\right)\frac{\hat{h}_F^T\hat{h}_F}{2\mu}W^F\left(x\right)\}.
		\label{jianshi2}
	\end{eqnarray}
{Using Eq.~(\ref{xingzhi2}), we know that the function on the right side of Eq.~(\ref{jianshi1}) is an open-up quadratic function of $\hat{h}_U$. As such, the first-order condition is}
	\begin{equation}
		\hat{h}_{U*}=\frac{\lambda\sigma^T\hat{\pi} W_x^U\left(x\right)}{\left(1-\gamma\right)W^U\left(x\right)}.
		\label{hu00}
	\end{equation}
Denote the right side of Eq.~(\ref{jianshi1}) by $f_1\left(\hat{P}\right)$, then we have
	$f_1^{'}=\hat{P}^{-\gamma}-W_x^U\left(x\right)$ and $f_1^{''}=-\gamma \hat{P}^{-1-\gamma}<0$, as such,
	\begin{equation}
		\hat{P}_*=\left(W_x^U\left(x\right)\right)^{-\frac{1}{\gamma}}.
		\label{p0}
	\end{equation}
    Similarly, if Eq.~(\ref{xingzhi2}) holds, when
	\begin{equation}
		\hat{h}_{F*}=\frac{\mu\sigma^T\hat{\pi} W_x^F\left(x\right)}{\left(1-\delta\right)W^F\left(x\right)},
		\label{hf00}
	\end{equation}
{the right side of Eq.~(\ref{jianshi2}), i.e., $f_2\left(\hat{\pi}\right)$} attains the minimum and  {is}
	\begin{eqnarray*}
		f_2\left(\hat{\pi}\right)&=&
		-\beta W^F\left(x\right)+\left(rx+\hat{\pi}^T\left(b-r\vec{1}\right)-\hat{P}\right)W^F_x\left(x\right)\nonumber
		\\
		&&+\frac{1}{2}\hat{\pi}^T\Sigma\hat{\pi} W^F_{xx}\left(x\right)
		+\frac{x^{1-\delta}}{1-\delta}-\frac{\mu\hat{\pi}^T\Sigma\hat{\pi} \left(W_x^F\left(x\right)\right)^2}{2\left(1-\delta\right)W^F\left(x\right)}.
	\end{eqnarray*}
{As such,}
	\begin{equation*}
		\nabla f_2=\left(b-r\vec{1}\right)W_x^F\left(x\right)-\left[\frac{\mu \left(W_x^F\left(x\right)\right)^2}{\left(1-\delta\right)W^F\left(x\right)}-W_{xx}^F\left(x\right)\right]\Sigma\hat{\pi}.
	\end{equation*}
	If
	\begin{equation}
		D\triangleq\frac{\mu \left(W_x^F\left(x\right)\right)^2}{\left(1-\delta\right)W^F\left(x\right)}-W_{xx}^F\left(x\right)>0,
		\label{xingzhi3}
	\end{equation}
	{then}
	\begin{equation*}
		\nabla^2 f_2=-D\Sigma\prec 0.
	\end{equation*}
{Thus},
	\begin{equation}
		\hat{\pi}_*=\frac{\Sigma^{-1}\left(b-r\vec{1}\right)W_x^F\left(x\right)}{D}.
		\label{pi0}
	\end{equation}
Substituting Eq.~(\ref{pi0}) into Eqs.~(\ref{hu00}) and (\ref{hf00}) {yields}
	\begin{eqnarray}
		\hat{h}_{U*}&=&\frac{\lambda\theta W_x^U\left(x\right) W_x^F\left(x\right)}{\left(1-\gamma\right)W^U\left(x\right)D},
		\label{hu0}\\
		\hat{h}_{F*}&=&\frac{\mu\theta \left(W_x^F\left(x\right)\right)^2}{\left(1-\delta\right)W^F\left(x\right)D}.
		\label{hf0}
	\end{eqnarray}
Substituting Eqs.~(\ref{p0}), (\ref{pi0}), (\ref{hu0}) and (\ref{hf0}) into Eqs.~(\ref{jianshi1}) and (\ref{jianshi2}), {we obtain}
	\begin{eqnarray}
		0&=&-\alpha W^U\left(x\right)+\frac{\gamma}{1-\gamma}\left(W_x^U\left(x\right)\right)^{1-\frac{1}{\gamma}}+rxW_x^U\left(x\right)\nonumber
		\\&&+\theta^T\theta\frac{W_x^U\left(x\right)W_x^F\left(x\right)}{D}+\frac{1}{2}\theta^T\theta\frac{\left(W_x^F\left(x\right)\right)^2}{D^2}W_{xx}^U\left(x\right)\nonumber
		\\&&-\theta^T\theta\frac{\lambda}{2\left(1-\gamma\right)W^U\left(x\right)}\left(\frac{W_x^U\left(x\right)W_x^F\left(x\right)}{D}\right)^2,
		\label{fangc1}
		\\
		0&=&-\beta W^F\left(x\right)+\frac{x^{1-\delta}}{1-\delta}+rxW_x^F\left(x\right)
		+\theta^T\theta\frac{\left(W_x^UF\left(x\right)\right)^2}{D}\nonumber
		\\&&-\left(W_x^U\left(x\right)\right)^{-\frac{1}{\gamma}}W_x^F\left(x\right)+\frac{1}{2}\theta^T\theta\frac{\left(W_x^F\left(x\right)\right)^2}{D^2}W_{xx}^F\left(x\right)\nonumber
		\\&&-\theta^T\theta\frac{\mu}{2\left(1-\delta\right)W^F\left(x\right)}\left(\frac{\left(W_x^F\left(x\right)^2\right)}{D}\right)^2.
		\label{fangc2}
	\end{eqnarray}
Suppose  $W^U\left(x\right)=A\frac{x^{1-\gamma}}{1-\gamma}$, $W^F\left(x\right)=B\frac{x^{1-\delta}}{1-\delta}$, where $A$ and $B$ are some  constants. In this case, Eq.~(\ref{xingzhi2}) is equivalent to  {$A>0$ and $B>0$.} {As $D=\left(\mu+\delta\right)Bx^{-1-\delta}>0$ and Eq.~(\ref{xingzhi3}) holds,}  Eqs.~(\ref{fangc1}) and (\ref{fangc2}) {are rewritten as follows:}
	
	\begin{eqnarray*}
		0&=&-\frac{\alpha}{1-\gamma}x^{1-\gamma}A+\frac{\gamma}{1-\gamma}x^{1-\gamma}A^{1-\frac{1}{\gamma}}+rx^{1-\gamma}A\nonumber
		\\&&+\frac{\theta^T\theta}{\mu+\delta}x^{1-\gamma}A-\theta^T\theta\frac{\gamma}{2\left(\mu+\delta\right)^2}x^{1-\gamma}A-\theta^T\theta\frac{\lambda}{2\left(\mu+\delta\right)^2}x^{1-\gamma}A,
		\\
		0&=&-\frac{\beta}{1-\delta}x^{1-\delta}B+\frac{1}{1-\delta}x^{1-\delta}+rx^{1-\delta}B+\frac{\theta^T\theta}{\mu+\delta}x^{1-\delta}B\nonumber
		\\&&-A^{-\frac{1}{\gamma}}x^{1-\delta}B-\theta^T\theta\frac{\delta}{2\left(\mu+\delta\right)^2}x^{1-\delta}B-\theta^T\theta\frac{\mu}{2\left(\mu+\delta\right)^2}x^{1-\delta}B,
	\end{eqnarray*}
  which are  Eqs.~(\ref{A}) and (\ref{B}) in  Theorem \ref{the3.1}.
\vskip 5pt	
{Therefore, Eqs.~(\ref{jieu})  and (\ref{jief}) are the $C^{1,2}$ solutions of the HJBI equations. Substituting the expressions $W^U\left(x\right)=A\frac{x^{1-\gamma}}{1-\gamma}$ and $ W^F\left(x\right)=B\frac{x^{1-\delta}}{1-\delta}$ into Eqs.~(\ref{p0}), (\ref{hu0}), (\ref{pi0}) and (\ref{hf0}), we obtain Eqs.~(\ref{p}), (\ref{hu}), (\ref{pi}) and  (\ref{hf}), correspondingly.}
\end{proof}
\vskip 5pt
\section{Proof of Lemma \ref{pro3.0}.}\label{a:pro3.0}

First, we present the following lemma about the geometric Brownian motion.

\begin{lemma}
	Let {$\left\{B\left(t\right): t\geq 0\right\}$} be an $n$-dimensional standard Brownian motion, $c$ and $m$ are constants, and $v\in\mathbf{R}^n$.
	 If the process {$\left\{N\left(t\right): t\geq 0\right\}$} satisfies the stochastic differential equation $\rd N\left(t\right)=cN\left(t\right)\rd t+N\left(t\right)v^T\rd B(t)$,
	then $N\left(t\right)=N\left(0\right)\exp\left\{\left(c-\frac{1}{2}v^Tv\right)t+v^TB\left(t\right)\right\}$. Moreover, if $N\left(0\right)$ and $B\left(t\right)$ are independent, then {$\E N\left(t\right)^m=\E\left\{N\left(0\right)^m\exp\left\{m\left(c-\frac{1}{2}v^Tv\right)t+\frac{1} {2}m^2v^Tvt\right\}\right\}$.}
	\label{lemm3.1}
\end{lemma}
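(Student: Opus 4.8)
The plan is to establish the explicit solution by a verification-plus-uniqueness argument and then compute the moment via the Gaussian moment generating function. Since the stochastic differential equation $\rd N(t)=cN(t)\rd t+N(t)v^T\rd B(t)$ is linear with the familiar geometric form, I would avoid applying It\^o's formula to $\log N$ (which would presuppose $N>0$) and instead start from the \emph{candidate} process
\[
Y(t)=N(0)\exp\left\{\left(c-\tfrac12 v^Tv\right)t+v^TB(t)\right\}
\]
and check directly that it solves the SDE. Writing $Z(t)=(c-\tfrac12 v^Tv)t+v^TB(t)$, so that $\rd Z(t)=(c-\tfrac12 v^Tv)\rd t+v^T\rd B(t)$ and $\rd\langle Z\rangle(t)=v^Tv\,\rd t$ (using that $B$ has independent standard components), an application of It\^o's formula to $Y(t)=N(0)e^{Z(t)}$ gives $\rd Y(t)=Y(t)\bigl(\rd Z(t)+\tfrac12\rd\langle Z\rangle(t)\bigr)=Y(t)\bigl(c\,\rd t+v^T\rd B(t)\bigr)$. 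Thus $Y$ satisfies the same SDE with $Y(0)=N(0)$, and by the standard pathwise uniqueness for this linear SDE I would conclude $N(t)=Y(t)$, which is the asserted explicit form.

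For the moment formula, I would raise the explicit solution to the power $m$,
\[
N(t)^m=N(0)^m\exp\left\{m\left(c-\tfrac12 v^Tv\right)t\right\}\exp\left\{m\,v^TB(t)\right\},
\]
and observe that the middle factor is deterministic. The key fact is that $v^TB(t)=\sum_{i=1}^n v_iB_i(t)$ is a centred Gaussian variable with variance $t\,v^Tv$, since the coordinates $B_i(t)$ are independent $N(0,t)$. Hence $m\,v^TB(t)\sim N(0,m^2t\,v^Tv)$, and the Gaussian moment generating function yields $\E\exp\{m\,v^TB(t)\}=\exp\{\tfrac12 m^2t\,v^Tv\}$. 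Combining this with the independence of $N(0)$ and $B(t)$, so that $N(0)^m$ factors out of the expectation of the $B(t)$-dependent term, gives
\[
\E N(t)^m=\E\!\left\{N(0)^m\exp\left\{m\left(c-\tfrac12 v^Tv\right)t+\tfrac12 m^2 v^Tv\,t\right\}\right\},
\]
which is exactly the claimed identity, the outer expectation being retained because $N(0)$ may itself be random.

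There is no deep obstacle here; this is a routine computation for a scalar geometric Brownian motion driven by multidimensional noise. The only points requiring care are bookkeeping ones: correctly identifying the quadratic variation as $\rd\langle Z\rangle(t)=v^Tv\,\rd t$ rather than $\|v\|\,\rd t$, correctly computing the variance of $v^TB(t)$ as $t\,v^Tv$, and invoking independence of $N(0)$ and $B(t)$ precisely where $N(0)^m$ is pulled through the expectation. The verification approach via the candidate $Y$ also neatly sidesteps any need to justify positivity of $N$ before taking logarithms.
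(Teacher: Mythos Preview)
Your proof is correct and essentially equivalent to the paper's. The only difference is in the first step: the paper derives the explicit form by applying It\^o's formula to $\ln N(t)$, whereas you verify the candidate $Y(t)$ directly and invoke pathwise uniqueness, thereby sidestepping any a priori positivity assumption on $N$. For the moment formula, the paper phrases the computation via the exponential martingale $\exp\{mv^TB(t)-\tfrac12 m^2v^Tv\,t\}$ while you use the Gaussian moment generating function; these are the same calculation in different clothing. Your verification route is slightly more hygienic, but for this routine lemma the distinction is purely cosmetic.
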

\begin{proof}
As {$\{N\left(t\right): t\geq 0\}$} is a geometric Brownian motion, the explicit form of $N\left(t\right)$ can be easily derived by applying It\^{o}'s formula to $\ln N\left(t\right)$.	{Because $\left\{\exp\left\{mv^TB\left(t\right)-\frac{1}{2}m^2v^Tvt\right\}: t\geq 0\right\}
	$} is a martingale,  we have
	\begin{eqnarray*}
		\E N\left(t\right)^m&=&\E N\left(0\right)^m\exp\left\{m\left(c-\frac{1}{2}v^Tv\right)t\right\}+\E \exp\left\{mv^TB\left(t\right)\right\}\nonumber
		\\&=&\E N\left(0\right)^m\exp\left\{m\left(c-\frac{1}{2}v^Tv\right)t
		+\frac{1}{2}m^2v^Tvt\right\}.
	\end{eqnarray*}
\end{proof}
{Next,}  we prove Lemma \ref{pro3.0} {as follows:}
\begin{proof}
{As	 $X$ satisfies the following SDE:}
\begin{eqnarray*}		\rd X\left(t\right)&=&X\left(t\right)
\left(r+\frac{\theta^T\theta}{\mu+\delta}
-A^{-\frac{1}{\gamma}}\right)\rd t+X\left(t\right)
\frac{1}{\mu+\delta}\theta^T\rd W\left(t\right)\nonumber\\
&=&X\left(t\right)\left(r+\frac{\theta^T\theta}{\mu+\delta}
-A^{-\frac{1}{\gamma}}-\theta^T\theta
\frac{\lambda}{\left(\mu+\delta\right)^2}\right)\rd t
+X\left(t\right)\frac{1}{\mu+\delta}\theta^T\rd W^{h_U^*}\left(t\right)\nonumber\\
&=&X\left(t\right)\left(r+\frac{\theta^T\theta}{\mu+
\delta}-A^{-\frac{1}{\gamma}}-\theta^T\theta
\frac{\mu}{\left(\mu+\delta\right)^2}\right)\rd t+X\left(t\right)
\frac{1}{\mu+\delta}\theta^T\rd W^{h_F^*}\left(t\right).\nonumber\\
\end{eqnarray*}
{Based on Lemma \ref{lemm3.1},} we have
	{\small
{\begin{eqnarray}
X^*\left(t\right)\!\!\!&=&\!\!\!\!\!x\!\exp\left\{\left(r+\frac{\theta^T\theta}{\mu+\delta}-A^{-\frac{1}{\gamma}}-\frac{\theta^T\theta}{2\left(\mu+\delta\right)^2}\right)\left(t-s\right)+\frac{1}{\mu+\delta}\theta^T\left(W\left(t\right)-W\left(s\right)\right)\right\}\nonumber
			\nonumber\\ \!\!\!&=&\!\!\!x\!\exp\left\{\left(r\!+
\!\frac{\theta^T\theta}{\mu+\delta}\!-
\!A^{-\frac{1}{\gamma}}\!-\!\theta^T\theta\frac{\lambda}{\left(\mu\!+
\!\delta\right)^2}\!-\!\frac{\theta^T\theta}{2\!\left(\mu\!+
\!\delta\right)^2}\!\right)\left(t\!\!-\!\!s\right)\!+\!\frac{1}{\mu\!+\!\delta}\theta^T\!\left(\!W^{h_U^*}\left(t\right)\!-\!W^{h_U^*}\left(s\right)\!\right)\right\}\nonumber
			\nonumber\\
			\!\!\!\!\!\!&=&\!\!\!\!x\!\exp\!\left\{\left(\!r\!+\!
\!\frac{\theta^T\theta}{\mu\!+\!\delta}\!-\!
\!A^{-\frac{1}{\gamma}}\!-\!\theta^T
\theta\frac{\mu}{\left(\mu\!+\!\delta\right)^2}
\!-\!\frac{\theta^T\theta}{2\left(\mu\!+\!\delta\right)^2}\right)\!
\left(\!t\!\!-\!\!s\right)\!+\!\frac{1}{\mu\!+\!
\delta}\theta^T\left(W^{h_F^*}\left(t\right)\!-\!W^{h_F^*}\left(s\right)\!\right)\!\!\right\}.
			\label{biaoda}
	\end{eqnarray}}}
\end{proof}
\section{Proof of Lemma \ref{pro3.2}.}\label{a:pro3.2}
\vskip 5pt
\begin{proof}
As $\left(\hat{\pi}_*,\hat{P}_*\right)$ is continuous, the process $\left(\pi_*,P_*\right)=\left(\hat{\pi}_*\left(X^*\right),\hat{P}_*\left(X^*\right)\right)$ is adapted to the filtration $\{\mathcal{F}_t\}_{t\ge s}$. Under the strategy $\left(\pi_*,P_*\right)$,  SDE~(\ref{yuanfangcheng00}) has a pathwise unique solutions $X^*$. {Based on} the expression of $X^*$ given by Eq.~(\ref{biaoda}), we
  {know  $X^*\left(t\right)>0,\ \forall \ t\ge s,\mathbb{P}\text{-a.s.} \left(\mathbb{Q}^{h_U^*}\text{-a.s.},\mathbb{Q}^{h_F^*}\text{-a.s.}\right)$.} As such, we have $\pi_*\left(t\right)>0$, $P_*\left(t\right)> 0$, $\forall$ $ t\ge s$, $\mathbb{P}\text{-a.s.}$  ($\mathbb{Q}^{h_U^*}\text{-a.s.},\mathbb{Q}^{h_F^*}\text{-a.s.}$). Therefore $\left(\pi_*,P_*\right)\in\Lambda$. Noting that {$h_U^*$ and $h_F^*$} are constant processes, Condition (i) in Definition \ref{def2.2} {holds. Based on}  Lemma \ref{lemm3.1}, we have
	\begin{eqnarray*}		{\E_{s,x}^{h_U^*}\left\{\left(X\left(t\right)\right)^{1-\gamma}\right\}}&=&x^{1-\gamma}\;\exp\left\{\left(1-\gamma\right)\left(r+\frac{\theta^T\theta}{\mu+\delta}-A^{-\frac{1}{\gamma}}-\theta^T\theta\frac{\lambda+\gamma}{\left(\mu+\delta\right)^2}\right)\left(t-s\right)\right\},
		\\		{\E_{s,x}^{h_F^*}\left\{\left(X\left(t\right)\right)^{1-\delta}\right\}}&=&x^{1-\delta}\;\exp\left\{\left(1-\delta\right)\left(r+\frac{\theta^T\theta}{2\left(\mu+\delta\right)}-A^{-\frac{1}{\gamma}}\right)\left(t-s\right)\right\}.
	\end{eqnarray*}
As such, Conditions (ii) and (iii) in Definition \ref{def2.2} {hold if}	
	\begin{eqnarray*}
&&\left(1-\gamma\right)\left(r+\frac{\theta^T\theta}{\mu+\delta}-A^{-\frac{1}{\gamma}}-\theta^T\theta\frac{\lambda+\gamma}{\left(\mu+\delta\right)^2}\right)<\alpha,\\
		&&\left(1-\delta\right)\left(r+\frac{\theta^T\theta}{2\left(\mu+\delta\right)}-A^{-\frac{1}{\gamma}}\right)<\beta.
	\end{eqnarray*}
{The last two Ineqs. are  equivalent to $A>0$ and $ B>0$.} In addition, Condition (iv) in Definition \ref{def2.2} is equivalent to
	\begin{eqnarray*}
		\E_{s,x}^{h_U^*}\left[\int_{s}^{T}\left(X\left(t\right)\right)^{2-2\gamma}\rd t\right]<\infty, \quad \forall T\ge s,
		\\ \E_{s,x}^{h_F^*}\left[\int_{s}^{T}\left(X\left(t\right)\right)^{2-2\delta}\rd t\right]<\infty, \quad \forall T\ge s.
	\end{eqnarray*}
{Based on} Lemma \ref{lemm3.1}, {$\E_{s,x}^{h_U^*}\left\{X\left(t\right)\right\}^{2-2\gamma}$  and $\E_{s,x}^{h_F^*}\left\{X\left(t\right)\right\}^{2-2\delta}$} can be written in the form of $C_1e^{C_2\left(t-s\right)}$, {as such}, Condition (iv)  holds.
{Thus, $h_U^*\in\mathcal{H}_U\left(\pi_*,P_*\right)$ and $h_F^*\in\mathcal{H}_F\left(\pi_*,P_*\right)$ follow.}
\end{proof}
\vskip 5pt
\section{Proof of {Lemma} \ref{pro3.3}.}\label{a:pro3.3}
\begin{proof}
 {Based on} the forms of HJBI equations $\left(\ref{hjbfc1}\right)$ and $\left(\ref{hjbcf2}\right)$ with variable $\left(s,x\right)\in\mathbb{\mathbf{R}}_+^2$, {it is easy to see that
	 the properties (i) and(iii)  hold.} Property (iv) is exactly equivalent to Condition (iii) in Definition \ref{def2.2} with variable {$\left(s,x\right)\in\mathbf{R}_+^2$}, which is naturally satisfied.
%
\vskip 5pt
As the function involved in $\hat{P}$ does not depend on  $\hat{\pi}$ and $\hat{h}_U$, the first formula of Property (ii) {is} easily verified by Eq.~(\ref{hjbfc1}).
Next, we verify the second one. Let $f_3\left(\hat{\pi}\right)\triangleq\mathcal{A}^{\hat{\pi},\hat{P}_*,\hat{h}_{F*}}W_F\left(t,y\right)+\Phi^F\left(t,y,\hat{\pi},\hat{P}_*,\hat{h}_{F*},W^F\right)$. {Then we obtain the derivatives of $f_3$:}
	\begin{eqnarray*}
		\nabla f_3&=&e^{-\alpha t}\left(\left(b-r\vec{1}\right)W_x^F\left(y\right)-\frac{\mu}{\mu+\delta}\sigma\theta-\Sigma\hat{\pi} W_{xx}^F\left(y\right)\right)\nonumber\\
		&=&e^{-\alpha t}\delta By^{-\delta}\left(\frac{\left(b-r\vec{1}\right)}{\mu+\delta}-y^{-1}\Sigma\hat{\pi}\right),\\
		\nabla^2 f_3&=&-e^{-\alpha t}\delta By^{-\delta-1}\Sigma\prec 0.
	\end{eqnarray*}
 {As such, if} $\hat{\pi}=\frac{1}{\mu+\delta}\Sigma^{-1}\left(b-r\vec{1}\right)y=\hat{\pi}_*\left(y\right)$, {then} $\nabla f_3=0$.
	{Thus} , $f_3\left(\hat{\pi}\right)\le f_3\left(\hat{\pi}_*\left(y\right)\right)=0$.
\end{proof}
\vskip 5pt
\section{Proof of Theorem \ref{the4.1}.}\label{a:the4.1}
\begin{proof}
Suppose that the solution of {the HJBI equation (\ref{hjbfc3})} has properties
	\begin{equation}
		W^0\left(s,x\right)=e^{-\alpha s}W^0\left(x\right),\quad \left(1-\gamma\right)W^0\left(x\right)>0.
		\label{xingzhix0}
	\end{equation}
	Then the HJBI equation {is rewritten as follows:}
	\begin{eqnarray}
		0&=&\sup_{\hat{\pi},\hat{P}}\inf_{\hat{h}_0}\left\{-\alpha W^0\left(x\right)+\left(rx+\hat{\pi}^T\left(b-r\vec{1}\right)-\hat{P}-\hat{\pi}^T\sigma \hat{h}_0\right)W^0_x\left(x\right)\right.\nonumber
		\\
		&&+\left.\frac{1}{2}\hat{\pi}^T\Sigma\hat{\pi} W^0_{xx}\left(x\right)+\frac{\hat{P}^{1-\gamma}}{1-\gamma}+\left(1-\gamma\right)\frac{\hat{h}_0^T\hat{h}_0}{2\lambda}W^0\left(x\right)\right\}.
		\label{jianshi3}
	\end{eqnarray}
{Following} the proof of Theorem \ref{the3.1}, we have
	\begin{eqnarray}
		\hat{h}_{0*}&=&\frac{\lambda\sigma^T\hat{\pi} W_x^0\left(x\right)}{\left(1-\gamma\right)W^0\left(x\right)},
		\label{hu00x}\\
		\hat{P_0}_*&=&\left(W_x^0\left(x\right)\right)^{-\frac{1}{\gamma}}.
		\label{p0x}
	\end{eqnarray}
{As such}, the right side of Eq.~(\ref{jianshi3}) {is}
	\begin{eqnarray*}
		f_0\left(\hat{\pi}\right)&=&
		-\alpha W^0\left(x\right)+\left(rx+\hat{\pi}^T\left(b-r\vec{1}\right)-\hat{P}_0^*\right)W^0_x\left(x\right)\nonumber
		\\&&+\frac{1}{2}\hat{\pi}^T\Sigma\hat{\pi} W^0_{xx}\left(x\right)
		+\frac{\left(\hat{P_0}_*\right)^{1-\gamma}}{1-\gamma}-\frac{\lambda\hat{\pi}^T\Sigma\hat{\pi} \left(W_x^0\left(x\right)\right)^2}{2\left(1-\gamma\right)W^0\left(x\right)}.
	\end{eqnarray*}
	Similar to that of Theorem \ref{the3.1}, we obtain that if
	\begin{equation}
		D_0\triangleq\frac{\lambda \left(W_x^0\left(x\right)\right)^2}{\left(1-\gamma\right)W^0\left(x\right)}-W_{xx}^0\left(x\right)>0,
		\label{xingzhix}
	\end{equation}
	then
	\begin{equation}
		\hat{\pi_0}_*=\frac{\Sigma^{-1}\left(b-r\vec{1}\right)W_x^0\left(x\right)}{D_0}.
		\label{pi0x}
	\end{equation}
	As such, the HJBI equation becomes
	\begin{eqnarray}
		0&=&-\alpha W^0\left(x\right)+\frac{\gamma}{1-\gamma}\left(W_x^0\left(x\right)\right)^{1-\frac{1}{\gamma}}+rxW_x^0\left(x\right)\nonumber
		\\&&+\theta^T\theta\frac{\left(W_x^0\left(x\right)\right)^2}{D_0}+\frac{1}{2}\theta^T\theta\frac{\left(W_x^0\left(x\right)\right)^2}{D_0^2}W_{xx}^0\left(x\right)
		-\theta^T\theta\frac{\lambda\left(W^0\left(x\right)\right)^4}{2\left(1-\gamma\right)W^0\left(x\right)D_0^2}.
		\label{fangcx}
	\end{eqnarray}
{Solving Eq.~(\ref{fangcx}) yields} $W^0\left(x\right)=A_0\frac{x^{1-\gamma}}{1-\gamma}$, where $A_0$ is given by Eq.~(\ref{A_0}). {Noting that} Assumption (\ref{xingzhix0}) is equivalent to $A_0>0$, Assumption (\ref{xingzhix}) naturally holds. As such,	
{$\left(\hat{\pi_0}_*,\hat{P_0}_*\right)$ and $ \hat{h}_{0*}$ are derived. Thus, the proof is completed.}
\end{proof}
\vskip 5pt

\section{Proof of Theorem \ref{jiexhjb}.}\label{a:jiexhjb}
\begin{proof}
We guess that {the solutions of the HJBI equations (\ref{xhjbfc1})-(\ref{xhjbcf2}) have the forms:}
	\begin{equation}
		W^{\bar{U}}\left(s,x\right)=e^{-\alpha s}W^{\bar{U}}\left(x\right),\ \ \
		W^{\bar{F}}\left(s,x\right)=W^{\bar{F}}\left(x\right).
		\label{xxingzhi1}
	\end{equation}
Suppose that $W^{\bar{U}}$ {and} $W^{\bar{F}}$ satisfy
	\begin{equation}
		\left(1-\gamma\right)W^{\bar{U}}\left(x\right)>0,\quad
		W^{\bar{F}}\left(x\right)>0.
		\label{xxingzhi2}
	\end{equation}
If Eqs.~(\ref{xxingzhi1}) and (\ref{xxingzhi2}) hold, {then} the HJBI equation{s are} equivalent to
	\begin{eqnarray}
		0&=&\sup_{\hat{P}}\inf_{\hat{h}_{\bar{U}}}\left\{-\alpha W^{\bar{U}}\left(x\right)+\left(rx+\hat{\pi}^T\left(b-r\vec{1}\right)-\hat{P}-\hat{\pi}^T\sigma \hat{h}_{\bar{U}}\right)W^{\bar{U}}_x\left(x\right)\right.\nonumber
		\\
		&&+\left.\frac{1}{2}\hat{\pi}^T\Sigma\hat{\pi} W^{\bar{U}}_{xx}\left(x\right)+\frac{\hat{P}^{1-\gamma}}{1-\gamma}+\left(1-\gamma\right)\frac{\hat{h}_{\bar{U}}^T\hat{h}_{\bar{U}}}{2\lambda}W^{\bar{U}}\left(x\right)\right\},
		\label{xjianshi1}\\
		0&=&\sup_{\hat{\pi}}\inf_{\hat{h}_{\bar{F}}}\left\{\left(rx+\hat{\pi}^T\left(b-r\vec{1}\right)-\hat{P}-\hat{\pi}^T\sigma \hat{h}_{\bar{F}}\right)W^{\bar{F}}_x\left(x\right)\nonumber\right.\\
		&&+\left.\frac{1}{2}\hat{\pi}^T\Sigma\hat{\pi} W^{\bar{F}}_{xx}\left(x\right)+\frac{\hat{h}_{\bar{F}}^T\hat{h}_{\bar{F}}}{2\mu}\left(W^{\bar{F}}\left(x\right)+c\right)\right\}.
		\label{xjianshi2}
	\end{eqnarray}
Similar {to that} of Theorem \ref{the3.1}, we know that the minimum point $\hat{h}_{U*}$ and the maximum $\hat{P}_*$ are
	\begin{equation}
		\hat{h}_{U*}=\frac{\lambda\sigma^T\hat{\pi} W_x^{\bar{U}}\left(x\right)}{\left(1-\gamma\right)W^{\bar{U}}\left(x\right)},
		\label{xhu00}
	\end{equation}
	\begin{equation}
		\hat{P}_*=\left(W_x^{\bar{U}}\left(x\right)\right)^{-\frac{1}{\gamma}}.
		\label{xp0}
	\end{equation}
	Combining Eqs.~(\ref{xjianshi2})  and (\ref{xxingzhi2}), we have that when
	\begin{equation}
		\hat{h}_{F*}=\frac{\mu\sigma^T\hat{\pi} W_x^{\bar{F}}\left(x\right)}{W^{\bar{F}}\left(x\right)+c},
		\label{xhf00}
	\end{equation}
	 the right side of the HJBI equation attains the minimum and becomes
	\begin{eqnarray*}
		f_2\left(\hat{\pi}\right)&=&
		\left(rx+\hat{\pi}^T\left(b-r\vec{1}\right)-\hat{P}\right)W^{\bar{F}}_x\left(x\right)\nonumber
		\\
		&&+\frac{1}{2}\hat{\pi}^T\Sigma\hat{\pi} W^{\bar{F}}_{xx}\left(x\right)
		-\frac{\mu\hat{\pi}^T\Sigma\hat{\pi} \left(W_x^{\bar{F}}\left(x\right)\right)^2}{2\left(W^{\bar{F}}\left(x\right)+c\right)},
	\end{eqnarray*}
	with derivative
	\begin{equation*}
		\nabla f_2=\left(b-r\vec{1}\right)W_x^{\bar{F}}\left(x\right)-\left[\frac{\mu \left(W_x^{\bar{F}}\left(x\right)\right)^2}{W^{\bar{F}}\left(x\right)+c}-W_{xx}^{\bar{F}}\left(x\right)\right]\Sigma\hat{\pi}.
	\end{equation*}
	If we {assume}
	\begin{equation}
		F\triangleq\frac{\mu \left(W_x^{\bar{F}}\left(x\right)\right)^2}{W^{\bar{F}}\left(x\right)+c}-W_{xx}^{\bar{F}}\left(x\right)>0,
		\label{xxingzhi3}
	\end{equation}
	{then}
	\begin{equation*}
		\nabla^2 f_2=-F\Sigma\prec 0.
	\end{equation*}
	As such,
	\begin{equation}
		\hat{\pi}_*=\frac{\Sigma^{-1}\left(b-r\vec{1}\right)W_x^{\bar{F}}\left(x\right)}{F}.
		\label{xpi0}
	\end{equation}
	Substituting Eq.~(\ref{xpi0}) into Eqs.~(\ref{xhu00}) and (\ref{xhf00}) {yields}
	\begin{eqnarray}
		\hat{h}_{U*}&=&\frac{\lambda\theta W_x^{\bar{U}}\left(x\right) W_x^{\bar{F}}\left(x\right)}{\left(1-\gamma\right)W^{\bar{U}}\left(x\right)F},
		\label{xhu0}\\
		\hat{h}_{F*}&=&\frac{\mu\theta \left(W_x^{\bar{F}}\left(x\right)\right)^2}{\left(W^{\bar{F}}\left(x\right)+c\right)F}.
		\label{xhf0}
	\end{eqnarray}
	Substituting Eqs.~(\ref{xhu0}), (\ref{xp0}), (\ref{xhf0}) and (\ref{xpi0}) into Eqs.~(\ref{xjianshi1}) and (\ref{xjianshi2}), {we obtain}
\begin{eqnarray}
0&=&-\alpha W^{\bar{U}}\left(x\right)+\frac{\gamma}{1-\gamma}\left(W_x^{\bar{U}}\left(x\right)
\right)^{1-\frac{1}{\gamma}}+rxW_x^{\bar{U}}\left(x\right)\nonumber	\\
&&+\theta^T\theta\frac{W_x^{\bar{U}}\left(x\right)W_x^{\bar{F}}\left(x\right)}{F}
+\frac{1}{2}\theta^T\theta\frac{\left(W_x^{\bar{F}}\left(x\right)\right)^2}{F^2}W_{xx}^{\bar{U}}
\left(x\right)\nonumber
		\\&&-\theta^T\theta
\frac{\lambda}{2\left(1-\gamma\right)W^{\bar{U}}
\left(x\right)}\left(\frac{W_x^{\bar{U}}\left(x\right)W_x^{\bar{F}}\left(x\right)}{F}\right)^2,\\\label{xfangc0}
0&=&rxW_x^{\bar{F}}\left(x\right)
		+\theta^T\theta\frac{\left(W_x^{\bar{U}}F\left(x\right)\right)^2}{F}\nonumber
		\\&&-\left(W_x^{\bar{U}}\left(x\right)\right)^{-\frac{1}{\gamma}}W_x^{\bar{F}}\left(x\right)+\frac{1}{2}\theta^T\theta\frac{\left(W_x^{\bar{F}}\left(x\right)\right)^2}{F^2}W_{xx}^{\bar{F}}\left(x\right)\nonumber
		\\&&-\theta^T\theta
\frac{\mu}{2\left(W^{\bar{F}}\left(x\right)+c\right)}
\left(\frac{\left(W_x^{\bar{F}}\left(x\right)^2\right)}{F}\right)^2.
		\label{xfangc}
	\end{eqnarray}
{Finally, we only need to show that Eqs.~(\ref{xxingzhi2}) and (\ref{xxingzhi3}) hold, and $W^{\bar{U}}\left(x\right)=E\frac{x^{1-\gamma}}{1-\gamma}$ and $ W^{\bar{F}}\left(x\right)=\frac{x^{1-\eta}-l^{1-\eta}}{v^{1-\eta}-l^{1-\eta}}$ indeed solve Eqs~(\ref{xfangc0})-(\ref{xfangc}). The first part of Eq.~(\ref{xxingzhi2}) is equivalent to $E>0$ and the second  part naturally holds. Using $\alpha>r$, $\eta<1$ and Eq.~(\ref{omega}), we  know   $\mu\left(1-\eta\right)+\eta=\omega>0$, then $F=\left(1-\eta\right)\left[\mu\left(1-\eta\right)
+\eta\right]\frac{x^{-1-\eta}}{v^{1-\eta}-l^{1-\eta}}>0$. As such, Eq.~(\ref{xxingzhi3}) holds, and Eqs.~(\ref{xfangc0})-(\ref{xfangc}) become}
	
%
%
	\begin{eqnarray*}
		0&=&-\frac{\alpha}{1-\gamma}x^{1-\gamma}E+\frac{\gamma}{1-\gamma}x^{1-\gamma}E^{1-\frac{1}{\gamma}}+rx^{1-\gamma}E+\frac{\theta^T\theta}{\mu\left(1-\eta\right)+\eta}x^{1-\gamma}E\nonumber\\
		&&-\theta^T\theta\frac{\gamma}{2\left[\mu\left(1-\eta\right)+\eta\right]^2}x^{1-\gamma}E-\theta^T\theta\frac{\lambda}{2\left[\mu\left(1-\eta\right)+\eta\right]^2}x^{1-\gamma}E,
		\\
		0&=&r\left(1-\eta\right)\frac{x^{1-\eta}}{v^{1-\eta}-l^{1-\eta}}+\theta^T\theta\frac{1-\eta}{\mu\left(1-\eta\right)+\eta}\frac{x^{1-\eta}}{v^{1-\eta}-l^{1-\eta}}
		-E^{-\frac{1}{\gamma}}\left(1-\eta\right)\frac{x^{1-\eta}}{v^{1-\eta}-l^{1-\eta}}\nonumber\\
		&&-\theta^T\theta\frac{\eta\left(1-\eta\right)}{2\left[\mu\left(1-\eta\right)+\eta\right]^2}\frac{x^{1-\eta}}{v^{1-\eta}-l^{1-\eta}}-\theta^T\theta\frac{\mu}{2\left[\mu\left(1-\eta\right)+\eta\right]^2}\frac{x^{1-\eta}}{v^{1-\eta}-l^{1-\eta}}.
	\end{eqnarray*}
{Simplifying the last two equations,}
	\begin{eqnarray}		0&=&-\frac{\alpha}{1-\gamma}+r+\frac{\gamma}{1-\gamma}E^{-\frac{1}{\gamma}}+\frac{\theta^T\theta}{\mu\left(1-\eta\right)+\eta}-\theta^T\theta\frac{\lambda+\gamma}{2\left[\mu\left(1-\eta\right)+\eta\right]^2}\label{fc1},\\
		0&=&r+\frac{\theta^T\theta}{2 \left[\mu\left(1-\eta\right)+\eta\right]}-E^{-\frac{1}{\gamma}}\label{fc2}.
	\end{eqnarray}
	{Eq.~(\ref{eta}) implies  $\omega=\mu\left(1-\eta\right)+\eta$. Therefore, Eq.~(\ref{fc2}) holds because of Eq.~(\ref{omega}). Using Eqs.~(\ref{e}) and (\ref{omega}), Eq.~(\ref{fc1})  holds.}	
\end{proof}
\vskip 5pt
\section{Proof of {Lemma} \ref{proxxingzhi}.}\label{a:proxxingzhi}
\begin{proof}
 {Following} the proof of Lemma \ref{pro3.3}, {the} properties (i), (ii), (iii) and the first part of (iv) can be proved. {For} the proof of the second part of property (ii), the condition $\eta>0$ is required. {As such}, we only need to prove the second part of property (iv).
 \vskip 5pt	
Noting $X\left(\mathcal{T}_N\right)\in \left(l,v\right),\forall N$, we have {$\sup \limits_ N \E_{s,x}^{h_{\bar{F}}}|X\left(\mathcal{T}_N\right)|^2<\infty$,} which implies that\\ {$\left\{X\left(\mathcal{T}_N\right): N\geq 0\right\}$} is uniformly integrable. {As $\mathcal{T}<\infty, \mathbb{Q}^{h_{\bar{F}}}\text{-a.s.}$, and $W^{\bar{F}}\left(\mathcal{T}_N,X\left(\mathcal{T}_N\right)\right)\rightarrow W^{\bar{F}}\left(\mathcal{T},X\left(\mathcal{T}\right)\right), \mathbb{Q}^{h_{\bar{F}}}\text{-a.s.}$, we know that  Eq.~(\ref{ui+as}) holds.}
\end{proof}
\vskip 5pt
\section{Proof of Theorem \ref{xyanzheng}.}\label{a:xyanzheng}
\begin{proof}
{$\forall\ \left(\pi,P\right)\in\Lambda,\  h_{\bar{U}}\in\mathcal{H}_{\bar{U}}(\pi,P)$,} let $X$ be the unique solution of SDE~(\ref{yuanfangcheng00}). Similar to the proof of Theorem \ref{the3.2}, we have
\begin{eqnarray}
		\inf_{h_{\bar{U}}\in \mathcal{H}_{\bar{U}}\left(\pi_*,P_*\right)}
J_{\bar{U}}\left(s,x,\pi_*,P_*,h_{\bar{U}}\right)
		&=&\sup_{P:\left(\pi_*,P\right)\in\Lambda}\inf_{h_{\bar{U}}\in \mathcal{H}_{\bar{U}}\left(\pi_*,P\right)}J_{\bar{U}}\left(s,x,\pi_*,P,h_{\bar{U}}\right) \nonumber
		\\&=&V_{\bar{U}}^{\pi_*}\left(s,x\right)=W^{\bar{U}}\left(s,x\right)= J_{\bar{U}}\left(s,x,\pi_*,P_*,h_{\bar{U}}^*\right).\nonumber
	\end{eqnarray}
{Thus,} we only need to {show}
\begin{eqnarray}
\inf_{h_{\bar{F}}\in \mathcal{H}_{\bar{F}}\left(\pi_*,P_*\right)}
J_{\bar{F}}\left(s,x,\pi_*,P_*,h_{\bar{F}}\right)
		&=&\sup_{\pi:\left(\pi,P_*\right)\in\Lambda}\inf_{h_{\bar{F}}\in \mathcal{H}_{\bar{F}}\left(\pi,P_*\right)}J_{\bar{F}}\left(s,x,\pi,P_*,h_{\bar{F}}\right) \nonumber
		\\&=&V_{\bar{F}}^{P_*}\left(s,x\right)=W^{\bar{F}}\left(s,x\right)= J_{\bar{F}}\left(s,x,\pi_*,P_*,h_{\bar{F}}^*\right).\nonumber
	\end{eqnarray}
Using  It\^{o}'s formula, we have
	\begin{eqnarray}
		W^{\bar{F}}\left(\mathcal{T},X\left(\mathcal{T}\right)\right)=\!W^{\bar{F}}\left(s,X\left(s\right)\right)\!+\!
		\int_{s}^{\mathcal{T}}
		\mathcal{A}^{\pi,P,h_{\bar{F}}}W^{\bar{F}}\left(t,X\left(t\right)\right)\rd t\!+\!\int_{s}^{\mathcal{T}}W_x^{\bar{F}}\left(t,X\left(t\right)\right)\pi\left(t\right)^T\sigma\!\rd W^{h_{\bar{F}}}\left(t\right).\nonumber
	\end{eqnarray}
Let $\mathcal{T}_N\triangleq\mathcal{T}\wedge N \wedge \inf\left\{t>s;\int_{s}^t|W_x^{\bar{F}}\left(z,X\left(z\right)\right)\pi\left(z\right)^T\sigma|^2\rd z\ge N\right\}$,
{then $ \mathcal{T}_N  $ is a stopping time and}  $\left\{\int_{s}^{T\wedge \mathcal{T}_N}W_x^{\bar{F}}\left(t,X\left(t\right)\right)\pi\left(t\right)^T\sigma \rd W^{h_{\bar{F}}}\left(t\right); s \le T \le N \right\}$ is a $\mathbb{Q}^{h_{\bar{F}}}$ martingale. As such,
	\begin{equation}
		\E_{s,x}^{h_{\bar{F}}}W^{\bar{F}}\left(\mathcal{T}_N,X\left({\mathcal{T}_N}\right)\right)=
		W^{\bar{F}}\left(s,x\right)+\E_{s,x}^{h_{\bar{F}}}
		\int_{s}^{\mathcal{T}_N}
		\mathcal{A}^{\pi\left(t\right),P
			\left(t\right),h_{\bar{F}}\left(t\right)}W^{\bar{F}}\left(t,X\left(t\right)\right)\rd s.
		\label{xth36}
	\end{equation}
{Based on Condition (i)  in Proposition \ref{proxxingzhi}, $\pi_*\left(t\right)=\hat{\pi}_*\left(X\left(t\right)\right)$ and  $P_*\left(t\right)=\hat{P}_*\left(X\left(t\right)\right)$, we have}
	$\forall\  h_{\bar{F}}\in \mathcal{H}_{\bar{F}}\left(\pi_*,P_*\right)$,
	\begin{equation}
		W^{\bar{F}}\left(s,x\right)\le \E_{s,x}^{h_{\bar{F}}}\int_{s}^{\mathcal{T}_N}\Phi^{\bar{F}}\left(t,X\left(t\right),\pi_*\left(t\right),P_*\left(t\right),h_{\bar{F}}\left(t\right),W^{\bar{F}}\right)\rd t+\E_{s,x}^{h_{\bar{F}}}W^{\bar{F}}\left(\mathcal{T}_N,X\left({\mathcal{T}_N}\right)\right).\nonumber
	\end{equation}
Letting $N\rightarrow+\infty$, {using Condition (iv) in Proposition \ref{proxxingzhi} and non-negativity of $\Phi^{\bar{F}}$, based on integral expansion theorem, we have}	
	\begin{equation}
		W^{\bar{F}}\left(s,x\right)\le J_{\bar{F}}\left(s,x,\pi_*,P_*,h_{\bar{F}}\right).
		\label{xbudengsji1}
	\end{equation}
On the other hand, applying Condition (ii) in Proposition \ref{proxxingzhi} to Eq.~(\ref{xth36}), we {have} that  for %
$P_*,h_{\bar{F}}^*$, $\forall\  \pi$ such that $\left(\pi,P_*\right)\in\Lambda\wedge h_{\bar{F}}^*\in\mathcal{H}_{\bar{U}}\left(\pi,P_*\right)$,
	\begin{equation}
		W^{\bar{F}}\left(s,x\right)\ge \E_{s,x}^{h_{\bar{F}}^*}\int_{s}^{\mathcal{T}_N}\Phi^{\bar{F}}\left(t,X\left(t\right),\pi\left(t\right),P_*\left(t\right),h_{\bar{F}}^*\left(t\right),W^{\bar{F}}\right)\rd t+\E_{s,x}^{h_{\bar{F}}^*}W^{\bar{F}}\left(\mathcal{T}_N,X\left({\mathcal{T}_N}\right)\right).\nonumber
	\end{equation}
	Letting $N\rightarrow+\infty$,
	\begin{equation}
		W^{\bar{F}}\left(s,x\right)\ge J_{\bar{F}}\left(s,x,\pi,P_*,h_{\bar{F}}^*\right).
		\label{xbudengshi2}
	\end{equation}
Applying Condition (iii) in Proposition \ref{proxxingzhi} to Eq.~(\ref{xth36}) and letting $N\rightarrow+\infty$, we have
%
	\begin{equation}\nonumber
		W^{\bar{F}}\left(s,x\right)= J_{\bar{F}}\left(s,x,\pi_*,P_*,h_{\bar{F}}^*\right).
	\end{equation}
{Because the} rest part of the proof is the same as in the proof of Theorem \ref{the3.2},  {we} omit it here.	{Thus, the proof follows.}
\end{proof}
\bibliographystyle{apalike}
\bibliography{wpref}

\begin{thebibliography}{}

\bibitem[Azar, 2006]{Azar2006Measuring}
Azar, S.~A. (2006).
\newblock Measuring relative risk aversion.
\newblock {\em Applied Financial Economics Letters}, 2(5):341--345.

\bibitem[Binmore et~al., 2012]{Binmore2012How}
Binmore, K., Stewart, L., and Voorhoeve, A. (2012).
\newblock How much ambiguity aversion?
\newblock {\em Journal of risk and uncertainty}, 45(3):215--238.

\bibitem[Blanchard et~al., 1993]{blanchard1993movements}
Blanchard, O.~J., Shiller, R., and Siegel, J.~J. (1993).
\newblock Movements in the equity premium.
\newblock {\em Brookings Papers on Economic Activity}, 1993(2):75--138.

\bibitem[Browne, 1995]{1995Optimal}
Browne, S. ({1995}).
\newblock {Optimal investment policies for a firm with a random risk process:
  Exponential utility and minimizing the probability of ruin}.
\newblock {\em {Mathematics of Operations Research}}, {20}({4}):{937--958}.

\bibitem[Gu et~al., 2020]{Ailing2020Optimal}
Gu, A., Viens, F.~G., and Shen, Y. (2020).
\newblock Optimal excess-of-loss reinsurance contract with ambiguity aversion
  in the principal-agent model.
\newblock {\em Scandinavian Actuarial Journal}, 2020(4):342--375.

\bibitem[Guan and Liang, 2016]{guan2016stochastic}
Guan, G. and Liang, Z. (2016).
\newblock A stochastic nash equilibrium portfolio game between two {DC} pension
  funds.
\newblock {\em Insurance: Mathematics and Economics}, 70:237--244.

\bibitem[Guan and Liang, 2019]{guan2019robust}
Guan, G. and Liang, Z. (2019).
\newblock Robust optimal reinsurance and investment strategies for an {AAI}
  with multiple risks.
\newblock {\em Insurance: Mathematics and Economics}, 89:63--78.

\bibitem[Guan et~al., 2018]{guan2018time}
Guan, G., Liang, Z., and Feng, J. (2018).
\newblock Time-consistent proportional reinsurance and investment strategies
  under ambiguous environment.
\newblock {\em Insurance: Mathematics and Economics}, 83:122--133.

\bibitem[Haberman et~al., 2000]{Haberman2000Contribution}
Haberman, S., Butt, Z., and Megaloudi, C. (2000).
\newblock Contribution and solvency risk in a defined benefit pension scheme.
\newblock {\em Insurance: Mathematics and Economics}, 27(2):237--259.

\bibitem[Haberman and Sung, 1994]{Haberman1994Dynamic}
Haberman, S. and Sung, J.-H. (1994).
\newblock Dynamic approaches to pension funding.
\newblock {\em Insurance: Mathematics and Economics}, 15(2-3):151--162.

\bibitem[Hainaut et~al., 2011]{Hainaut2011Optimal}
Hainaut, D., Deelstra, G., et~al. (2011).
\newblock Optimal funding of defined benefit pension plans.
\newblock {\em Journal of Pension Economics and Finance}, 10(1):31.

\bibitem[Hansen and Sargent, 2001]{Hansen2001Robust}
Hansen, L. and Sargent, T.~J. (2001).
\newblock Robust control and model uncertainty.
\newblock {\em American Economic Review}, 91(2):60--66.

\bibitem[Harper and Treanor, 2014]{2013Pension}
Harper, J.~T. and Treanor, S.~D. (2014).
\newblock Pension conversion, termination, and wealth transfers.
\newblock {\em Journal of Risk and Insurance}, 81(1):177--198.

\bibitem[Haw et~al., 1988]{2014INVESTOR}
Haw, I.-M., Ruland, W., and Hamdallah, A. (1988).
\newblock Investor evaluation of overfunded pension plan terminations.
\newblock {\em Journal of Financial Research}, 11(1):81--88.

\bibitem[Howard, 2019]{Sp500}
Howard, S. (2019).
\newblock S\&{P} 500 corporate pensions and other post-employment benefits
  ({OPEB}) in 2018.
\newblock {\em S\&P 500 Dow Jones Indices: A division of S\&P Global}.

\bibitem[Huang and Cairns, 2006]{huang2006control}
Huang, H.-C. and Cairns, A.~J. (2006).
\newblock On the control of defined-benefit pension plans.
\newblock {\em Insurance: Mathematics and Economics}, 38(1):113--131.

\bibitem[Huang et~al., 2017]{huang2017robust}
Huang, Y., Yang, X., and Zhou, J. (2017).
\newblock Robust optimal investment and reinsurance problem for a general
  insurance company under {H}eston model.
\newblock {\em Mathematical Methods of Operations Research}, 85(2):305--326.

\bibitem[Josa-Fombellida et~al., 2018]{Josa2018Portfolio}
Josa-Fombellida, R., L\'{o}pez-Casado, P., and Rinc\'{o}n-Zapatero, J.~P.
  (2018).
\newblock Portfolio optimization in a defined benefit pension plan where the
  risky assets are processes with constant elasticity of variance.
\newblock {\em Insurance: Mathematics and Economics}, 82:73--86.

\bibitem[Josa-Fombellida and Rinc{\'o}n-Zapatero, 2004]{Josa2004Optimal}
Josa-Fombellida, R. and Rinc{\'o}n-Zapatero, J.~P. (2004).
\newblock Optimal risk management in defined benefit stochastic pension funds.
\newblock {\em Insurance: Mathematics and Economics}, 34(3):489--503.

\bibitem[Josa-Fombellida and Rinc\'{o}n-Zapatero, 2012]{Josa2012Stochastic}
Josa-Fombellida, R. and Rinc\'{o}n-Zapatero, J.~P. (2012).
\newblock Stochastic pension funding when the benefit and the risky asset
  follow jump diffusion processes.
\newblock {\em European Journal of Operational Research}, 220(2):404--413.

\bibitem[Josa-Fombellida and Rinc{\'o}n-Zapatero, 2019]{Josa2019Equilibrium}
Josa-Fombellida, R. and Rinc{\'o}n-Zapatero, J.~P. (2019).
\newblock Equilibrium strategies in a defined benefit pension plan game.
\newblock {\em European Journal of Operational Research}, 275(1):374--386.

\bibitem[Karatzas and Shreve, 1991]{Ioannis1991Brownian}
Karatzas, I. and Shreve, S.~E. (1991).
\newblock {\em Brownian motion and stochastic calculus}.
\newblock Springer-Verlag,, 2ed edition.

\bibitem[Korn and Kraft, 2002]{korn2002stochastic}
Korn, R. and Kraft, H. (2002).
\newblock A stochastic control approach to portfolio problems with stochastic
  interest rates.
\newblock {\em SIAM Journal on Control and Optimization}, 40(4):1250--1269.

\bibitem[Kraft, 2012]{kraft2012optimal}
Kraft, H. (2012).
\newblock {\em Optimal portfolios with stochastic interest rates and
  defaultable assets}, volume 540.
\newblock Springer Science \& Business Media.

\bibitem[Li et~al., 2018]{2018RobustLi}
Li, D., Zeng, Y., and Yang, H. (2018).
\newblock Robust optimal excess-of-loss reinsurance and investment strategy for
  an insurer in a model with jumps.
\newblock {\em Scandinavian Actuarial Journal}, 2018(2):145--171.

\bibitem[Maenhout, 2004]{Maenhout2004Robust}
Maenhout, P.~J. (2004).
\newblock Robust portfolio rules and asset pricing.
\newblock {\em Review of Financial Studies}, 17(4):951--983.

\bibitem[Maenhout, 2006]{Maenhout2006Robust}
Maenhout, P.~J. (2006).
\newblock Robust portfolio rules and detection-error probabilities for a
  mean-reverting risk premium.
\newblock {\em Journal of Economic Theory}, 128(1):136--163.

\bibitem[Mataramvura and {\O}ksendal, 2008]{Mataramvura2008Risk}
Mataramvura, S. and {\O}ksendal, B. (2008).
\newblock Risk minimizing portfolios and {HJBI} equations for stochastic
  differential games.
\newblock {\em Stochastics}, 80(4):317--337.

\bibitem[Pun and Wong, 2016]{Pun2016Robust}
Pun, C.~S. and Wong, H.~Y. (2016).
\newblock Robust non-zero-sum stochastic differential reinsurance game.
\newblock {\em Insurance: Mathematics and Economics}, 68:169--177.

\bibitem[Severinson, 2008]{oecd}
Severinson, C. (2008).
\newblock Accounting for defined benefit plans: an international comparison of
  exchange-listed companies.
\newblock {\em https://doi.org/10.1787/238220377316}.

\bibitem[Sun et~al., 2018]{2017RobustSL}
Sun, J., Li, Y., and Zhang, L. (2018).
\newblock Robust portfolio choice for a defined contribution pension plan with
  stochastic income and interest rate.
\newblock {\em Communications in Statistics-Theory and Methods},
  47(17):4106--4130.

\bibitem[Wang et~al., 2019a]{2018RobustWZ}
Wang, N., Zhang, N., Jin, Z., and Qian, L. ({2019}a).
\newblock {Robust non-zero-sum investment and reinsurance game with default
  risk}.
\newblock {\em {Insurance Mathematics and Economics}}, {84}:{115--132}.

\bibitem[Wang et~al., 2021]{2020Reinsurance}
Wang, N., Zhang, N., Jin, Z., and Qian, L. (2021).
\newblock Reinsurance--investment game between two mean--variance insurers
  under model uncertainty.
\newblock {\em Journal of Computational and Applied Mathematics}, 382:113095.

\bibitem[Wang and Li, 2018]{2018RobustPL}
Wang, P. and Li, Z. (2018).
\newblock Robust optimal investment strategy for an {AAM} of {DC} pension plans
  with stochastic interest rate and stochastic volatility.
\newblock {\em Insurance: Mathematics and Economics}, 80:67--83.

\bibitem[Wang et~al., 2019b]{2019RobustPL}
Wang, P., Li, Z., and Sun, J. (2019b).
\newblock Robust portfolio choice for a {DC} pension plan with inflation risk
  and mean-reverting risk premium under ambiguity.
\newblock {\em Optimization}, pages 1--34.

\bibitem[Yi et~al., 2013]{2013Robust}
Yi, B., Li, Z., Viens, F.~G., and Zeng, Y. (2013).
\newblock Robust optimal control for an insurer with reinsurance and investment
  under {H}eston's stochastic volatility model.
\newblock {\em Insurance: Mathematics and Economics}, 53(3):601--614.

\end{thebibliography}

\end{document}